\documentclass[11pt]{scrartcl}
\usepackage[a4paper, total={16cm, 24cm}]{geometry}

\usepackage{multicol}
\usepackage{tocloft}

\usepackage[small]{caption}
\usepackage{subcaption}

\usepackage{algorithm}
\usepackage[table,dvipsnames]{xcolor}
\usepackage{todonotes}
\usepackage{pdflscape}
\usepackage{adjustbox}

\usepackage{natbib}
\usepackage[pagebackref]{hyperref}
\hypersetup{
	pdfencoding=auto,
	psdextra,
	colorlinks=true,
	citecolor=green!50!black,
	linkcolor=red!60!black,
	urlcolor=purple!70!black
}

\renewcommand{\epsilon}{\varepsilon}

\usepackage{authblk}

\author[1]{Benjamin Cookson}
\author[2]{Eva Deltl}
\author[3]{Yeeseok Oh}
\affil[1]{University of Toronto, \texttt{bcookson@cs.toronto.edu}}
\affil[2]{TU Clausthal, \texttt{eva.deltl@tu-clausthal.de}}
\affil[3]{The University of Tokyo, \texttt{oh@g.ecc.u-tokyo.ac.jp}}

\usepackage{url}
\usepackage{microtype}
\usepackage[utf8]{inputenc}
\usepackage{graphicx}
\usepackage{amsmath}
\usepackage{amsthm}
\usepackage{amssymb}
\usepackage{mathtools}
\usepackage{nicefrac}
\usepackage{booktabs}
\usepackage{enumitem}
\usepackage{tabulary}
\usepackage{tabularx}

\usepackage{wrapfig}

\usepackage[nameinlink]{cleveref}
\usepackage{stmaryrd}
\usepackage{thm-restate}
\usepackage{tikz}
\usetikzlibrary{arrows.meta}
\usepackage{tcolorbox}
\newcommand{\lpcmt}[1]{{\color{NavyBlue}\texttt{// #1}}}

\newtheoremstyle{sfthm}
	{\topsep}
	{\topsep}
	{\itshape}
	{}
	{\sffamily\bfseries}
	{}
	{.5em}
	{}
\theoremstyle{sfthm}
\newtheorem{theorem}{Theorem}[section]
\newtheorem{lemma}[theorem]{Lemma}

\newtheoremstyle{sfdef}
{\topsep}
{\topsep}
{}
{}
{\sffamily\bfseries}
{}
{.5em}
{}
\theoremstyle{sfdef}

\title{Improved Lower Bounds for Proportionally Fair Clustering}

\date{}

\begin{document}

\maketitle

\begin{abstract}
\begin{center}
	\textbf{\textsf{Abstract}} \smallskip
\end{center}
We study proportionally fair clustering, where a set of $k$ centers must be chosen from a metric space to represent $n$ agents, and no sufficiently large group of agents should be collectively underrepresented. One of the central notions of fairness in this setting is the \emph{$\alpha$-core}. The existence of clusterings in the $(1+\sqrt{2})$-core was established by \citet{ChenEtal2019}, who also showed instances where the $\alpha$-core is empty for every $\alpha < 2$. Closing this gap has remained an open problem for seven years.
We make progress from the lower-bound side by providing an instance whose $\alpha$-core is empty for every $\alpha < 2.1508$. Our techniques rely on establishing connections between variants of the core, namely the Hare core and the Droop core; reducing the search for optimal empty-core instances to a highly structured family of clustering instances; and using a Mixed Integer Linear Program (MILP) to search for optimal lower-bound instances within this reduced space.
Using this framework, we also determine tight bounds for Droop quota clustering instances with a small number of possible candidate centers and a single center to be selected. For each number of centers $m \in \{3,4,5,6\}$, we give the exact threshold $\alpha_m^*$ such that an $\alpha_m^*$-core clustering always exists, while for every $\alpha < \alpha_m^*$ there is an instance with $m$ centers whose $\alpha$-core is empty. Although these values were originally found through computer-aided search, we also provide direct proofs that do not rely on MILP certificates.
\end{abstract}

\section{Introduction}

Clustering is a fundamental algorithmic task in unsupervised machine learning. Formally, clustering tries to choose a set of $k$ ``centers'' in a metric space to represent a set of $n$ data points, assigning each point to a nearby center. Classical objectives such as $k$-means and $k$-medians attempt to solve this problem by optimizing a global objective function. While these methods are widely used and have been highly successful in many machine learning applications, there are settings where this approach is not ideal.

In many metric clustering problems, each data point corresponds to a person who wants to be \emph{represented} by a center as close to them as possible. One classic example is the facility location problem, where a city must decide where to place schools or hospitals in a way that best serves its constituents. In this problem, each data point corresponds to the location of one resident, and each resident wants to be as close as possible to one of the new facilities.

Another problem of this sort arises when attendees of a talk submit questions to be put to the speaker, and a moderator must select $k$ of them to ask. For each attendee, they would like for their submitted question to be well-represented by one of the selected questions. If their own question is not selected, they would at least like a question that is thematically or semantically similar to be chosen. While this may not immediately seem like a problem that can be represented in a metric space, recent research in the areas of natural language processing and large language models has shown that natural language phrases can be accurately embedded into a metric space so that distance in this space corresponds to semantic or preferential similarity (see \citep{blair2026embeddings} for example).

In problems of this kind, it is known that optimization-based algorithms such as $k$-means and $k$-medians may fail to choose a set of centers that \emph{fairly represents} the people involved.
To remedy this, \citet{ChenEtal2019} introduced a notion of fairness for clustering grounded in the notion of proportional representation from social choice theory. A clustering $X$ with $k$ centers meets this definition of fairness if no sufficiently large group of points is provably \emph{underrepresented} by the clustering. Formally, a group of $\lceil n/k \rceil$ data points is underrepresented by a factor of $\alpha$ if there exists an alternative center that is more than $\alpha$ times closer to each point in the group than any of the $k$ centers chosen by the clustering. If a clustering admits no such underrepresented group for a given value of $\alpha$, then it is said to be in the $\alpha$-core (note that \citet{ChenEtal2019} originally referred to this property as ``$\alpha$-proportional fairness''. The term ``core'' for this property was first used in \citet{ebadian2024boosting}, and is the terminology we use in this work). In their seminal work, \citet{ChenEtal2019} were able to establish that, when the data points lie in any arbitrary metric space, a clustering in the $(1+\sqrt{2})$-core always exists. To complement this, they also provided an instance in which no clustering in the $\alpha$-core exists for any $\alpha < 2$. Together, these results left a gap of $[2,\,1+\sqrt{2}] \approx [2,\,2.414]$.

In the subsequent $7$ years since the paper of \citet{ChenEtal2019}, there have been many papers~\citep{MichaShah2020, AzizEtal2024, KellerhalsPeters2024, cookson2025unifying} which built on the proportionally fair clustering model, iterating on the definitions, and proving better upper and lower proportional fairness bounds for special cases of metric spaces. However, for general metric spaces, there have been no improvements to the $[2,\, 1+\sqrt{2}]$ gap.

\subsection{Our Contributions}
We make progress on closing this gap from the lower-bound side. The main theorem of this paper provides a new lower bound for proportionally fair clustering in general metric spaces:

\begin{theorem}\label{thm:T=37}
	There exists a clustering instance where the $\alpha$-core is empty for every $\alpha<2.1508$.
\end{theorem}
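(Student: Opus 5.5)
The proof is by explicit construction plus verification. We reduce the general problem to a structured family of instances and search that family for the best lower bound; the final certificate is checked by hand.

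\textbf{Step 1: reduce to a single center (Droop to Hare).} A standard way to get lower bounds for the $k$-center core is to build a small gadget and replicate it. Following the approach sketched in the abstract, I would first argue that it suffices to consider a Droop-quota instance with one center to be selected. In such an instance:
\begin{itemize}
\item a coalition blocks if it contains strictly more than $n/2$ agents, all preferring some alternative center by a factor exceeding $\alpha$;
\item many disjoint, mutually far-apart copies of the gadget are combined with $k$ centers to open;
\item the agent counts are scaled so that some copy receives at most one center;
\item the quota $\lceil n/k\rceil$ then matches a strict-majority threshold inside that copy.
\end{itemize}
This transfers emptiness of the Droop core of the gadget to emptiness of the ($\alpha$-)core of the full instance. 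The delicate point is the rounding between $n/k$ and a strict majority, i.e., the Hare/Droop connection. I would handle it by taking enough copies, or padding with far-away dummy agents, so that the quota lands exactly at (just above) half of a gadget.

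\textbf{Step 2: the gadget.} I would take $m$ candidate centers $c_1,\dots,c_m$ and groups of agents, each group located at a chosen distance vector to the candidates. The metric is realized as a graph/shortest-path metric, so the triangle inequality must be enforced. The condition ``the core is empty at $\alpha$'' says: for every candidate $c_i$ chosen as the single center, there is an alternative $c_j$ and a set of agents of total weight exceeding half with $d(a,c_i) > \alpha\, d(a,c_j)$. I would restrict to structured agents: each agent sits at distance $1$ from a ``home'' candidate and at prescribed larger distances from the others. Emptiness then becomes a finite system of linear constraints in the distances and group weights once the blocking assignments are fixed. Fixing these combinatorial choices (binary variables) and maximizing $\alpha$, or rather a linearized proxy such as fixing $\alpha$ and checking feasibility with bisection, gives the MILP.

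\textbf{Step 3: certify.} I would take the MILP optimum, with something like $T=37$ agent types/weights suggested by the theorem label, and write it down explicitly as a distance table. I would then verify by direct case analysis:
\begin{itemize}
\item the triangle inequality holds for all triples;
\item for each candidate $c_i$, the exhibited blocking coalition has size above the quota;
\item every member of that coalition has distance ratio at least $2.1508$ to its preferred alternative, with strict inequality for any $\alpha<2.1508$.
\end{itemize}
Also, choosing any noncandidate point as the center is never better, since agents are only at candidate-related positions; I would make candidates the only points or argue domination. Combined with Step 1, this gives the theorem.

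\textbf{Main obstacle.} The hard part is finding the instance, i.e., making the MILP search tractable via the structural reduction, rather than verifying it. For the written proof, the obstacle is the reduction lemma in Step 1 showing the restricted family loses nothing relevant and that the Droop gadget lifts to the actual core. I would try the replication/padding argument first.
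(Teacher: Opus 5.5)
\textbf{The gap is in Steps 2--3.} The architecture matches the paper, but nothing in your proposal produces an instance that reaches $2.1508$, and the search you describe for finding one would not run.

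The entire content of the paper's proof is a specific certificate. It consists of a directed Hamiltonian deviation cycle $c_0\to c_1\to\cdots\to c_{36}\to c_0$, a hand-designed membership pattern (which of the $37$ agents lie in which $S_t$), and weights giving every $S_t$ total weight at least $0.500001$. With the memberships fixed, the metric comes from the pure LP of \Cref{fig:LP-distance-search-fast} plus bisection on $\alpha$, which returns $\alpha=2.15082963$.

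You note that fixing the blocking assignments makes everything linear, but then fall back on ``the MILP optimum'' at $37$ centers. That is not an option: the paper reports the MILP is already too slow beyond $m=6$. What is actually needed is a concrete, heuristically chosen $x_{i,t}$, accepting non-optimality, and that choice is exactly what the certificate supplies.

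Your normalization that every agent sits at distance $1$ from a home candidate is also not harmless. Agents cannot be rescaled independently without breaking the constraints $d(i,c)\le d(i,c')+d(i',c')+d(i',c)$. The paper's tight instances for $m=4,5,6$ already have agents with different nearest-center distances, and nothing guarantees that your restricted family still reaches $2.1508$. Separately, checking the final certificate, with roughly $37^4$ such triangle constraints, is a machine verification rather than a hand case analysis.

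\textbf{Step 1 is fine.} It is exactly the paper's lift (\Cref{thm:core-equivalence}) and needs no padding. For a uniform-weight gadget with $n$ agents, take $\ell=\lfloor n/2\rfloor+1$ far-apart copies and $k'=n$. Since $2\ell>n$, some copy receives at most one center. The Hare quota of the combined instance is exactly $\ell$ agents, i.e.\ a strict majority of one copy.

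\textbf{Your named obstacle is not on the logical path.} For an existence statement no reduction has to be lossless. The Hamiltonian-cycle and $\le m$-agent lemmas (\Cref{lem:cycles-are-sufficient,lem:m-agents-are-sufficient}) only shrink the search space and underpin the optimality results for $m\le 6$. What this theorem actually rests on is the lift plus the explicit certificate.
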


The primary mechanism used in the proof of \Cref{thm:T=37} is computer-aided search. Specifically, we construct a program that takes in a partial clustering instance and solves a series of linear programs to construct a full clustering instance where the $2.1508$-core is empty. The ``proof'' of \Cref{thm:T=37} takes the form of a certificate generated by a sequence of linear programs (see appendix for a link to a GitHub repository containing the code for this program). The majority of the paper is devoted to explaining the structure of this program, and proving why its output guarantees this lower bound.

More specifically, our main insight relies on a connection established recently by \citet{kellerhals2026proportional} between proportionally fair clustering under the Hare quota (the traditional version studied in previous clustering papers) and the Droop quota (a more restrictive definition of proportional fairness most prominently studied by \citet{aronov2020beta} under the name of the $\beta$-plurality problem). In \Cref{sec:droop-hare}, we build on the connections developed by \citet{kellerhals2026proportional} by remarking that a lower bound for the core under the Droop quota implies the same lower bound under the Hare quota. Next, in \Cref{sec:reduce}, we show that by focusing our search on a specific family of highly structured Droop-quota instances, we can efficiently search through all relevant instances using a computer-aided approach. This search identifies instances whose core is empty for $\alpha > 2$.

In addition to using these techniques to prove the lower bound in \Cref{thm:T=37}, we also leverage them to prove tight bounds on the core with the Droop quota and $k=1$ when the number of possible centers is small.
For instances with $m \le 6$ centers, we determine the exact threshold $\alpha_m^*$ such that an $\alpha_m^*$-core clustering always exists, while for every $\alpha<\alpha_m^*$ there is an instance with $m$ centers whose $\alpha$-core is empty.
\Cref{tab:best-alphas-contri} shows the exact optimal value of $\alpha$ attainable for instances with different numbers of centers. Additionally, for each of these tight bounds on small instances, we do not only provide a proof based on a computer-generated certificate. We also prove them directly, giving interpretable lower-bound constructions and proofs for these bounds. Most notably, for $m=6$, we are able to show a relatively simple instance whose $\alpha$-core is empty for any $\alpha < 2.1019$, improving on the lower bound established by \citet{ChenEtal2019}. This complements our stronger LP-based lower bound of $2.1508$, and may be more useful to readers than the black-box LP certificate.

\begin{table}[ht]
	\centering
	\begin{tabular}{cc}
		\toprule
		$m$ & $\alpha_m^*$ \\
		\midrule
		3 & 2 \\
		4 & 2 \\
		5 & 2.0775 \\
		6 & 2.1019 \\
		\bottomrule
	\end{tabular}
	\caption{Tight $\alpha$-values for instances with $m$ centers for $m \in \{3, 4, 5, 6\}$.}
	\label{tab:best-alphas-contri}
\end{table}

\subsection{Related Work}\label{sec:related-work}

\paragraph{Proportionally fair centroid clustering.} As stated previously, \citet{ChenEtal2019} introduced the notion of proportional fairness to the clustering setting, and established the upper and lower bounds of $1 + \sqrt{2}$ and $2$ for general metric spaces, with the upper bound being achieved using the \emph{Greedy Capture} algorithm. \citet{MichaShah2020} showed that in specific metric spaces, Greedy Capture achieved a better approximation ratio than $1 + \sqrt{2}$. Specifically, in Euclidean spaces with the $L^2$ norm, Greedy Capture achieves a $2$-approximation of the core. Additionally, they showed that in metric spaces induced by trees, a perfectly fair clustering in the $1$-core always exists.

Several papers have suggested notions other than the core that aim to capture proportional fairness in the clustering setting. \citet{AzizEtal2024} introduce an alternate fairness definition for clustering which they term ``Proportionally Representative Fairness'' (PRF), which is directly inspired by notions of Proportional Justified Representation (PJR) from multi-winner voting. They introduce an algorithm that achieves PRF while also achieving a constant approximation of the core. \citet{KellerhalsPeters2024} establish several further connections to multi-winner voting by adapting additional notions to the clustering setting.

Algorithms for proportionally fair clustering have been used and adapted for many interesting applications. \citet{ebadian2024boosting} show how clustering can be used to model the process of sortition, and introduce an adaptation of the greedy capture algorithm that can randomly select a panel of citizens in a way that is proportionally fair. \citet{aziz2026fair} showed how algorithms from proportionally fair clustering can be applied to find fair placements of transit stops.

\paragraph{$\beta$-plurality problem.} Independently of the development of proportionally fair clustering, several papers have also studied the $\beta$-plurality problem. We describe this problem formally in \Cref{sec:prelims}, but it is essentially the proportionally fair clustering problem, where only a single center is chosen, and there is a smaller threshold (the Droop quota) for when agents can deviate. \citet{aronov2020beta} introduced this model and showed bounds on the problem for special cases such as when the possible center can be selected from any point in $\mathbb{R}^d$. \citet{filtser2020plurality} showed an upper bound of $1 + \sqrt{2}$ and a lower bound of $2$ under general metric spaces. \citet{kellerhals2026proportional} formally established the connection between the $\beta$-plurality problem and centroid clustering. Most notably for our work, they observed that these two problems have the same bounds in general metric spaces, and suggested that a reasonable strategy for improving these bounds would be to focus on the ``simpler'' $\beta$-plurality problem. This is exactly the approach we take in this paper.

\paragraph{Clustering with other preferences.} The version of fair clustering discussed in this paper, where the agent's preferences are determined by their distances to the selected centers, is often referred to as \emph{centroid} clustering. Proportional fairness has also been applied to \emph{non-centroid} clustering, where rather than selecting a set of $k$ centers in the metric space to map the agents to, agents are directly partitioned into $k$ different groups, and have preferences dependent on the other agents who appear in their group. This setting was first introduced by \citet{caragiannis2024proportional} (though special cases of this problem were also studied previously in papers such as \citep{arkin2009geometric}, calling the problem the ``Geometric stable roommates problem''). Most notably, \citet{caragiannis2024proportional} showed that when each agent's preferences are derived from their maximum distance to any other agent in their assigned cluster, then a slight modification of the Greedy Capture algorithm achieves a $2$-approximation of the core. Later, \citet{bredereck2025core} complemented this with a lower bound of $2^{1/5}$ in this setting. These are the best upper and lower bounds known for non-centroid clustering under general metric spaces. \citet{cookson2025unifying} introduced a single model that unified centroid and non-centroid clustering. They highlighted the subtle differences between the Greedy Capture algorithms that achieved core approximations in each setting, and introduced a new algorithm that could find a clustering in the $3$-core for a new family of preferences where agents take both their assigned center and the other agents in their cluster into consideration.

\paragraph{Computer searches for improved bounds in social choice.} Finally, we highlight some other papers which use computer-aided search and (Mixed Integer) Linear Programs (MILP) to tackle various problems of similar flavor in social choice theory. \citet{peters2025core} recently showed that a committee in the core always exists for approval-based committee elections with small numbers of candidates. The key argument to do this was based on a reduction to a linear program that could be solved using a computer for small instances. \citet{berker2026edge} also study the core in approval-based committee elections, and search for core-violating instances using a MILP formulation, resolving several open problems regarding the compatibility of the core with other notions from this field.

\section{Preliminaries}\label{sec:prelims}
For a positive integer $t$, denote $[t] = \{1,2, \ldots, t\}$ and $[t]_0 = \{0,1, \ldots, t-1\}$.

\paragraph{Clustering instances.}
A \emph{clustering instance} $I=(N,C,d,k)$ consists of a set of $n$ \emph{agents} $N=[n]$, a set of $m$ \emph{centers} $C=\{c_0,c_1,\ldots,c_{m-1}\}$, a \emph{(pseudo-)metric} $d\colon(N\cup C)\times(N\cup C)\to\mathbb{R}_{\ge 0}$ satisfying the triangle inequality, and a positive integer $k\le m$ specifying how many centers are to be \emph{chosen}.  A \emph{clustering} is a subset $X\subseteq C$ with $|X|=k$.  For agent $i\in N$, the \emph{loss} under $X$ is $\ell_i(X)=\min_{c\in X}d(i,c)$.

When several agents occupy the same point in the metric space, it is natural to represent them compactly as a single agent with fractional weight. A \emph{weighted instance} augments a clustering instance with a weight function $w\colon N\to\mathbb{Q}_{\ge 0}$ normalized so that $\sum_{i\in N}w_i=1$, where $w_i$ captures the fraction of the population located at point $i$. Any weighted instance can be thought of as representing a standard unweighted one by scaling to a common denominator; the two formulations are equivalent. Weighted instances are useful not only because they provide a more compact representation of instances, but also as the continuous nature of the weight function allows us to search over instances via computer-aided search more easily. We adopt weighted instances as the default throughout this paper, writing $I=(N,C,d,k,w)$ in full and $I=(N,C,d,w)$ when $k=1$. If an instance is specified without explicit weights, we assume $w_i=1/n$ for all $i\in N$.

\paragraph{The core and proportional fairness.}
Under the \emph{Hare quota}, any group of agents with total weight $\ge 1/k$ is collectively entitled to one center.  A clustering $X$ is in the \emph{$\alpha$-core} (for $\alpha\ge 1$) if no entitled group can jointly $\alpha$-improve: there is no center $y \in C$ and set $S\subseteq N$ with $\sum_{i\in S}w_i\ge 1/k$ such that
\[
  \alpha\cdot d(i,y) < \ell_i(X)\quad\text{for all }i\in S.
\]
We say such a pair $(S,y)$ is a \emph{blocking coalition} and that the agents in $S$ \emph{$\alpha$-deviate} (from $X$) \emph{to} $y$.

\paragraph{The Droop quota and Droop core.}
The \emph{Droop quota} entitles any group of agents with total weight $>1/(k+1)$ to one center.  The \emph{$\alpha$-Droop core} is defined identically to the $\alpha$-core but with the Droop quota replacing the Hare quota. Thus, $X$ is in the $\alpha$-Droop core if there is no center $y\in C$ and no set $S\subseteq N$ with $\sum_{i\in S}w_i>1/(k+1)$ such that $(S,y)$ is a blocking coalition.  For $k=1$ the Droop quota reduces to a \emph{strict majority} of weight ($\sum_{i\in S}w_i>1/2$), so a clustering is in the $\alpha$-Droop core exactly when no strict-majority group can all $\alpha$-deviate to a single alternative center.

As originally pointed out by \citet{kellerhals2026proportional}, the $\alpha$-Droop core with $k=1$ is closely related to the $\beta$-plurality problem \citep{aronov2020beta}. The key distinction is that the $\beta$-plurality problem typically allows \emph{any} point in the underlying metric space to be chosen as a center, in particular this implies that $N \subseteq C$. In our work, we will not assume this stronger restriction on instances, and allow $C$ to be an arbitrary set of points.

\section{A formal connection between Droop quota and Hare quota}\label{sec:droop-hare}
To establish a new lower bound for proportionally fair clustering, it suffices to find an instance whose $\alpha$-core is empty for some $\alpha > 2$ and some value of $k$. Thus, one would assume the easiest place to start to be the simplest possible setting, namely $k=1$. When $k=1$, the problem of clustering reduces to selecting a single center to represent all agents. However, since we are primarily interested in lower bounds for Hare quota clustering, the most common formulation of the problem, this setting is trivial: when $k=1$, any deviating coalition must have total weight at least $1/k = 1$. Since all weights sum to $1$, the coalition must include all agents. Hence, every center that is (weakly) Pareto optimal will be in the core, and a center in the $1$-core must always exist.

However, this trivial existence no longer holds when we instead use the Droop quota. As first observed by \citet{kellerhals2026proportional}, proportionally fair clustering using the Droop quota has the same upper and lower bounds of $[2, 1+\sqrt{2}]$ as Hare quota clustering. In the Droop quota case, when $k=1$, any group of agents with total weight strictly greater than $1/(1+1) = 1/2$ can deviate.

Here, we formally prove a strong connection between these two quotas. We show that any lower bound for Droop quota instances directly implies the same lower bound for Hare quota instances.

\begin{theorem}\label{thm:core-equivalence}
	Let $\alpha$ be a constant. Then, $\alpha$ is a lower bound for the Droop core if and only if $\alpha$ is a lower bound for the (Hare) core.
\end{theorem}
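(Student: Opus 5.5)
The plan is to prove both directions by the same device: take many far-apart disjoint copies of a given empty-core instance, and use pigeonhole to force some copy to receive at most $k$ centers. Throughout, ``$\alpha$ is a lower bound for the (Droop) core'' means that there is some weighted instance, with some $k$, whose $\alpha$-(Droop) core is empty. Given an instance $I=(N,C,d,k,w)$ and an integer $M$, let $I^M$ consist of $M$ copies $I_1,\dots,I_M$ of $I$. Within each copy, distances are those of $I$ and each weight is $w_i/M$. Copies are placed at a huge mutual distance $D$, which is valid since distances between copies can be set to $D$ and the triangle inequality is preserved for $D$ at least the diameter of $I$.

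Two facts are needed. First, if a clustering gives copy $I_j$ at most $k$ centers inside it, then every agent of $I_j$ has loss equal to its loss under some clustering $X$ of $I$ with $|X|\le k$. Centers placed in other copies are at distance $D$, which is larger than any in-copy loss, so they never help. Second, padding $X$ to exactly $k$ centers only weakly decreases losses, so a blocking coalition for the padded clustering in $I$ is also blocking for the original one. Hence if $I$ has empty core, some coalition $(S,y)$ blocks inside $I_j$, and its weight in $I^M$ is $\frac1M\sum_{i\in S}w_i$.

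\emph{Droop $\Rightarrow$ Hare.} Suppose the $\alpha$-Droop core of $I$ (with $k$) is empty. Since $I$ is finite, there is $\eta>0$ such that every Droop-blocking coalition witnessing emptiness has weight $W\ge \frac1{k+1}+\eta$. Consider $I^M$ under the Hare quota with $k'=M(k+1)-1$. By pigeonhole, any clustering of $k'$ centers leaves some copy with at most $k$ centers. That copy therefore contains a blocking coalition of weight at least $\frac1M(\frac1{k+1}+\eta)$. We need this to be at least $\frac1{k'}$, which is equivalent to
\[
(1+(k+1)\eta)\bigl(M(k+1)-1\bigr)\ \ge\ M(k+1).
\]
This holds once $M(k+1)^2\eta\ge 1+(k+1)\eta$, i.e. for $M$ large enough. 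So the $\alpha$-Hare core of $I^M$ is empty.

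\emph{Hare $\Rightarrow$ Droop.} Suppose the $\alpha$-Hare core of $I$ (with $k$) is empty. Take $I^M$ with any $M\ge1$ under the Droop quota with $k'=Mk$. By pigeonhole, some copy receives at most $k$ of the $Mk$ centers. That copy has a blocking coalition of Hare weight at least $\frac1k$ in $I$, hence weight at least $\frac1{Mk}>\frac1{Mk+1}$ in $I^M$. This exceeds the Droop quota $\frac1{k'+1}$, so the $\alpha$-Droop core of $I^M$ is empty.

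The main obstacle is the mismatch between the strict Droop threshold and the non-strict Hare threshold in the first direction. Simply adding dummy agents to absorb a center fails, because a dummy group can only force a center if it is itself Hare-entitled, and then it eats too much weight. The replication trick with the slack $\eta$ from finiteness is what I expect to resolve this. The one point to double-check is that $\eta$ is uniform over all clusterings of $I$, which holds since there are finitely many clusterings and coalitions.
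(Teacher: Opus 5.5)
Your proof is correct and is essentially the paper's argument: far-apart copies, pigeonhole to find a copy that receives at most $k$ centers, and the discreteness of Droop-size coalition weights to bridge the strict Droop threshold and the non-strict Hare threshold. The paper fixes the parameters explicitly (uniform weights $1/n$, $\ell=\lfloor n/(k+1)\rfloor+1$ copies, $k'=n$, so a Droop coalition of at least $\ell$ agents of weight $1/(n\ell)$ meets the Hare quota $1/n$) instead of your slack $\eta$ and large $M$, and it proves the Hare-to-Droop direction with the original instance itself, which is your case $M=1$.
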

\begin{proof}
	If $\alpha$ is a lower bound for the Hare core, then it trivially is a lower bound for the Droop core, since the Droop quota is at most the Hare quota. Thus, any blocking coalition of Hare size is also a blocking coalition of Droop size.

	Conversely, let $I=(N,C,d,k,w)$ be a weighted instance for an $\alpha$-lower bound for the Droop core.
	Since any weighted instance is equivalent to a uniform-weight one (see \Cref{sec:prelims}), assume without loss of generality that $w_i = 1/n$ for all $i \in N$.
	The Droop quota is then $\ell = \lfloor \frac{n}{k+1} \rfloor + 1 > \frac{n}{k+1}$ agents, equivalently total weight $\ge \ell/n > 1/(k+1)$.
	Construct a weighted instance $I' = (N',C',d',k',w')$ by taking $\ell$ copies of $(N,C)$, each infinitely far away from each other, giving each agent weight $\frac{1}{n\ell}$. Let $k'=n$. The Hare quota of $I'$ is $1/k' = 1/n$.
	Consider any choice of $k'=n$ centers in $I'$. Since there are $\ell > \frac{n}{k+1}$ copies of $I$ and $n$ selected centers, at least one copy has at most $k$ centers being selected from it. Otherwise, each copy would receive at least $k+1$ centers, requiring more than $\ell \cdot (k+1) > n$ centers.
	In that copy, the original Droop lower bound instance applies, as each agent in the copy attains her loss by a center in the copy. Since the $\alpha$-Droop core is empty in the original instance, there is a set of agents with total weight at least $\frac{\ell}{n\ell} = \frac{1}{n} = \frac{1}{k'}$ who $\alpha$-deviate. This meets the Hare quota of $I'$, so this set forms a Hare blocking coalition.
\end{proof}

Most notably, \Cref{thm:core-equivalence} tells us that if we can find an instance with $k=1$ where the $\alpha$-Droop core is empty for some $\alpha > 2$, then there exists an instance (with $k > 1$) where the $\alpha$-(Hare) core is also empty. Thus, for the remainder of the paper, we will exclusively focus on Droop quota instances with $k=1$.

For the interested reader, \Cref{thm:core-equivalence} has other implications that are less relevant to the main results of this paper. One may notice that the proof of the theorem does not use the fact that agents' preferences are derived from a metric space. In fact, this theorem holds generally for a much larger class of voting instances. In the appendix~(\Cref{sec:ABC}), we provide a detailed discussion of the implications of this theorem (and some extensions of it) to Approval-Based Committee voting.

This theorem also allows us to easily derive new results for well-studied special cases of clustering, such as the case where $N = C$ (the set of agents and the set of centers are exactly the same). In the appendix~(\Cref{sec:N=C}), we leverage an instance of \citet{filtser2020plurality} along with \Cref{thm:core-equivalence} to show a lower bound of $2$ for clustering in this setting in general metric spaces. A similar result has also been proven by \citet{Trinh2024}, but we believe our proof is quite simple in comparison.

\section{Deviation Graphs and Deviation Cycles}\label{sec:deviation-graphs}

We will begin by discussing the main technical tool we use to analyze clustering instances and derive our lower bounds. The \emph{deviation graph} $G_\alpha(I)$ of an instance $I$ provides the central tool for reasoning about the $\alpha$-Droop core with $k=1$.

\paragraph{The deviation graph.}
Fix $\alpha\ge 1$, $k=1$ and a weighted instance $I=(N,C,d,w)$. The \emph{deviation graph} $G_\alpha(I)$ is the directed graph with vertex set $C$ and edge set $E$ defined by
\[
  (x,y)\in E\;\iff\;\sum_{i\in N:\,\alpha\cdot d(i,y)<d(i,x)} w_i\;>\;\frac{1}{2}.
\]
An edge $(x,y)$ encodes that a strict-majority weight of agents would $\alpha$-deviate to $y$ from $x$. When the instance $I$ is clear from context, we write $G_\alpha$ for $G_\alpha(I)$.

Rather than reasoning directly about which subsets of agents can form blocking coalitions, we encode all deviation information into a directed graph on the set of centers, and reduce the question of core existence to a simple structural property of that graph: the presence or absence of a vertex with no outgoing edges.

\paragraph{Cycles and deviation sets.}
A key observation that will drive most of our results is regarding the existence of \emph{cycles} in these deviation graphs. Particularly, we say that a \emph{sink} of $G_\alpha(I)$ is a center $c^*$ with no outgoing edges; by definition, any center that is a sink is in the $\alpha$-Droop core, as no majority-weighted coalition of agents would want to $\alpha$-deviate from that center to some other center in the instance. Therefore, if some instance $I$ has an empty $\alpha$-Droop core for some value of $\alpha$, then its deviation graph $G_\alpha(I)$ will contain no sinks. Any directed graph with no sinks must contain at least one directed cycle.

We write a directed cycle of length $T$ as $c_0c_1\cdots c_{T-1}c_0$, using the index set $[T]_0=\{0,1,\ldots,T-1\}$ with all arithmetic on indices taken modulo $T$.  For each $t\in[T]_0$, define the \emph{($\alpha$-)deviation set}
\[
  S_t=\{i\in N:\alpha\cdot d(i,c_t)<d(i,c_{t-1})\},
\]
that is, the agents who would $\alpha$-deviate from $c_{t-1}$ to $c_t$.

If some deviation graph contains a cycle $c_0c_1\cdots c_{T-1}c_0$, then since $(c_{t-1},c_t)\in E$ for every $t \in [T]_0$, we also have that $\sum_{i\in S_t}w_i>1/2$. This implies that consecutive deviation sets overlap: $\sum_{i\in S_t}w_i+\sum_{i\in S_{t+1}}w_i>1$, so $S_t\cap S_{t+1}\neq\emptyset$. Throughout the paper, when reasoning about cycles that appear in deviation graphs, we will use $i_t$ to denote the agent that is guaranteed to exist in $S_t\cap S_{t+1}$ for each $t$.

Figure~\ref{fig:deviation-example} illustrates these definitions with a concrete example. Let $I = (\{i_0,i_1,i_2\}, \{c_0,c_1,c_2\}, d, w)$ be the instance~\textbf{(a)}, where $w_{i_0}=w_{i_1}=w_{i_2}=1/3$ and each agent is at distance~$1$ from one center, distance~$2$ from a second, and distance~$4$ from a third. For any $\alpha < 2$, each agent $\alpha$-deviates from their distance-$2$ center to their distance-$1$ center, and from their distance-$4$ center to their distance-$2$ center. Thus, for every center, a strict majority of the agents, of total weight $2/3$, would deviate away from it. The deviation graph $G_\alpha(I)$~\textbf{(b)} is a directed 3-cycle $c_0c_1c_2c_0$ for every $\alpha < 2$. This graph contains no sink, so the $\alpha$-Droop core is empty for every $\alpha < 2$. The deviation sets are $S_1=\{i_0,i_1\}$, $S_2=\{i_1,i_2\}$, $S_0=\{i_0,i_2\}$, shown as edge labels, and the pairwise overlaps are $i_0\in S_0\cap S_1$, $i_1\in S_1\cap S_2$, and $i_2\in S_2\cap S_0$.

\begin{figure}[ht]
\centering
\begin{tikzpicture}[
  cnode/.style={draw, circle, thick, minimum size=7mm, inner sep=1pt, fill=white},
  anode/.style={draw, circle, thick, minimum size=7mm, inner sep=1pt, fill=gray!20},
  >=Stealth,
  font=\small,
]

\node[cnode] (c0) at (0,    0)    {$c_0$};
\node[cnode] (c1) at (3.5,  0)    {$c_1$};
\node[cnode] (c2) at (1.75, 3.03) {$c_2$};

\node[anode] (i0) at (2.33, -0.5)  {$i_0$};
\node[anode] (i1) at (2.77,  2.27) {$i_1$};
\node[anode] (i2) at (0.15,  1.26) {$i_2$};

\draw[thick]         (i0) -- node[below]          {$1$} (c1);
\draw[thick]         (i1) -- node[right=2pt, pos=1] {$1$} (c2);
\draw[thick]         (i2) -- node[left]           {$1$} (c0);

\draw[dashed]        (i0) -- node[below]          {$2$} (c0);
\draw[dashed]        (i1) -- node[right]          {$2$} (c1);
\draw[dashed]        (i2) -- node[left]           {$2$} (c2);

\draw[dotted, thick] (i0) -- node[right, pos=0.3]  {$4$} (c2);
\draw[dotted, thick] (i1) -- node[right, pos=0.25] {$4$} (c0);
\draw[dotted, thick] (i2) -- node[above, pos=0.25] {$4$} (c1);

\node at (1.75, -1.2) {\textbf{(a)} Instance};

\draw[thin, gray!60] (5.2, -1.3) -- (5.2, 3.8);

\begin{scope}[xshift=7.5cm]

\node[cnode] (c0) at (0,   0)   {$c_0$};
\node[cnode] (c1) at (3,   0)   {$c_1$};
\node[cnode] (c2) at (1.5, 2.6) {$c_2$};

\draw[->, thick] (c0) -- node[below] {$\{i_0,i_1\}$} (c1);
\draw[->, thick] (c1) -- node[right] {$\{i_1,i_2\}$} (c2);
\draw[->, thick] (c2) -- node[left]  {$\{i_0,i_2\}$} (c0);

\node at (1.5, -1.2) {\textbf{(b)} Deviation graph $G_\alpha(I)$, $\alpha < 2$};

\end{scope}

\end{tikzpicture}
\caption{An example instance \textbf{(a)} and its deviation graph \textbf{(b)}.
  Solid, dashed, and dotted edges in \textbf{(a)} indicate distances~$1$, $2$, and~$4$, respectively.}
\label{fig:deviation-example}
\end{figure}

To ease into the analysis of deviation graphs and their cycles, we will show a simple alternative proof of the fact that a center in the $(1 + \sqrt{2})$-Droop core always exists when $k=1$ \citep{kellerhals2026proportional}. The following lemma shows a key relationship between the distances of the overlapping agents in a cycle to their associated centers.

\begin{lemma}\label{lem:deviation-step}
	Let $I=(N,C,d,w)$ be a weighted instance and let $c_0c_1\cdots c_{T-1}c_0$ be a directed cycle in $G_\alpha(I)$ for some $\alpha > 1$, with deviation sets $S_t$ and overlap agents $i_t \in S_t \cap S_{t+1}$ for each $t \in [T]_0$. Then
	\[ d(i_{t+1},c_{t+1}) < \frac{\alpha+1}{\alpha(\alpha-1)}\,d(i_t, c_t). \]
\end{lemma}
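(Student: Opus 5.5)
We chain the triangle inequality through the center $c_t$, which both overlap agents $i_t$ and $i_{t+1}$ are related to via the deviation set $S_{t+1}$. The idea is to go from $i_{t+1}$ back to $i_t$: bound $d(i_{t+1},c_{t+1})$ by $d(i_{t+1},c_t)$ using the deviation condition, then bound $d(i_{t+1},c_t)$ by passing through $c_{t+1}$ and $i_t$.

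The two deviation facts we use come from the membership of both agents in $S_{t+1}$. Since $i_t \in S_t\cap S_{t+1}$ and $i_{t+1}\in S_{t+1}\cap S_{t+2}$, both $i_t$ and $i_{t+1}$ lie in $S_{t+1}$, i.e.
\[
\alpha\, d(i_t,c_{t+1}) < d(i_t,c_t), \qquad \alpha\, d(i_{t+1},c_{t+1}) < d(i_{t+1},c_t).
\]
The first inequality says $i_t$ is very close to $c_{t+1}$ relative to its distance to $c_t$. In particular, $d(i_t,c_{t+1}) < d(i_t,c_t)/\alpha$.

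Next we apply the triangle inequality along the path $c_t \to i_t \to c_{t+1} \to i_{t+1}$:
\[
d(i_{t+1},c_t) \le d(i_{t+1},c_{t+1}) + d(c_{t+1},i_t) + d(i_t,c_t).
\]
Combining this with $d(i_{t+1},c_t) > \alpha\, d(i_{t+1},c_{t+1})$ gives
\[
(\alpha-1)\, d(i_{t+1},c_{t+1}) < d(i_t,c_{t+1}) + d(i_t,c_t) < \Bigl(\tfrac{1}{\alpha}+1\Bigr) d(i_t,c_t) = \tfrac{\alpha+1}{\alpha}\, d(i_t,c_t).
\]
Since $\alpha>1$, dividing by $\alpha-1$ yields exactly the claimed bound.

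There is no real obstacle here; the only care needed is choosing the right intermediate center. We use $c_t$, which is the common "from" center of $S_{t+1}$, rather than trying to relate $i_{t+1}$ to $c_{t+2}$. We must also check that strict inequalities propagate correctly. This holds because each step either uses a strict deviation inequality or adds a non-strict triangle bound to a strict one, and $d(i_t,c_t)>0$ is forced by $\alpha\, d(i_t,c_{t+1}) < d(i_t,c_t)$, so the final inequality is strict.
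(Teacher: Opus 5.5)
Your proposal is correct and follows the paper's own argument essentially step for step. Both use $i_t, i_{t+1} \in S_{t+1}$ together with the triangle inequality $d(i_{t+1},c_t) \le d(i_{t+1},c_{t+1}) + d(i_t,c_{t+1}) + d(i_t,c_t)$, then bound $d(i_t,c_{t+1}) < d(i_t,c_t)/\alpha$ and rearrange using $\alpha > 1$.
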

\begin{proof}
	This follows from the inequalities
	\begin{align*}
		\alpha \cdot d(i_{t+1},c_{t+1}) &<  d(i_{t+1}, c_t) \\
		&\le  d(i_{t+1}, c_{t+1}) + d(i_{t}, c_{t+1}) + d(i_{t}, c_t) \\
		&<  d(i_{t+1}, c_{t+1})  + \left(\frac{1}{\alpha} + 1\right) d(i_t, c_t),
	\end{align*}
	where the first inequality uses $i_{t+1} \in S_{t+1}$, the second uses the triangle inequality, and the third uses $i_t \in S_{t+1}$. Rearranging yields
	$$ d(i_{t+1},c_{t+1}) < \frac{\alpha+1}{\alpha ( \alpha - 1)} d(i_t, c_t).$$
\end{proof}

\Cref{lem:deviation-step} is useful as we can apply it repeatedly around a cycle, starting by bounding $d(i_0, c_0)$ in terms of $d(i_{T-1}, c_{T-1})$, then bounding $d(i_{T-1}, c_{T-1})$ (and thus $d(i_0, c_0)$) in terms of $d(i_{T-2}, c_{T-2})$. Eventually, we can loop back around to bounding all these distances in terms of $d(i_0, c_0)$, which allows us to cancel out the term $d(i_0, c_0)$ and solve for bounds on $\alpha$. This technique is shown formally in the theorem below.

\begin{theorem}\label{thm:general-case}
	There always exists a $(1+\sqrt{2})$-Droop core clustering for $k=1$.
\end{theorem}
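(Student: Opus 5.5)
We argue by contradiction, using the deviation graph. Fix $\alpha = 1+\sqrt{2}$ and suppose the $\alpha$-Droop core with $k=1$ is empty for some weighted instance $I=(N,C,d,w)$. Then no center is a sink of $G_\alpha(I)$. Since $C$ is finite, $G_\alpha(I)$ contains a directed cycle $c_0c_1\cdots c_{T-1}c_0$. As discussed in \Cref{sec:deviation-graphs}, each edge has deviation-set weight $>1/2$, so consecutive deviation sets intersect, and we may pick overlap agents $i_t \in S_t\cap S_{t+1}$ for every $t\in[T]_0$.

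The key computation is to apply \Cref{lem:deviation-step} around the cycle. Write $\rho(\alpha)=\frac{\alpha+1}{\alpha(\alpha-1)}$. At $\alpha=1+\sqrt{2}$ we have $\alpha(\alpha-1)=(1+\sqrt2)\sqrt2=\sqrt2+2=\alpha+1$, so $\rho(\alpha)=1$. Chaining the strict inequalities of \Cref{lem:deviation-step} for $t=0,1,\ldots,T-1$ gives
\[ d(i_0,c_0) < \rho(\alpha)^T\, d(i_0,c_0) = d(i_0,c_0). \]
This is a contradiction, provided $d(i_0,c_0)$ is finite. Strictness is essential here: it excludes the case where every inequality holds with equality. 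It also handles $d(i_0,c_0)=0$, because the chain of strict inequalities still yields $0<0$.

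For completeness I would also treat $\alpha>1+\sqrt2$, although core membership is monotone in $\alpha$, so the case $\alpha=1+\sqrt{2}$ already suffices. If the $\alpha$-core for larger $\alpha$ is needed, note that $\rho$ is decreasing for $\alpha>1$, giving $\rho(\alpha)\le 1$ and the same contradiction.

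The main obstacle is degenerate cases in which the lemma's chain might not close cleanly. Since $d$ is a pseudo-metric, distances can be zero. The strict inequalities cover this: $\alpha\, d(i_{t+1},c_{t+1}) < d(i_{t+1},c_t)$ forces $d(i_{t+1},c_t)>0$, and the argument of the lemma never divides by a distance. I would also check that the lemma's requirement $\alpha>1$ holds, which it does. The cycle length $T$ needs no special treatment; $T\ge 2$ because the deviation-set condition implies $d(i,c)<d(i,c)$ for a self-loop, so self-loops are impossible. With these checks, the absence of a sink is impossible, so some center is a sink and therefore lies in the $(1+\sqrt2)$-Droop core.
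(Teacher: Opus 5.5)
Your proof is correct and follows the paper's argument: the absence of a sink yields a directed cycle, and chaining \Cref{lem:deviation-step} around it gives $d(i_0,c_0) < \bigl(\frac{\alpha+1}{\alpha(\alpha-1)}\bigr)^T d(i_0,c_0)$, where the factor equals $1$ at $\alpha = 1+\sqrt{2}$. The only cosmetic difference is that you work at exactly $\alpha=1+\sqrt{2}$ and obtain $d(i_0,c_0)<d(i_0,c_0)$ directly, whereas the paper treats all $\alpha\ge 1+\sqrt{2}$ at once and divides by $d(i_0,c_0)>0$ to force the factor above $1$.
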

\begin{proof}
	Assume for contradiction that the $\alpha$-Droop core is empty for some instance $I=(N,C,d,w)$ and some $\alpha \ge 1+\sqrt{2}$. Then $G_\alpha(I)$ does not contain a sink node, and hence $G_\alpha(I)$ contains a directed cycle. Let $c_0c_1 \ldots c_{T-1}c_0$ be one such cycle in $G_\alpha(I)$. For each $t \in [T]_0$, let $i_t \in S_{t} \cap S_{t+1}$.
	Then applying \Cref{lem:deviation-step} repeatedly, we have
	\begin{align*}
			d(i_{0},c_{0}) &< \left(\frac{\alpha+1}{\alpha ( \alpha - 1)} \right)d(i_{T-1}, c_{T-1}) < \left(\frac{\alpha+1}{\alpha ( \alpha - 1)} \right)^2 d(i_{T-2}, c_{T-2}) \\
			&< \dots < \left(\frac{\alpha+1}{\alpha ( \alpha - 1)} \right)^{T-1} d(i_1, c_1) < \left(\frac{\alpha+1}{\alpha ( \alpha - 1)} \right)^T d(i_0, c_0).
		\end{align*}
	Since $d(i_0,c_0)>0$, this requires $\frac{\alpha+1}{\alpha(\alpha-1)}>1$, which is equivalent to $\alpha < 1+\sqrt{2}$, contradicting $\alpha \ge 1+\sqrt{2}$.
\end{proof}

In the subsequent sections, we will refine our technique for analyzing deviation cycles in order to derive better bounds.

\section{Searching for New Lower Bounds}

\subsection{Reducing the Search Space}\label{sec:reduce}

By  \Cref{thm:core-equivalence}, we can search for lower bounds on Droop quota instances with $k=1$ while still drawing conclusions about the more well-studied Hare quota. This already gives us significantly more simplicity and structure than instances with $k>1$. In this section, we show how this search space can be restricted even further.

These restrictions will ultimately allow us to encode the search for better lower bounds as a Mixed Integer Linear Program (MILP), which found all the improved lower bounds discussed previously.

\paragraph{Reduction 1: Focusing on Hamiltonian Deviation Cycles.} As we highlighted previously, the key to our analysis is looking for cycles in the deviation graphs of instances. Here, we can in fact show that for a given $\alpha$ we can restrict our attention to instances $I$ where $G_\alpha(I)$ contains a Hamiltonian cycle, meaning that some subset of the edges $G_\alpha(I)$ forms a perfect cycle graph through all the centers in $I$.

\begin{lemma}\label{lem:cycles-are-sufficient}
	Suppose there is an instance with $m$ centers such that the $\alpha$-Droop core is empty for $k=1$. Then there is an instance $I'$ with $T \le m$ centers such that $G_{\alpha}(I')$ is a directed cycle of length $T$.
\end{lemma}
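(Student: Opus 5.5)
The plan is to restrict the original instance to the centers of a \emph{shortest} directed cycle in its deviation graph, and to show that the deviation graph of the restricted instance is exactly that cycle.

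\textbf{Step 1: a cycle exists.} Let $I=(N,C,d,w)$ be the given instance with $|C|=m$ and empty $\alpha$-Droop core for $k=1$. As observed in \Cref{sec:deviation-graphs}, $G_\alpha(I)$ then has no sink, so it contains a directed cycle. There are no self-loops, since for $\alpha\ge 1$ no agent satisfies $\alpha\, d(i,x)<d(i,x)$. Hence every directed cycle has length at least $2$. Choose a directed cycle $c_0c_1\cdots c_{T-1}c_0$ of \emph{minimum} length $T$ among all directed cycles of $G_\alpha(I)$. Clearly $2\le T\le m$.

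\textbf{Step 2: the restricted instance.} Define $I'=(N,C',d',w)$ with $C'=\{c_0,\ldots,c_{T-1}\}$, where $d'$ is the restriction of $d$ to $N\cup C'$. The restriction is still a pseudo-metric, and the weights are unchanged. The edge condition $(x,y)\in E$ depends only on the agents, their weights, and the distances to $x$ and $y$. It does not involve any other center. Therefore $G_\alpha(I')$ is exactly the subgraph of $G_\alpha(I)$ induced on $C'$. In particular, it contains all $T$ cycle edges $(c_{t-1},c_t)$.

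\textbf{Step 3: the induced subgraph has no chords.} This is the only real step. Suppose $(c_a,c_b)\in E$ with $a\neq b$ and $b\not\equiv a+1 \pmod T$. Then
\[ c_a\, c_b\, c_{b+1}\cdots c_{a-1}\, c_a \]
is a directed cycle of length $(a-b \bmod T)+1$. Since $b\not\equiv a+1$, we have $a-b\not\equiv T-1$, so this length is at most $T-1$. That contradicts the minimality of $T$. The case $b=a-1$ is included: it gives a $2$-cycle, which is shorter whenever $T>2$, and when $T=2$ that edge is already a cycle edge. So the edge set of $G_\alpha(I')$ is exactly $\{(c_{t-1},c_t):t\in[T]_0\}$, and $G_\alpha(I')$ is a directed cycle of length $T\le m$.

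Finally, every vertex of $G_\alpha(I')$ has an outgoing edge, so $I'$ has no sink and its $\alpha$-Droop core is also empty. The main point to handle carefully is the chord argument in Step 3, specifically the index arithmetic modulo $T$ showing that every chord yields a strictly shorter cycle. The rest is bookkeeping about restricting an instance to a subset of centers.
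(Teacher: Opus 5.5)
Your proof is correct and follows the paper's argument: pass to a shortest directed cycle, delete all other centers (edges depend only on agents' distances to the two endpoints, so the induced subgraph is preserved), and rule out chords by minimality. The paper also first separates agent-centers so that $N \cap C = \emptyset$ before deleting centers; restricting $d$ to $N \cup C'$ while keeping every agent, as you do, makes that step unnecessary.
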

\begin{proof}
	Let $I=(N,C,d,w)$ be an instance where the $\alpha$-Droop core is empty. This means that $G_\alpha(I)$ has no sink node, and thus contains at least one directed cycle. Take the shortest cycle $c_0c_1\dots c_{T-1}c_0$ in $G_\alpha(I)$ and let $C' \subseteq C$ be the centers in this cycle.

	W.l.o.g. assume that $N \cap C = \emptyset$, by possibly separating an agent-center $a \in N \cap C$ to $i_a \in N$ and $c_a \in C$ located in the same place.
	Let $I' = (N,C',d,w)$ be a new instance constructed by taking $I$ and removing all centers that were not in the shortest cycle. Since each edge $(x,y)$ of $G_\alpha(I)$ depends only on $d(i,x)$ and $d(i,y)$ for each $i \in N$, removing centers does not change the edges between remaining centers, so $G_\alpha(I')$ contains exactly the edges of $G_\alpha(I)$ induced on $C'$. Furthermore, the induced subgraph on $C'$ has no edges beyond the cycle itself: any additional edge $(c_i, c_j)$ with $j \neq i{+}1 \pmod{|C'|}$ would, combined with the path from $c_j$ back to $c_i$ along the original cycle, yield a cycle of length $1 + (i - j \bmod |C'|) \leq |C'| - 1$, contradicting the minimality of the chosen cycle. Hence $G_\alpha(I')$ is a directed cycle of length $|C'| \leq m$.
\end{proof}

\paragraph{Reduction 2: Restricting the Number of Weighted Agents.} Additionally, we show that if one of the cyclic instances we consider has $m$ centers, then we may assume without loss of generality that it has exactly $m$ weighted agents. The following lemma shows that we can eliminate the number of agents $n$ from consideration, which greatly simplifies the space of instances we need to search over.

\begin{lemma}\label{lem:m-agents-are-sufficient}
	If a clustering instance $I=(N,C,d,w)$ with $m$ centers has a deviation graph $G_\alpha(I)$ containing a directed cycle of length $m$, then there exists an instance $I'$ with $m$ centers and at most $m$ weighted agents such that $G_\alpha(I')$ contains a directed cycle of length $m$.
\end{lemma}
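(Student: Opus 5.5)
The plan is to treat the agents' weights as the variables of a linear program. Every agent is fixed by its deviation pattern along the cycle, and a basic feasible solution has few nonzero entries. Let $c_0c_1\cdots c_{m-1}c_0$ be the directed cycle in $G_\alpha(I)$, with deviation sets $S_t$ for $t \in [m]_0$. For each agent $i \in N$, define the $0/1$ vector $a_i \in \{0,1\}^m$ by $a_i(t) = 1$ iff $i \in S_t$. This vector depends only on the distances from $i$ to the centers, which we will not change. Let $A$ be the $m \times n$ matrix whose columns are the $a_i$. The cycle being present means exactly that $(Aw)_t > 1/2$ for every $t$.

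Since there are finitely many strict inequalities, we first fix a rational $\delta > 0$ with $(Aw)_t \ge 1/2 + \delta$ for all $t$. We then consider the linear program
\[
 \min \sum_{i \in N} u_i \quad \text{s.t.} \quad Au \ge \left(\tfrac12 + \delta\right)\mathbf{1}, \quad u \ge 0 .
\]
It is feasible, since $u = w$ works. It is bounded below by $0$. It has rational data. Hence it has an optimal basic feasible solution $u^*$. Apart from the nonnegativity constraints there are only $m$ constraints, so a basic solution has at most $m$ nonzero coordinates, and $u^*$ is rational. Optimality gives $s := \sum_i u^*_i \le \sum_i w_i = 1$. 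Also $s > 0$, because the constraints force $u^* \neq 0$.

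Next we set $w' = u^*/s$, which is nonnegative, rational and sums to $1$. Since $s \le 1$,
\[
 (Aw')_t = \frac{(Au^*)_t}{s} \ge \frac12 + \delta > \frac12
\]
for every $t$. Let $I'$ be the instance on the same centers $C$ whose agents are the (at most $m$) agents in the support of $u^*$, carrying weights $w'$. The metric is the restriction of $d$, so it remains a pseudo-metric. For each remaining agent, membership in the deviation set from $c_{t-1}$ to $c_t$ is unchanged, because it depends only on that agent's distances. Therefore every edge $(c_{t-1}, c_t)$ has deviating weight greater than $1/2$ in $G_\alpha(I')$, and the cycle of length $m$ survives. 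We do not need to control any other edges of $G_\alpha(I')$.

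The main subtlety is obtaining $m$ agents rather than $m+1$. A direct Carathéodory argument, or an LP that keeps the normalization $\sum_i u_i = 1$ as a constraint, only yields $m+1$ nonzeros. The fix is to drop the normalization, minimize total weight instead, and rescale at the end. Rescaling up by $1/s \ge 1$ can only increase every coverage, so the inequalities are preserved. Introducing $\delta$ is what lets us replace the strict inequalities by closed ones, so that the LP theory applies and rationality of the weights is kept.
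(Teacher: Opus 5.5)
Your proposal is correct and is essentially the paper's proof: the same LP (minimize total weight subject to each deviation set having weight at least $\tfrac12+\epsilon$), the same observation that an optimal basic feasible solution has at most $m$ nonzero entries because there are only $m$ non-sign constraints, and the same rescaling by the optimal total weight $W^*\le 1$. You also spell out two points the paper leaves implicit, namely that the optimal total is positive and that an optimal vertex exists because the feasible region lies in the nonnegative orthant and the objective is bounded below.
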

\begin{proof}
	Let $I=(N,C,d,w)$ be an instance with $m$ centers and with $G_\alpha(I)$ containing a directed cycle $c_0c_1\dots c_{m-1}c_0$ of length $m$. Let $S_t$ denote the deviation set from $c_{t-1}$ to $c_t$ for all $t \in [m]_0$. Then we know that all the following are true:
	\begin{equation}
		\begin{array}{ll}
			\sum_{i \in N}w_i = 1 \\
			\sum_{i \in S_t} w_i > \frac{1}{2}, \qquad & \text{for all } t \in [m]_0, \\
			w_i \ge 0,  \qquad & \text{for all } i \in N.
		\end{array}
	\end{equation}
	Hence, set $\epsilon = \min_{t \in [m]_0} \left(\sum_{i\in S_t} w_i - \frac{1}{2} \right) > 0$, and the following linear program is feasible with optimal value at most $1$.

	\begin{equation}\label{eq:LP-weights}
		\begin{array}{lllr}
			\text{minimize}_{w}   & \sum_{i \in N} w_i \\
			\text{subject to}
			&\sum_{i \in S_t} w_i \ge \frac{1}{2} + \epsilon, & & \text{for all } t \in [m]_0, \\
			&w_i \ge 0, & & \text{for all } i \in N.
		\end{array}
	\end{equation}

	Since $\epsilon$ is rational (as weights are rational), all coefficients and right-hand sides of LP~\eqref{eq:LP-weights} are rational. Hence it admits a rational optimal basic feasible solution $(w_i^*)_{i \in N}$. Because the LP has exactly $m$ constraints (excluding non-negativity), this basic feasible solution has at most $m$ nonzero entries, and an optimal total weight $W^* = \sum_{i \in N} w_i^* \le 1$. We delete all agents $i$ with $w_i^*=0$ and normalize the remaining weights by setting $w_i^{**} = w_i^* / W^*$. Let $I' = (N', C, d, w^{**})$ where $N' = \{i \in N : w_i^* > 0\}$. Because $W^* \le 1$, this scaling ensures $\sum_{i \in S_t} w_i^{**} \ge \frac{1}{2} + \epsilon > \frac{1}{2}$ for all $t$, so $G_\alpha(I')$ still contains the directed cycle of length $m$.
\end{proof}

\paragraph{Full Reduction.} Putting together the reductions from \Cref{lem:cycles-are-sufficient,lem:m-agents-are-sufficient}, we can show that, to search for the optimal lower bound for instances with a fixed number $m$ of centers, it suffices to consider instances with at most $m$ centers and exactly $m$ agents, whose deviation graph contains a directed Hamiltonian cycle.

\begin{lemma}\label{lem:full-reduction}
	If there exists a clustering instance $I = (N,C,d,w)$ with $|C| = m$ such that the $\alpha$-Droop core is empty for $k=1$, then there exists an instance $I' = (N',C',d',w')$ with $|C'| \leq m$ and $|N'| \leq |C'|$ where $G_\alpha(I')$ contains a directed cycle of length $|C'|$.
\end{lemma}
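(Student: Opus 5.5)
The plan is to compose the two reductions already established, \Cref{lem:cycles-are-sufficient} followed by \Cref{lem:m-agents-are-sufficient}, applied in that order.

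\textbf{Step 1 (Reduction 1).} Start from $I=(N,C,d,w)$ with $|C|=m$ and empty $\alpha$-Droop core for $k=1$. \Cref{lem:cycles-are-sufficient} gives an instance $I''=(N'',C',d'',w'')$ with $T=|C'|\le m$ centers whose deviation graph $G_\alpha(I'')$ is a directed cycle of length $T$. In particular, $G_\alpha(I'')$ contains a directed cycle of length $|C'|$, that is, a Hamiltonian cycle on its center set. This is exactly the hypothesis of \Cref{lem:m-agents-are-sufficient}, with $m$ there replaced by $|C'|$.

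\textbf{Step 2 (Reduction 2).} Apply \Cref{lem:m-agents-are-sufficient} to $I''$. This yields an instance $I'=(N',C',d',w')$ with the same center set $C'$ and at most $|C'|$ weighted agents such that $G_\alpha(I')$ still contains a directed cycle of length $|C'|$. We take $d'$ to be the restriction of $d''$ to the surviving agents and centers; a restriction of a pseudo-metric is again a pseudo-metric. We then have $|C'|\le m$ and $|N'|\le|C'|$, which is the claim.

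\textbf{What needs checking.} The argument is essentially bookkeeping, and two points deserve a line each.

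First, \Cref{lem:m-agents-are-sufficient} is stated for ``$m$ centers'', but its proof uses only the number of centers of the instance it is given. Invoking it with $|C'|$ in place of $m$ is therefore legitimate.

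Second, the conclusion is only that $G_\alpha(I')$ \emph{contains} a Hamiltonian cycle. It need not \emph{be} a cycle, because reweighting agents may create extra edges. The statement asks only for containment, so this is harmless; I will remark on it explicitly.

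If the lemma is later used to argue that $I'$ itself has an empty $\alpha$-Droop core, that follows anyway. Every center on a Hamiltonian cycle has an outgoing edge, so $G_\alpha(I')$ has no sink, and hence the core of $I'$ is empty as well.

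I expect no real obstacle here. The only subtle step is making sure the rescaling in \Cref{lem:m-agents-are-sufficient} preserves strict majorities, and that lemma already handles it through the slack $\epsilon$.
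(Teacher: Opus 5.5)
Your proposal is correct and follows the paper's own proof: apply \Cref{lem:cycles-are-sufficient} to get the cyclic instance on $C'$, then \Cref{lem:m-agents-are-sufficient} with $|C'|$ in the role of $m$. Your extra remarks (that $G_\alpha(I')$ only \emph{contains} the cycle, that the restricted pseudo-metric is still a pseudo-metric, and that the slack $\epsilon$ preserves strict majorities) are accurate, and the paper leaves them implicit.
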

\begin{proof}
	We construct $I'$ from $I$ in two steps.

	First, by \Cref{lem:cycles-are-sufficient}, there exists an instance $I'' = (N,C',d,w)$ with $C' \subseteq C$ such that $G_\alpha(I'')$ is a directed cycle of length $|C'|$.

	Then, by \Cref{lem:m-agents-are-sufficient}, the existence of $I''$ implies the existence of an instance $I' = (N',C',d,w')$ with $N' \subseteq N$ such that $G_\alpha(I')$ contains a cycle of length $|C'|$ and $|N'| \leq |C'|$.
\end{proof}

Finally, using the reduction from \Cref{lem:full-reduction}, we can show that for any fixed value of $m$, if we wish to find the highest possible value of $\alpha$ such that there exists an instance with $m$ centers where the $\alpha$-Droop core is empty for $k=1$, then it suffices to focus on instances where there are at most $m$ centers and $m$ agents, and the deviation graph contains a directed Hamiltonian cycle.

Let
\begin{multline*}
	\alpha_m = \sup \{\alpha : \exists \text{ clustering instance } I \text{ with } m \text{ centers and } n \leq m \text{ agents} \\
	\text{s.t. } G_{\alpha}(I) \text{ contains a directed Hamiltonian cycle} \}.
\end{multline*}

\begin{theorem}\label{lem:cycle-search-m}
	For clustering instances with $m$ centers and $k=1$, the optimal bound for the $\alpha$-Droop core to be nonempty is $\alpha_m^* = \max_{T \leq m}\alpha_{T}$. That is, there always exists a center in the $\alpha_m^*$-Droop core and for every $\alpha < \alpha_m^*$, there exists an instance whose $\alpha$-Droop core is empty.
\end{theorem}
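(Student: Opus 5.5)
We prove the two directions separately: existence of a core center at $\alpha_m^*$, and emptiness of the core for every smaller $\alpha$. Throughout, we rely on the monotonicity of the deviation graph in $\alpha$. If $\alpha \le \alpha'$, then $\alpha' d(i,y) < d(i,x)$ implies $\alpha d(i,y) < d(i,x)$. Hence the edge set of $G_{\alpha'}(I)$ is contained in that of $G_\alpha(I)$, and a Hamiltonian cycle present at $\alpha'$ is also present at every $\alpha \le \alpha'$. Consequently the set of $\alpha$ witnessing the definition of $\alpha_T$ is a down-closed interval, and $\alpha_T$ is its supremum.

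\emph{Lower bound (emptiness for $\alpha<\alpha_m^*$).} Fix $\alpha<\alpha_m^*$ and choose $T\le m$ with $\alpha<\alpha_T$. By the definition of the supremum and monotonicity, there is an instance $I$ with $T$ centers and at most $T$ agents such that $G_\alpha(I)$ contains a directed Hamiltonian cycle. Every vertex of a Hamiltonian cycle has an outgoing edge, so $G_\alpha(I)$ has no sink, and the $\alpha$-Droop core of $I$ is empty. To obtain an instance with exactly $m$ centers, we add $m-T$ dummy centers. Each dummy center is placed at the same point as an existing cycle center $c_0$, i.e.\ at pseudo-distance $0$ from $c_0$. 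A dummy center $c'$ then has the same distances to all agents as $c_0$, so the majority that deviates away from $c_0$ also deviates away from $c'$; moreover, no agent strictly prefers $c'$ over $c_0$ by a factor $\alpha$. Thus no sink is created, and the $\alpha$-Droop core remains empty. Since $d$ is only required to be a pseudo-metric, this placement is legitimate.

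\emph{Upper bound (existence at $\alpha_m^*$).} Suppose for contradiction that some instance with $m$ centers has an empty $\alpha$-Droop core for some $\alpha\ge\alpha_m^*$. By \Cref{lem:full-reduction}, there is an instance $I'$ with $T:=|C'|\le m$ centers and at most $T$ agents such that $G_\alpha(I')$ contains a directed cycle of length $T$, i.e.\ a Hamiltonian cycle. Hence $\alpha$ belongs to the set defining $\alpha_T$, so $\alpha\le\alpha_T\le\alpha_m^*$, which forces $\alpha=\alpha_T=\alpha_m^*$ with the supremum attained.

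The main obstacle is excluding this boundary case, that is, showing the supremum is not attained. We handle it with the strictness of every inequality involved. In the fixed finite instance $I'$, each deviation condition $\alpha d(i,c_t)<d(i,c_{t-1})$ is a strict inequality, and there are finitely many of them. Each inequality with $d(i,c_t)>0$ has slack, and each with $d(i,c_t)=0$ holds for every $\alpha$. Therefore all of them continue to hold for some $\alpha'>\alpha$, and the cycle persists in $G_{\alpha'}(I')$. This gives $\alpha_T\ge\alpha'>\alpha$, contradicting $\alpha=\alpha_T$.

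There is one further subtlety: if $\alpha_T=\infty$, the maximum would be meaningless. This cannot happen, since \Cref{thm:general-case} gives $\alpha_T\le 1+\sqrt2$ for all $T$. Hence every $\alpha_T$ is finite, $\alpha_m^*=\max_{T\le m}\alpha_T$ is a finite maximum over finitely many $T$, and the argument above yields a center in the $\alpha_m^*$-Droop core for every instance with $m$ centers.
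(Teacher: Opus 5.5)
Your proof is correct and follows essentially the paper's argument. For the existence direction, you combine \Cref{lem:full-reduction} with the observation that the supremum defining $\alpha_T$ is not attained, since the finitely many strict deviation inequalities have slack; for the emptiness direction, you pad a $T$-cycle instance to $m$ centers. The only real difference is in the padding: you duplicate $c_0$, which is legitimate for a pseudo-metric since the copy inherits the edge to $c_1$, whereas the paper adds far-away centers that every agent $\alpha$-deviates away from.
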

\begin{proof} We first observe that, for any $T \leq m$, the supremum defining $\alpha_T$ is not attained; that is, all clustering instances with $T$ centers and $n \le T$ agents do not contain a directed Hamiltonian cycle in $G_{\alpha_T}$. To see this, consider an arbitrary $\alpha$ and an instance $I$ with $T$ centers and $n \le T$ agents where $G_\alpha(I)$ contains a directed Hamiltonian cycle $c_0c_1 \ldots c_{T-1}c_0$
	Then, for each $t \in [T]_0$, we have $\sum_{i \in S_t} w_i > \frac{1}{2}$ and $d(i,c_{t-1}) > \alpha \cdot d(i,c_{t})$ for all $i \in S_t$. Hence, for
	$$\alpha' = \min \left\{ \frac{d(i, c_{t-1})}{d(i, c_t)} : d(i, c_t) > 0, i \in S_t, t \in [T]_0 \right\},$$
	we have $\alpha < \alpha'$. Taking $\alpha'' = \frac{\alpha + \alpha'}{2} \in (\alpha, \alpha')$, we have $d(i,c_{t-1}) > \alpha'' \cdot d(i,c_{t})$ for all $i \in S_t$, so $G_{\alpha''}$ still contains the cycle $c_0c_1 \ldots c_{T-1}c_0$. Since we can always find a strictly larger witness $\alpha'' > \alpha$, the supremum $\alpha_T$ is not attained.

	Now assume for contradiction that there exists some instance $I'$ with $m$ centers where the $\alpha_m^*$-Droop core is empty for $k=1$. By \Cref{lem:full-reduction}, there exists some instance $I''$ with $T \leq m$ centers and $n \leq T$ agents such that $G_{\alpha_{m}^*}(I'')$ contains a directed Hamiltonian cycle.
	By the observation above, the existence of such a cycle at $\alpha_m^*$ implies that $\alpha_T > \alpha_m^*$, a contradiction.

	Finally, fix $\alpha < \alpha_m^*$. Then there exists $T \le m$ such that $\alpha < \alpha_{T}$. Start from an instance with $T$ centers such that $G_\alpha$ contains a directed cycle of length $T$. If $T=m$, we are done. Otherwise, add $m-T$ additional centers sufficiently far away from all agents so that each new center is strictly Pareto dominated by more than a factor of $\alpha$ by some center from the original $T$ centers. Then each new center is not a sink, while the original centers already lie on a directed cycle. Hence, the resulting instance has no sink, and therefore no $\alpha$-Droop core for $k=1$ exists.
\end{proof}

\subsection{A MILP to Search for a Lower Bound}\label{sec:MILP}

With the reduction from \Cref{lem:cycle-search-m} in hand, we will next show how to formulate a Mixed Integer Linear Program (MILP) that can search through all (reduced) clustering instances for a fixed $m$, and find the one with the largest possible value of $\alpha$ for which the $\alpha$-Droop core is empty for $k=1$.

\Cref{fig:LP-distance-search} shows such a MILP. It takes as input a fixed value of $\alpha$ and a value of $m$, and checks whether there exists a clustering instance with $m$ centers and $m$ weighted agents whose $\alpha$-deviation graph contains a directed cycle of length $m$. To find the maximum value of $\alpha$ for which an instance with $m$ centers and an empty $\alpha$-Droop core exists, one can repeatedly run the MILP while binary searching over values of $\alpha$.

The constraints of the program are divided into two parts. Constraints~\labelcref{eq:ILP-majority,eq:ILP-normalization,eq:ILP-nonneg-w,eq:ILP-binary} encode the assignment of agents to the deviation groups $S_0,\dots,S_{m-1}$. Specifically, each binary variable $x_{i,t}$ represents whether agent $i$ is (forced to be) in the group $S_t$ or not. These constraints ensure that the agents are assigned valid weights (constraints~\labelcref{eq:ILP-normalization,eq:ILP-nonneg-w,eq:ILP-binary})
and assigned to deviation groups (constraint~\labelcref{eq:ILP-majority})
such that each group $S_t$ contains a strict weighted majority of the agents (where the strictness is enforced by a pre-defined small constant $\epsilon$).

Constraints~\labelcref{eq:ILP-triangle,eq:ILP-deviation,eq:ILP-nonneg-d} take the found deviation structure, and solve for the actual metric distances that realizes this structure. We w.l.o.g. consider instances with $N \cap C = \emptyset$ as an agent-center $a \in N \cap C$ can be separated to $i_a \in N$ and $c_a \in C$ located in the same place. Then it suffices to encode the information of agent-center distances $(d(i,c))_{i \in N, c \in C}$, as the remaining distances can be given by the shortest path distance along the weighted complete bipartite graph.
Constraint~\labelcref{eq:ILP-triangle} ensures that the (shortest path) distances meet the triangle inequality, and hence form a metric space together with \labelcref{eq:ILP-nonneg-d}. Constraint~\labelcref{eq:ILP-deviation} ensures that for each agent $i \in N$ with $i \in S_t$, agent $i$ indeed wants to deviate from $c_{t-1}$ to $c_t$ by a factor of $\alpha$.

With this program, the utility of our reductions in \Cref{sec:reduce} becomes clear. In practice, one of the main bottlenecks in solving such an MILP is the number of integer variables in the program. Note that in \Cref{fig:LP-distance-search}, the only integer variables are the variables $(x_{i,t})_{i \in N, t \in [m]_0}$ (the variables representing the weights of the agents and the metric distances are all continuous). There is an $x$ variable for each agent $i \in N$ and each possible deviating group $S_t$. By \Cref{lem:m-agents-are-sufficient}, the number of agents is at most $m$.

Moreover, since the only deviation edges the program needs to consider are those of the directed Hamiltonian cycle, there are exactly $m$ edges, and hence only $m$ deviation sets to consider. Together, this bounds the number of integer variables by $m^2$, allowing the MILP to terminate quickly for small values of $m$.

\newcommand{\lpq}[3][17em]{
	\mathmakebox[#1][l]{#2}\quad #3
}

\begin{figure}[ht]
	\centering
\begin{subequations}\label{eq:LP-distance-search}
	\begin{tcolorbox}[colback=gray!8, colframe=gray!35, arc=5pt, boxrule=0.5pt,
		left=8pt, right=8pt, top=6pt, bottom=6pt]
		\begin{align}
			& \text{find feasible values for} \; (d(i,c))_{i \in N, c \in C},\,
			(w_i)_{i \in N},\, (x_{i,t})_{i \in N, t \in [m]_0} \notag \\[4pt]
			& \text{subject to} \notag \\
			& \lpcmt{Finds the optimal deviation structure} \notag \\[6pt]
			& \quad \lpq[20em]
			{\sum_{i \in N} w_i x_{i,t} \ge \tfrac{1}{2} + \epsilon,}
			{\forall\, t \in [m]_0,}
			\tag{1}\label{eq:ILP-majority} \\
			& \quad \sum_{i \in N} w_i = 1,
			\tag{2}\label{eq:ILP-normalization} \\
			& \quad \lpq[20em]
			{w_i \ge 0,}
			{\forall\, i \in N,}
			\tag{3}\label{eq:ILP-nonneg-w} \\
			& \quad \lpq[20em]
			{x_{i,t} \in \{0,1\},}
			{\forall\, i \in N,\; t \in [m]_0,}
			\tag{4}\label{eq:ILP-binary} \\[10pt]
			& \lpcmt{Finds the optimal distances that satisfy the deviation structure} \notag \\[6pt]
			& \quad \lpq[20em]
			{d(i,c) \le d(i,c') + d(i',c') + d(i',c),}
			{\forall\, i,i' \in N,\; c,c' \in C,}
			\tag{5}\label{eq:ILP-triangle} \\
			& \quad \lpq[20em]
			{d(i,c_{t-1}) \ge \alpha \cdot d(i,c_t) + \epsilon,}
			{\forall\, i \in N,\; t \in [m]_0 \text{ with } x_{i,t}=1,}
			\tag{6}\label{eq:ILP-deviation} \\
			& \quad \lpq[20em]
			{d(i,c) \ge 0,}
			{\forall\, i \in N,\; c \in C.}
			\tag{7}\label{eq:ILP-nonneg-d}
		\end{align}
	\end{tcolorbox}
\end{subequations}
\caption{MILP for finding a metric instance with $m$ centers where the $\alpha$-Droop core is empty for $k=1$, given fixed values of $\alpha$ and $m$. A feasible solution yields a clustering instance whose deviation graph $G_\alpha$ is a directed cycle of length $m$.}\label{fig:LP-distance-search}
\end{figure}

Note that two constraints in \Cref{fig:LP-distance-search} are stated in simplified form for readability, and each requires a standard ILP transformation in practice. Constraint~\eqref{eq:ILP-deviation} is conditional on $x_{i,t}=1$, and is enforced using the big-$M$ method: it is replaced by $d(i,c_{t-1}) \ge \alpha \cdot d(i,c_{t}) + \epsilon - M(1-x_{i,t})$ for all $i \in N$ and $t \in \{0,\ldots,m-1\}$, where $M$ is a sufficiently large constant. When $x_{i,t}=1$, the big-$M$ term vanishes and the constraint binds; when $x_{i,t}=0$, the constraint is trivially satisfied. Similarly, constraint~\eqref{eq:ILP-majority} contains the bilinear product $w_i\,x_{i,t}$. Since $x_{i,t}$ is binary and $w_i\in[0,1]$, this product can be linearized exactly by introducing an auxiliary variable $z_{i,t}\in[0,1]$ and adding the constraints $z_{i,t}\le w_i$, $z_{i,t}\le x_{i,t}$, and $z_{i,t}\ge w_i+x_{i,t}-1$, which together enforce $z_{i,t}=w_i x_{i,t}$. With both transformations applied, the program is a pure MILP.

\subsection{Optimal Bounds for Small Values of $m$}

Using the MILP outlined in \Cref{fig:LP-distance-search}, fixing a value of $m$, and binary searching over possible values of $\alpha$, we were able to determine $\alpha_m$ for $m \in \{3,4,5,6\}$: the boundary value of $\alpha$ such that, for some instance $I$, $G_\alpha (I)$ contains a directed cycle of length $m$.\footnote{The cases for $m \in \{1,2\}$ are trivial: $\alpha_m = 1$, as in either case there exists an agent $\alpha$-deviating to the same center.} These values are shown in \Cref{tab:best-alphas-cyclic}. Using these values in combination with \Cref{lem:cycle-search-m}, we can determine the minimum possible value $\alpha_m^*$ of $\alpha$ for which any instance with $m$ centers has a nonempty $\alpha$-Droop core for $k=1$. These bounds are shown in \Cref{tab:best-alphas-repeated}.

Note that, for three of the four values of $m$ we consider, their entries in \Cref{tab:best-alphas-cyclic} and \Cref{tab:best-alphas-repeated} are identical, meaning that the highest value of $\alpha$ one can achieve with a cycle of length $m$ is at least as high as the optimal one can achieve using a cycle of length $m'$ for any $m' < m$. However, this is interestingly not the case when $m=4$, where the optimal value that can be achieved by a $4$-cycle instance is less than $2$, the optimal value achievable by a $3$-cycle instance. It is unclear whether $m=4$ is the only exception to monotonicity of the optimal $\alpha$ for these cyclic instances.

\begin{table}[ht]
	\centering
	\begin{minipage}[t]{0.42\textwidth}
		\centering
		\begin{tabular}{cc}
			\toprule
			$T$ & $\alpha_T$ \\
			\midrule
			3 & 2 \\
			4 & 1.8393 \\
			5 & 2.0775 \\
			6 & 2.1019 \\
			\bottomrule
		\end{tabular}
		\caption{Tight values of $\alpha$ for cycle lengths
			 $T \in \{3, 4, 5, 6\}$.}
		\label{tab:best-alphas-cyclic}
	\end{minipage}
	\hfill
	\begin{minipage}[t]{0.42\textwidth}
		\centering
		\begin{tabular}{cc}
			\toprule
			$m$ & $\alpha^*_m$ \\
			\midrule
			3 & 2 \\
			4 & 2 \\
			5 & 2.0775 \\
			6 & 2.1019 \\
			\bottomrule
		\end{tabular}
		\caption{Tight $\alpha$-values for instances with $m$ centers for $m \in \{3, 4, 5, 6\}$.}
		\label{tab:best-alphas-repeated}
	\end{minipage}
\end{table}

Although the bounds in \Cref{tab:best-alphas-cyclic} were originally found by running the MILP from \Cref{fig:LP-distance-search}, we later analyzed the resulting metric spaces and manually derived the exact irrational values corresponding to the approximate values returned by the MILP. Below, rather than simply appealing to the outputs of the MILP as a certificate of correctness, we include formal proofs for the tight optimal $\alpha_m$ values for $m \in \{3,4,5,6\}$, which directly analyze the deviation graph, and may prove more useful to the reader than a simple correctness certificate produced by the MILP.

The style of these proofs will be very similar to \Cref{thm:general-case}, where we leveraged cycles that must necessarily appear in the deviation graphs of instances to rederive the best known upper-bound of $1 + \sqrt{2}$ for the general case. However, to derive these more nuanced bounds, we must use more complex analysis than \Cref{lem:deviation-step}. Below, we introduce two more ways we can bound distances along a deviation cycle.

\begin{lemma}\label{lem:cases-full}
	Let $I=(N,C,d,w)$ be a weighted instance and let $c_0c_1\cdots c_{T-1}c_0$ be a directed cycle in $G_\alpha(I)$ for some $\alpha > 1$, with deviation sets $S_t$ and overlap agents $i_t \in S_t \cap S_{t+1}$ for each $t \in [T]_0$. The following hold.
	\begin{enumerate}
		\item\label{item:full-3-consec} If $i_t = i_{t+1}$, then $d(i_{t+1}, c_{t+1}) < \frac{1}{\alpha} d(i_t, c_t)$.
		\item\label{item:full-2-jump-1} If $i_t \in S_{t+3}$, then $d(i_{t+2}, c_{t+2}) < \frac{1}{\alpha(\alpha - 1)} \left(d(i_{t+1},c_{t+1}) + d(i_t, c_t)\right)$.
	\end{enumerate}
\end{lemma}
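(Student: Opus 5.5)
Both items follow the pattern of \Cref{lem:deviation-step}: chain the deviation inequalities $\alpha\, d(i,c_t) < d(i,c_{t-1})$ for $i \in S_t$ with the triangle inequality, then rearrange.

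\textbf{Item~\ref{item:full-3-consec}.} Here $i_t = i_{t+1}$, so this single agent lies in $S_{t+1}$. By definition of $S_{t+1}$,
\[
\alpha\, d(i_{t+1},c_{t+1}) < d(i_{t+1},c_t) = d(i_t,c_t).
\]
Dividing by $\alpha$ gives the claim; no triangle inequality is needed.

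\textbf{Item~\ref{item:full-2-jump-1}.} The goal is an upper bound on $d(i_{t+2},c_{t+2})$. Since $i_{t+2}\in S_{t+2}$, the natural start is
\[
\alpha\, d(i_{t+2},c_{t+2}) < d(i_{t+2},c_{t+1}).
\]
The problem is that $c_{t+1}$ is an unbounded "far" center for $i_{t+2}$. The only agent that certifies $i_{t+2}$ is close to something relevant is via $c_{t+2}$, so I route through the path $i_{t+2}\to c_{t+2}\to i_{t}\to c_{t+1}$. The triangle inequality gives
\[
d(i_{t+2},c_{t+1}) \le d(i_{t+2},c_{t+2}) + d(i_t,c_{t+2}) + d(i_t,c_{t+1}).
\]

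Now bound the two terms involving $i_t$.

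\begin{itemize}
\item Since $i_t\in S_{t+1}$, we have $d(i_t,c_{t+1}) < \frac{1}{\alpha}d(i_t,c_t)$.
\item The term $d(i_t,c_{t+2})$ is where the hypothesis $i_t\in S_{t+3}$ must enter. Membership in $S_{t+3}$ gives $\alpha\, d(i_t,c_{t+3}) < d(i_t,c_{t+2})$. That is a lower bound on $d(i_t,c_{t+2})$, the wrong direction.
\end{itemize}

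So the path must be chosen so that distances appear on the "small" side of deviation inequalities. I therefore reconsider: the overlap agent $i_{t+1}\in S_{t+1}\cap S_{t+2}$ is close to both $c_{t+1}$ and $c_{t+2}$. Concretely, $d(i_{t+1},c_{t+2}) < \frac1\alpha d(i_{t+1},c_{t+1})$. This suggests the alternative path $i_{t+2}\to c_{t+2}\to i_{t+1}\to c_{t+1}$, giving
\[
\alpha\, d(i_{t+2},c_{t+2}) < d(i_{t+2},c_{t+2}) + \Bigl(1+\tfrac1\alpha\Bigr)d(i_{t+1},c_{t+1}),
\]
which is just \Cref{lem:deviation-step} again. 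The target coefficient $\frac{1}{\alpha(\alpha-1)}$ on both $d(i_{t+1},c_{t+1})$ and $d(i_t,c_t)$ indicates the correct path is $i_{t+2}\to c_{t+2}\to i_{t+1}\to c_{t+1}\to i_t \to \cdots$. Each agent should contribute only its short leg, scaled by $1/\alpha$, and the coefficient $1$ terms should cancel against a telescoped quantity.

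Expanding $\frac{1}{\alpha(\alpha-1)}$ as a geometric series $\sum_{j\ge 2}\alpha^{-j}$ confirms this: the target is really the rearrangement of an inequality of the form
\[
\alpha D \le D + \tfrac1\alpha(\cdots),
\]
where the coefficient $1$ on $D$ must come from the identical distance $d(i_{t+2},c_{t+2})$.

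The main obstacle is therefore identifying which far center to compare against and which triangle path makes every term appear on the small side of some deviation inequality. I would use $i_t\in S_{t+3}$ on the outgoing side: $i_{t+2}\in S_{t+3}$ as well, since $i_{t+2}\in S_{t+2}\cap S_{t+3}$. This gives a common target center $c_{t+3}$ shared with $i_t$, which yields
\[
d(i_{t+2},c_{t+2}) \le d(i_{t+2},c_{t+3}) + d(i_t,c_{t+3}) + d(i_t,c_{t+2}).
\]

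For the final argument I would write down the candidate chains combining $i_t$ at $c_{t+1},c_{t+3}$, $i_{t+1}$ at $c_{t+1},c_{t+2}$, and $i_{t+2}$ at $c_{t+2},c_{t+3}$. I would then pick the one whose rearrangement gives exactly $(\alpha^2-\alpha)\,d(i_{t+2},c_{t+2}) < d(i_{t+1},c_{t+1}) + d(i_t,c_t)$, checking the coefficients as in \Cref{lem:deviation-step}.
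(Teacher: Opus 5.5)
Item (1) is correct and matches the paper's one-line argument. Item (2), however, is not proved. Your proposal ends with a plan to search among candidate chains, and the one concrete inequality you commit to does not give the claim. You correctly spot that $c_{t+3}$ is close to both $i_t$ and $i_{t+2}$, but you apply the triangle inequality through it to $d(i_{t+2},c_{t+2})$ itself. That never uses the deviation $i_{t+2}\in S_{t+2}$, and it only yields $(1-\frac1\alpha)\,d(i_{t+2},c_{t+2}) < (1+\frac1\alpha)\,d(i_t,c_{t+2})$. This is a factor $\frac{\alpha+1}{\alpha-1}>1$, far from $\frac{1}{\alpha(\alpha-1)}$. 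There are two further missteps. First, you abandon $d(i_t,c_{t+2})$ because no deviation inequality bounds it from above, yet the triangle inequality through $i_{t+1}$ does bound it. Second, your heuristic $\alpha D \le D + \frac1\alpha(\cdots)$ points you toward routing through $c_{t+2}$, which is exactly the dead end you tried first.

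The missing step is to keep your original starting inequality $\alpha\,d(i_{t+2},c_{t+2})<d(i_{t+2},c_{t+1})$. You then route the far distance $d(i_{t+2},c_{t+1})$ through $c_{t+3}$ and $i_t$, so that every leg sits on the small side of a deviation inequality (using $i_{t+2},i_t\in S_{t+3}$ and $i_t\in S_{t+1}$):
\begin{align*}
\alpha^2 d(i_{t+2},c_{t+2}) &< \alpha\, d(i_{t+2},c_{t+1}) \le \alpha\, d(i_{t+2},c_{t+3}) + \alpha\, d(i_t,c_{t+3}) + \alpha\, d(i_t,c_{t+1}) \\
&< d(i_{t+2},c_{t+2}) + d(i_t,c_{t+2}) + d(i_t,c_t).
\end{align*}
Next, $d(i_t,c_{t+2}) \le d(i_t,c_{t+1}) + d(i_{t+1},c_{t+1}) + d(i_{t+1},c_{t+2}) < \frac1\alpha d(i_t,c_t) + (1+\frac1\alpha)\,d(i_{t+1},c_{t+1})$. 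Substituting gives $(\alpha^2-1)\,d(i_{t+2},c_{t+2}) < \frac{\alpha+1}{\alpha}\bigl(d(i_t,c_t)+d(i_{t+1},c_{t+1})\bigr)$, and dividing by $(\alpha-1)(\alpha+1)>0$ gives the claim. So the working chain has the shape $\alpha^2 D < D + \frac{\alpha+1}{\alpha}(\cdots)$, with the factor $\alpha+1$ cancelling only at the end. It does not have the shape $\alpha D \le D + \frac1\alpha(\cdots)$ that you were looking for.
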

\begin{proof}
	(\ref{item:full-3-consec}) This follows trivially.
	$$d(i_{t+1},c_{t+1}) = d(i_{t},c_{t+1}) < \frac{1}{\alpha} d(i_t, c_t).$$

	(\ref{item:full-2-jump-1}) We first have
	\begin{align*}
		\alpha^2 \cdot d(i_{t+2},c_{t+2}) &<  \alpha \cdot d(i_{t+2}, c_{t+1}) \\
		&\le  \alpha \cdot d(i_{t+2}, c_{t+3}) + \alpha \cdot d(i_{t}, c_{t+3}) + \alpha \cdot d(i_{t}, c_{t+1}) \\
		&< d(i_{t+2}, c_{t+2}) + d(i_{t}, c_{t+2}) + d(i_{t}, c_{t}), \\
	\end{align*}
	where the strict inequalities come from the definition of $S_t$ and the second inequality comes from the triangle inequality. Then, we have
	\begin{align*}
		d(i_{t},c_{t+2}) &\le  d(i_t, c_{t+1}) + d(i_{t+1}, c_{t+1}) + d(i_{t+1}, c_{t+2})\\
		&<\frac{1}{\alpha} \cdot d(i_{t}, c_{t}) + \left( 1 + \frac{1}{\alpha}\right) d(i_{t+1}, c_{t+1}),
	\end{align*}
	where the strict inequality comes from $S_t$ and the other from the triangle inequality.
	Hence rearranging terms, we get
	$$d(i_{t+2}, c_{t+2}) < \frac{1}{\alpha(\alpha - 1)} \left(d(i_{t+1},c_{t+1}) + d(i_t, c_t)\right).$$
\end{proof}

With \Cref{lem:cases-full}, we can proceed with full proofs for each considered $m$ value.

\begin{theorem}\label{thm:T=3}
	If $G_\alpha(I)$ for some instance $I$ contains a directed cycle of length 3, then $\alpha < 2$. Moreover, there exists an instance for which $G_\alpha$ contains a directed cycle of length 3 for every $\alpha < 2$.
\end{theorem}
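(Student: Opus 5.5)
For the lower-bound part, I will use the instance of Figure~\ref{fig:deviation-example}. For every $\alpha<2$ its deviation graph is the directed $3$-cycle $c_0c_1c_2c_0$. Each agent is at distances $1,2,4$ from its three centers. The pairs (distance $4$, distance $2$) and (distance $2$, distance $1$) give ratios exactly $2$, so these deviations occur for every $\alpha<2$. Each edge is then supported by two of the three agents, which is weight $2/3>1/2$. I also need to check that these distances form a (pseudo-)metric. Only agent--center distances are specified, so I complete them by shortest paths in the bipartite graph. The triangle condition to verify is $d(i,c)\le d(i,c')+d(i',c')+d(i',c)$. The largest left-hand side is $4$, and every right-hand side with $i'\neq i$ is at least $1+1+2$ or so. A short case check settles it.

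For the upper bound, I will take a directed $3$-cycle $c_0c_1c_2c_0$ with deviation sets $S_0,S_1,S_2$, each of weight $>1/2$. The overlap agents are $i_t\in S_t\cap S_{t+1}$. The key structural fact for $T=3$ is that $S_{t+3}=S_t$. Hence $i_t\in S_t=S_{t+3}$, and Lemma~\ref{lem:cases-full}(\ref{item:full-2-jump-1}) applies for every $t$. Writing $D_t=d(i_t,c_t)$, this gives
\[ D_{t+2} < \frac{1}{\alpha(\alpha-1)}\,(D_{t+1}+D_t) \quad\text{for all } t\in[3]_0 . \]
Summing over $t$ yields $\sum_t D_t < \frac{2}{\alpha(\alpha-1)}\sum_t D_t$. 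If $\sum_t D_t>0$, this forces $\alpha(\alpha-1)<2$, which is exactly $\alpha<2$.

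The main obstacle is the degenerate case where some $D_t=0$. I handle it by noting that $i_t\in S_{t+1}$ means $\alpha\, d(i_t,c_{t+1})<d(i_t,c_t)=D_t$, which is impossible when $D_t=0$. Hence all $D_t>0$ and the sum is positive. I also treat the case where overlap agents coincide, say $i_t=i_{t+1}$. Lemma~\ref{lem:cases-full}(\ref{item:full-2-jump-1}) does not require distinct agents, so this case needs no special handling. For $\alpha\le 1$ the claim $\alpha<2$ is trivial, and for $\alpha>1$ the lemma applies. Combining both parts proves the theorem.
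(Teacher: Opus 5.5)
Your proof is correct. The tightness half uses the paper's own instance. Your $3$-path check, which the paper omits, goes through cleanly: each center is at distance $1$ from exactly one agent, so for $i'\neq i$, $c'\neq c$ the right-hand side is at least $1+2+1=4$.

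For the upper bound you take a different route from the paper. The paper uses $c_{t-1}=c_{t+2}$ directly. The single triangle inequality $d(i_t,c_{t-1})\le d(i_t,c_{t+1})+d(i_{t+1},c_{t+1})+d(i_{t+1},c_{t+2})$, with $i_t\in S_t\cap S_{t+1}$ and $i_{t+1}\in S_{t+2}$, gives the one-step recurrence $(\alpha-1)D_t<D_{t+1}$. This is then chained around the cycle as in \Cref{thm:general-case}. You instead notice that the hypothesis $i_t\in S_{t+3}$ of \Cref{lem:cases-full}~(\ref{item:full-2-jump-1}) holds automatically for $T=3$, and you close the cycle by summing.

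Your per-step bound is in fact implied by the paper's. The inequality $\alpha(\alpha-1)D_{t+2}<D_t+D_{t+1}$ is exactly $\alpha$ times the paper's step at index $t+2$ (namely $(\alpha-1)D_{t+2}<D_{t+3}=D_t$) plus its step at index $t$. For $T=3$, the two triangle inequalities in the lemma's proof are precisely the paper's at those two indices. Summing loses nothing: you obtain $(2-\alpha)(1+\alpha)\sum_t D_t>0$ and hence the same threshold.

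So the paper's argument is more elementary and yields a sharper per-step recurrence. Yours is more economical, exhibiting the $T=3$ bound as the degenerate case of the general two-jump lemma. Your explicit handling of $\alpha\le 1$ and of $D_t>0$ also fills in details the paper leaves tacit.
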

\begin{proof}
	Let $c_0c_1c_2c_0$ be a directed cycle of length 3 contained in $G_\alpha(I)$ for some instance $I$. As in \Cref{thm:general-case}, let $S_t$ be the set that $\alpha$-deviates from $c_{t-1}$ to $c_t$, and let $i_t \in S_{t} \cap S_{t+1}$.

	Then for each $t \in [3]_0$, we have
	\begin{align*}
	\alpha \cdot d(i_{t},c_{t}) &<  d(i_{t}, c_{t-1}) \\
	&\le  d(i_{t}, c_{t+1}) + d(i_{t+1}, c_{t+1}) + d(i_{t+1}, c_{t-1}) \\
	&<  \frac{1}{\alpha} \cdot d(i_{t}, c_{t})  + \left(1 + \frac{1}{\alpha} \right) d(i_{t+1}, c_{t+1}),
	\end{align*}
	where the last inequality comes from the fact that $c_{t-1}=c_{t+2}$. Rearranging terms, we get
	$$ d(i_t, c_t) < \frac{1}{\alpha-1} \cdot d(i_{t+1},c_{t+1}) ,$$
	which implies $\alpha < 2$.

	To show tightness, consider $I = (\{i_0,i_1,i_2\}, \{c_0,c_1,c_2\}, d, w)$ where $w_{i_0}=w_{i_1}=w_{i_2}=1/3$ and distances $d$ as given in \Cref{fig:T3-instance} (a).

	\begin{figure}[h]
	    \centering
	    \begin{subfigure}[c]{0.28\linewidth}
	        \centering
	        $\begin{array}{c|ccc}
	            & c_0 & c_1 & c_2 \\
	            \hline
	            i_0 & 2 & 1 & 4 \\
	            i_1 & 4 & 2 & 1 \\
	            i_2 & 1 & 4 & 2
	        \end{array}$
	        \caption{Distance matrix $[d(i, c)]$ between each agent and center.}
	    \end{subfigure}
	    \hfill
	    \begin{subfigure}[c]{0.30\linewidth}
	        \centering
	        \begin{tikzpicture}[
	          cnode/.style={draw, circle, thick, minimum size=7mm, inner sep=1pt, fill=white},
	          >=Stealth, font=\small,
	        ]
	        \node[cnode] (c0) at (0,   0)   {$c_0$};
	        \node[cnode] (c1) at (3,   0)   {$c_1$};
	        \node[cnode] (c2) at (1.5, 2.6) {$c_2$};
	        \draw[->, thick] (c0) -- node[below] {$S_1$} (c1);
	        \draw[->, thick] (c1) -- node[right] {$S_2$} (c2);
	        \draw[->, thick] (c2) -- node[left]  {$S_0$} (c0);
	        \end{tikzpicture}
	        \caption{Deviation graph $G_\alpha$: the directed 3-cycle $c_0c_1c_2c_0$, with the deviating set $S_t$ labelling each edge.}
	    \end{subfigure}
	    \hfill
	    \begin{subfigure}[c]{0.33\linewidth}
	        \centering
	        $\begin{array}{cc|ccc}
	            w_i & & S_0 & S_1 & S_2 \\
	            \hline
	            \nicefrac{1}{3} & i_0 & 1 & 1 & 0 \\
	            \nicefrac{1}{3} & i_1 & 0 & 1 & 1 \\
	            \nicefrac{1}{3} & i_2 & 1 & 0 & 1 \\
	            \hline
	            & \text{sum} & \nicefrac{2}{3} & \nicefrac{2}{3} & \nicefrac{2}{3}
	        \end{array}$
	        \caption{Set membership: entry is $1$ iff agent $i_t$ belongs to deviating set $S_t$ for all $\alpha<\alpha_3$; bottom row gives the total weight of each set.}
	    \end{subfigure}
	    \caption{Tightness instance for $m = 3$: the distance matrix (a), the resulting deviation graph (b), and the deviating set memberships (c).}
	    \label{fig:T3-instance}
	\end{figure}
	Then, for each $c_t$ and $\alpha < 2$, agents $\{i_t, i_{t+1}\}$ $\alpha$-deviate to $c_{t+1}$, so $G_\alpha(I)$ contains the directed cycle $c_0c_1c_2c_0$.
\end{proof}

\begin{theorem}\label{thm:T=4}
	 Let $\alpha_4$ be the unique real root of $P_4(\alpha)=\alpha^3-\alpha^2 - \alpha - 1$.
	 If $G_\alpha(I)$ for some instance $I$ contains a directed cycle of length 4, then
	\[
	\alpha < \alpha_4 =
	\frac{1+\sqrt[3]{19+3\sqrt{33}}+\sqrt[3]{19-3\sqrt{33}}}{3}
	\approx 1.8393.
	\]
	 Moreover, there exists an instance for which $G_\alpha$ contains a directed cycle of length 4 for every $\alpha < \alpha_4$.
\end{theorem}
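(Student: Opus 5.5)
The plan has two parts: an upper bound showing every directed $4$-cycle forces $\alpha<\alpha_4$, and a matching construction.

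\textbf{Step 1: an agent in three consecutive deviation sets.} Fix a $4$-cycle $c_0c_1c_2c_3c_0$ in $G_\alpha(I)$ with deviation sets $S_0,\dots,S_3$. First I show some agent lies in three cyclically consecutive sets. Since $\sum_t \sum_{i\in S_t} w_i > 4\cdot\tfrac12 = 2 = 2\sum_i w_i$, some agent $a$ belongs to at least three of the four sets. Any three of four cyclic indices are consecutive, so after relabeling the cycle, $a\in S_0\cap S_1\cap S_2$.

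If $a$ lies in all four sets, then $d(a,c_0)<\alpha^{-4}d(a,c_0)$. This is impossible, because $d(a,c_0)>0$ (it strictly exceeds $\alpha\, d(a,c_1)\ge 0$). So I may assume $a\notin S_3$.

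\textbf{Step 2: the chain through $a$ and the closing inequality.} From $a\in S_0\cap S_1\cap S_2$ we get the chain
\[
d(a,c_2)<\alpha^{-1}d(a,c_1)<\alpha^{-2}d(a,c_0)<\alpha^{-3}d(a,c_3).
\]
Normalize $x=d(a,c_2)$, so $d(a,c_3)>\alpha^3x$. The target polynomial $\alpha^3=\alpha^2+\alpha+1$ suggests the closing inequality
\[
d(a,c_3)<(\alpha^2+\alpha+1)\,x .
\]
I would prove this through the triangle inequality
\[
d(a,c_3)\le d(a,c_j)+d(b,c_j)+d(b,c_3),
\]
where $b\in S_3$ is an agent who also lies in $S_0$ or $S_1$. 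Such a $b$ exists because $S_3$ and $S_0$ are both strict majorities, and likewise $S_3$ and $S_1$.

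\textbf{Step 3: case split on the second agent $b$.} This is where I expect the main difficulty. A first attempt that chains \Cref{lem:deviation-step} with the two items of \Cref{lem:cases-full} gives only the weaker condition $(\alpha+1)(\alpha^2+1)>\alpha^4(\alpha-1)^3$. That bound is not tight, so I must work directly with the distances of $a$ and $b$ to all four centers rather than with the $D_t$ recursion.

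The plan is to distinguish $b\in S_3\cap S_0$ from $b\in S_3\cap S_1$ (in the latter case $b$ may also be in $S_2$).
\begin{itemize}
\item If $b\in S_3\cap S_0$, use $j=2$: $b$'s memberships bound $d(b,c_3)$ by $d(b,c_2)/\alpha$ and $d(b,c_0)$ by $d(b,c_3)/\alpha$. Then bound $d(b,c_2)$ by routing through $c_0$ and $a$, i.e.\ $d(b,c_2)\le d(b,c_0)+d(a,c_0)+d(a,c_2)$, using $d(a,c_0)<\alpha^2x$.
\item If $b\in S_3\cap S_1$, use $j=0$ or $j=1$ analogously.
\end{itemize}
In each case, solving the resulting linear inequality for $d(a,c_3)$ should give at most $(\alpha^2+\alpha+1)x$ up to strictness. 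Combined with $d(a,c_3)>\alpha^3 x$, this yields $P_4(\alpha)<0$, hence $\alpha<\alpha_4$.

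If a single $b$ does not suffice, I would add a third overlap agent from $S_1\cap S_3$ or $S_0\cap S_2$, again guaranteed by the majority condition, and set up a small LP dual. Concretely, that means finding nonnegative multipliers of the deviation and triangle inequalities summing to $P_4(\alpha)\,x<0$.

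\textbf{Step 4: tightness construction.} For every $\alpha<\alpha_4$, I would build an instance from the MILP optimum with at most four agents whose membership pattern matches the case above. The distances are polynomial in $\alpha_4$, for instance $1,\alpha_4,\alpha_4^2,\alpha_4^3$ for $a$, and the weights give every $S_t$ weight strictly above $1/2$; for example, three agents of weight $1/3$ each, one of them being $a$. I would then verify two things. Each required deviation holds with ratio exactly $\alpha_4$, so it holds strictly for every $\alpha<\alpha_4$. All agent–center distances satisfy the constraint \eqref{eq:ILP-triangle}, which makes the shortest-path completion a metric.
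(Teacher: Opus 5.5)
There is a genuine gap in Step~3: you bring in the wrong second agent, so no case analysis or LP dual built on it can reach $\alpha_4$.

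\textbf{Why the chosen agents cannot work.} Your two ``cases'' are not really cases. Any two strict majorities intersect, so $S_3\cap S_0$ and $S_3\cap S_1$ are both always nonempty. Your fallback overlap $S_0\cap S_2$ is already witnessed by $a$ itself. Even all of these agents together are consistent with $\alpha>\alpha_4$. Take $\alpha=1.9$ and agents whose distances to $(c_0,c_1,c_2,c_3)$ are $a\colon(3.8,1.95,1,7.3)$, $b_0\colon(1.15,4,4.6,2.35)$ and $b_1\colon(5,2.6,5.4,2.8)$, completed by shortest paths. Every pair of agents satisfies $|d(i,c)-d(i',c)|\le\min_{c'}\bigl(d(i,c')+d(i',c')\bigr)$ for all $c$, so all $3$-path inequalities hold. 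One checks directly that $a\in S_0\cap S_1\cap S_2$, $b_0\in S_3\cap S_0$ and $b_1\in S_3\cap S_1$ all hold strictly. Hence nothing derived from these memberships can give $P_4(\alpha)<0$.

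\textbf{Errors in the execution.} Your chain gives $d(a,c_0)>\alpha^2x$, not $d(a,c_0)<\alpha^2x$. More generally, nothing bounds $d(a,c_0)$ or $d(a,c_1)$ from above in terms of the smallest distance $x$. The closing target $d(a,c_3)<(\alpha^2+\alpha+1)x$ is not a valid consequence of the hypotheses either. Your own Step~4 agent, with distances $1,\alpha_4,\alpha_4^2,\alpha_4^3$, has $d(a,c_3)=(\alpha_4^2+\alpha_4+1)x$. It therefore violates the target for every $\alpha<\alpha_4$, even though its instance contains the $4$-cycle.

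The missing idea is to use an agent $b\in S_2\cap S_3$. This is exactly what the configuration above lacks: $S_2=\{a\}$ and $S_3=\{b_0,b_1\}$ cannot both be majorities. It is also the overlap $i_2$ that your ``first attempt'' already used. Close the loop through $a$ itself rather than through a further agent of $S_3\cap S_0$, and normalize by the largest distance $D=d(a,c_3)$. Then
\[
D\le d(a,c_2)+d(b,c_2)+d(b,c_3)<\frac{D}{\alpha^3}+\Bigl(1+\frac1\alpha\Bigr)d(b,c_2).
\]
Applying \Cref{lem:deviation-step} with $i_1=a\in S_1\cap S_2$ and $i_2=b$ gives
\[
d(b,c_2)<\frac{\alpha+1}{\alpha(\alpha-1)}\,d(a,c_1)<\frac{\alpha+1}{\alpha^3(\alpha-1)}\,D.
\]
Hence $(1-\alpha^{-3})D<\frac{(\alpha+1)^2}{\alpha^4(\alpha-1)}D$, which rearranges to $(\alpha^2+1)P_4(\alpha)<0$. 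Up to an index shift, this is exactly the paper's argument.

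Step~4 is also only a plan. The tightness half requires an explicit instance together with checks of its deviations and $3$-path inequalities, and you still need to supply one. The paper uses three agents of weight $1/3$ with rows $(\alpha_4^3,\alpha_4^2,\alpha_4,1)$, $(\alpha_4,1,\alpha_4^3,\alpha_4^2)$ and $(\alpha_4,\alpha_4\gamma,\gamma,\alpha_4^2)$, in its own labelling, where $\gamma=\alpha_4^3-2\alpha_4$.
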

\begin{proof}
	Let $G_\alpha(I)$ for some instance $I$ contain a directed cycle $c_0c_1c_2c_3c_0$ of length 4. W.l.o.g. assume agents have uniform weights. Then since $|S_t| > \frac{n}{2}$, $\sum_{t=0}^3 |S_t| > 2n$ and hence there exists $i \in N$ that is contained in at least three of $S_0, S_1, S_2, S_3$. W.l.o.g. let $i \in S_1 \cap S_2 \cap S_3$, and let $i_1=i_2=i$. Take $i_3 \in S_3 \cap S_0$. Then,
	\begin{align*}
		d(i, c_0) &\le  d(i, c_{3}) + d(i_{3}, c_{3}) + d(i_{3}, c_{0}) \\
		&<  \frac{1}{\alpha^3} \cdot d(i, c_{0})  + \left(1 + \frac{1}{\alpha} \right) d(i_{3}, c_{3}).
	\end{align*}
	Also, we have
	$$d(i_3,c_3) < \frac{\alpha+1}{\alpha(\alpha - 1)} d(i_2, c_2) < \frac{\alpha + 1}{\alpha^3 (\alpha -1)} d(i, c_0),$$
	where the first inequality comes from \Cref{lem:deviation-step} and the latter comes from the fact that $i_2 = i$.  Hence, we have
	$$ \left( 1 - \frac{1}{\alpha^3} \right) d(i, c_0) < \frac{(\alpha+1)^2}{\alpha^4(\alpha-1)} \cdot d(i,c_{0}) .$$

	Dividing by $d(i,c_0)>0$ and multiplying through by $\alpha^4(\alpha-1)>0$, this rearranges to $\alpha(\alpha-1)^2(\alpha^2+\alpha+1) < (\alpha+1)^2$. The left-hand side minus the right-hand side equals $(\alpha^2+1)\,P_4(\alpha)$, as one can verify by expanding. Since $\alpha^2+1>0$, this gives $P_4(\alpha) < 0$, i.e.\ $\alpha < \alpha_4$.

	To show tightness, let $I = (\{i_0,i_1,i_2\}, \{c_0,c_1,c_2,c_3\}, d, w)$ where $w_{i_0}=w_{i_1}=w_{i_2}=1/3$ and distances $d$ as given in \Cref{fig:T4-instance} (a).

	\begin{figure}[h]
	    \centering
	    \begin{subfigure}[c]{0.33\linewidth}
	        \centering
	        $\begin{array}{c|cccc}
	            & c_0 & c_1 & c_2 & c_3 \\
	            \hline
	            i_0 & \alpha_4^3 & \alpha_4^2 & \alpha_4 & 1 \\
	            i_1 & \alpha_4 & 1 & \alpha_4^3 & \alpha_4^2 \\
	            i_2 & \alpha_4 & \alpha_4 \gamma & \gamma & \alpha_4^2
	        \end{array}$
	        \caption{Distance matrix $[d(i, c)]$ between each agent and center.}
	    \end{subfigure}
	    \hfill
	    \begin{subfigure}[c]{0.24\linewidth}
	        \centering
	        \begin{tikzpicture}[
	          cnode/.style={draw, circle, thick, minimum size=7mm, inner sep=1pt, fill=white},
	          >=Stealth, font=\small,
	        ]
	        \node[cnode] (c0) at (0.9, 0)   {$c_0$};
	        \node[cnode] (c1) at (1.8, 0.9) {$c_1$};
	        \node[cnode] (c2) at (0.9, 1.8) {$c_2$};
	        \node[cnode] (c3) at (0,   0.9) {$c_3$};
	        \draw[->, thick] (c0) -- node[below right, inner sep=2pt] {$S_1$} (c1);
	        \draw[->, thick] (c1) -- node[above right, inner sep=2pt] {$S_2$} (c2);
	        \draw[->, thick] (c2) -- node[above left,  inner sep=2pt] {$S_3$} (c3);
	        \draw[->, thick] (c3) -- node[below left,  inner sep=2pt] {$S_0$} (c0);
	        \end{tikzpicture}
	        \caption{Deviation graph $G_\alpha$: the directed 4-cycle $c_0c_1c_2c_3c_0$, with deviating set $S_t$ labelling each edge.}
	    \end{subfigure}
	    \hfill
	    \begin{subfigure}[c]{0.37\linewidth}
	        \centering
	        $\begin{array}{cc|cccc}
	            w_i & & S_0 & S_1 & S_2 & S_3 \\
	            \hline
	            \nicefrac{1}{3} & i_0 & 0 & 1 & 1 & 1 \\
	            \nicefrac{1}{3} & i_1 & 1 & 1 & 0 & 1 \\
	            \nicefrac{1}{3} & i_2 & 1 & 0 & 1 & 0 \\
	            \hline
	            & \text{sum} & \nicefrac{2}{3} & \nicefrac{2}{3} & \nicefrac{2}{3} & \nicefrac{2}{3}
	        \end{array}$
	        \caption{Set membership: entry is $1$ iff agent $i_t$ belongs to deviating set $S_t$ for all $\alpha<\alpha_4$; bottom row gives the total weight of each set.}
	    \end{subfigure}
	    \caption{Tightness instance for $m = 4$: the distance matrix (a), the resulting deviation graph (b), and the deviating set memberships (c). Here $\gamma = \alpha_4^3 - 2\alpha_4 = 2 + \frac{1}{\alpha_4}$.}
	    \label{fig:T4-instance}
	\end{figure}
	Then, for each $c_t$ and $\alpha < \alpha_4$, two agents $\alpha$-deviate to $c_{t+1}$, so $G_\alpha(I)$ is the directed cycle $c_0c_1c_2c_3c_0$.
\end{proof}

\begin{theorem}\label{thm:T=5}
	Let $\alpha_5$ be the unique real root of $P_5(\alpha)=\alpha^5-3\alpha^4+3\alpha^3-3\alpha^2+3\alpha-3$.
	If $G_\alpha(I)$ for some instance $I$ contains a directed cycle of length $5$, then
	\[
		\alpha < \alpha_5 \approx 2.0775.
	\]
	 Moreover, there exists an instance for which $G_\alpha$ contains a directed cycle of length 5 for every
	$\alpha<\alpha_5$.

\end{theorem}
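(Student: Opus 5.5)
The plan for the upper bound mirrors the proof of \Cref{thm:T=4}: a counting argument produces an agent lying in many deviation sets, and that agent is then used as a ``shortcut'' when chaining distance bounds around the cycle. Let $c_0c_1c_2c_3c_4c_0$ be a directed $5$-cycle in $G_\alpha(I)$. Assume without loss of generality that weights are uniform. Since $|S_t|>n/2$ for each of the five sets, we have $\sum_{t}|S_t|>5n/2$, so some agent $i$ lies in at least three of $S_0,\dots,S_4$. Every $3$-subset of $\mathbb{Z}_5$ contains two cyclically adjacent indices. Hence, up to rotation, there are only two membership patterns:
\begin{itemize}
\item[(A)] $i\in S_t\cap S_{t+1}\cap S_{t+2}$ (three consecutive sets);
\item[(B)] $i\in S_t\cap S_{t+1}\cap S_{t+3}$.
\end{itemize}
If $i$ lies in four or more sets, it contains pattern (A), and I would treat that case first, since it should give an even stronger bound. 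In each case I fix the overlap agents $i_t$ so as to use $i$ as often as possible, write $D_t=d(i_t,c_t)$, and aim for a cyclic system of strict linear inequalities among $D_0,\dots,D_4$ whose coefficients are rational functions of $\alpha$.

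In case (A) I choose $i_t=i_{t+1}=i$. Then \Cref{lem:cases-full}(\ref{item:full-3-consec}) gives $D_{t+1}<D_t/\alpha$. The remaining steps use \Cref{lem:deviation-step}, $D_{s+1}<\frac{\alpha+1}{\alpha(\alpha-1)}D_s$. In addition, as in \Cref{thm:T=4}, I would use the direct triangle bound $d(i,c_{t-1})\le d(i,c_{t+2})+D_{t+2}+d(i_{t+2},c_{t-1})$, routing through the agent $i$ twice. In case (B) I set $i_t=i$, so that \Cref{lem:cases-full}(\ref{item:full-2-jump-1}) yields the Fibonacci-type bound $D_{t+2}<\frac{1}{\alpha(\alpha-1)}(D_{t+1}+D_t)$, and \Cref{lem:deviation-step} covers the other steps. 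Because $i\in S_{t+3}$ and $i\in S_t$, the distances $d(i,c_{t+3})$ and $d(i,c_{t-1})$ are also controlled; a triangle inequality through $c_{t+3},c_{t+4}$ back to $c_{t-1}=c_{t+4}$ should close the loop. In each case, iterating the inequalities around the cycle gives $D_s<f(\alpha)D_s$ for some $s$ (for instance $s=t$) and a rational function $f$. Clearing denominators should turn $f(\alpha)>1$ into a polynomial inequality. I then need to verify that this polynomial inequality implies $P_5(\alpha)<0$, i.e.\ $\alpha<\alpha_5$, in both cases, by factoring out a positive factor as was done with $(\alpha^2+1)$ in \Cref{thm:T=4}.

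I expect the main obstacle to be case (B), and more generally choosing the inequalities so that the resulting bound is exactly $P_5$ and not something weaker. The crude use of \Cref{lem:deviation-step} on the three unassisted steps is likely too lossy. If it is, my first attempt would be to strengthen the counting: the excess weight $\sum_t|S_t|-5n/2$ forces either an agent in four sets (reducing to case (A)) or a positive weight of agents in exactly three sets. Two such agents with rotated patterns could then serve as $i_t$ at different positions, giving two applications of \Cref{lem:cases-full} within one cycle. As a guide, I would read off from the MILP optimum which overlap agents are tight, since the tight constraints in the extremal instance indicate which chain of inequalities yields $P_5$.

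For tightness, I would take the extremal solution of the MILP in \Cref{fig:LP-distance-search} for $m=5$, which has at most five weighted agents by \Cref{lem:m-agents-are-sufficient}. I would express its distances in closed form as polynomials in $\alpha_5$ (powers $\alpha_5^j$ plus a few auxiliary parameters, analogous to $\gamma$ in \Cref{fig:T4-instance}) and present the distance matrix together with the membership table. Three checks then remain: every set $S_t$ has weight $>1/2$; the strict deviation inequalities hold for all $\alpha<\alpha_5$, since they hold with ratio exactly $\alpha_5$; and the triangle inequalities hold with the agent-center distances extended by shortest paths. The last check is routine but tedious, using $P_5(\alpha_5)=0$ to simplify.
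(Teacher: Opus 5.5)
Your case split is correct: up to rotation, a $3$-subset of $\mathbb{Z}_5$ is either $\{t,t+1,t+2\}$ or $\{t,t+1,t+3\}$. The gap is that the information you feed into each case is too weak to reach $\alpha_5$, so no choice of chained triangle inequalities can close either case.

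You use the weights only to produce one agent $i$ in three sets plus the pairwise overlap agents. So in case (A) your argument can only use $i\in S_0\cap S_1\cap S_2$ and agents $b\in S_2\cap S_3$, $e\in S_3\cap S_4$, $f\in S_4\cap S_0$. That configuration is metrically realizable at $\alpha=2.1>\alpha_5$. Let $r=\frac{\alpha+1}{\alpha(\alpha-1)}$. Set $d(i,c_2),d(i,c_1),d(i,c_0),d(i,c_4)$ equal to $1,\alpha,\alpha^2,\alpha^3$. Give $b$, $e$, $f$ the distances $r^k\alpha^2, r^k\alpha, r^k$ (with $k=1,2,3$) to $(c_1,c_2,c_3)$, $(c_2,c_3,c_4)$, $(c_3,c_4,c_0)$ respectively. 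Fill in the remaining agent--center distances by shortest paths. A numerical check at $\alpha=2.1$ confirms every three-path inequality, and the nine required deviation ratios all equal $2.1$. Hence the configuration is realizable with strict inequalities for every $\alpha<2.1$.

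A similar explicit metric exists for one pattern-(B) agent together with pairwise overlap agents, again at $\alpha=2.1$. So a single-agent argument cannot prove anything below $2.1$, let alone $P_5(\alpha)<0$. Your fallback does not repair this. The global excess $\sum_t|S_t|>5n/2$ gives positive weight on agents in three or more sets, but by itself it does not tell you that such agents occur at \emph{different} rotations.

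The missing idea is the local count: $\sum_{\ell=0}^{3}|S_{t+\ell}|>2n$ for \emph{every} $t$. So each window of four consecutive sets contains an agent lying in three of them. The paper uses this in two ways.

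\begin{itemize}
\item If some agent lies in $S_0\cap S_1\cap S_2$, the window $S_2,S_3,S_4,S_0$ yields a second agent $i'$. The four placements of $i'$ all give bounds strictly below $\alpha_5$: about $1.986$ twice, about $2.070$ via \Cref{lem:cases-full}~(\ref{item:full-2-jump-1}), and $\alpha<2$ via an explicit nonnegative combination.
\item If no agent lies in three consecutive sets, each window $\{a,\dots,a+3\}$ forces $a$ or $a+2$ into $J=\{t:S_t\cap S_{t+1}\cap S_{t+3}\neq\emptyset\}$. A covering argument then shows $J$ contains three consecutive indices. This gives three pattern-(B) agents at consecutive rotations, which is exactly the structure of the extremal instance in \Cref{fig:T5-instance}.
\end{itemize}

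Even then, $P_5$ does not come out of a one-variable recursion in $D_0,\dots,D_4$. It comes from a Farkas certificate: eight three-path and thirteen deviation inequalities with polynomial multipliers. These involve off-cycle distances such as $d(i_0,c_3)$, $d(i_1,c_4)$, $d(i_2,c_0)$, and the combination cancels only modulo $P_5(\alpha_5)=0$.

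Your tightness half is a reasonable plan but not yet a construction. The paper supplies explicit weights $(2/9,1/3,2/9,1/9,1/9)$ and closed-form distances in $\alpha_5$ and $\delta=\alpha_5-1$.
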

\begin{proof}
	Let $G_\alpha(I)$ for some instance $I$ contain a directed cycle $c_0c_1c_2c_3c_4c_0$ of length 5. W.l.o.g. assume agents have uniform weights.
	For each $t \in [5]_0$, we have $\sum_{\ell=0}^3|S_{t+ \ell}| > 2n$, hence there exists an agent $i$ in at least three of $S_{t}, S_{t+1}, S_{t+2}, S_{t+3}$.
	We divide into two cases.

	First, assume there exists some $t$ and $i$ with $i \in S_t \cap S_{t+1} \cap S_{t+2}$. By cyclic symmetry, assume that $i\in S_0\cap S_1\cap S_2$. Then consider the sets $S_{2}, S_{3}, S_{4}, S_{0}$. Since their total size is greater than $2n$, some agent $i'$ belongs to at least three of them.

	If $i' \in S_{2} \cap S_{3} \cap S_{4}$, then by taking $i_0=i_1=i$ and $i_2=i_3=i'$, applying \Cref{lem:deviation-step} and \Cref{lem:cases-full}~(\ref{item:full-3-consec}) yields
	$$d(i_0, c_0) < \left(\frac{1}{\alpha} \right)^2 \left(\frac{\alpha+1}{\alpha ( \alpha - 1)}  \right) ^3 d(i_0, c_0),$$
	which implies $\alpha < \overline{\alpha} \approx 1.986$.
	The case $i' \in S_{3} \cap S_{4} \cap S_{0}$ is symmetric.

	If $i' \in S_{2} \cap S_{3} \cap S_{0}$, applying \Cref{lem:cases-full}~(\ref{item:full-2-jump-1}) for $i_2=i'$, we have
	$$d(i_4, c_4) < \frac{1}{\alpha(\alpha - 1)} \left(d(i_{3},c_{3}) + d(i_2, c_2)\right) < \frac{\alpha^2 +1}{\alpha^2 ( \alpha - 1)^2} d(i_2, c_2).$$
	Taking $i_0=i_1=i$ and applying the above, we have
	$$d(i_0, c_0) < \left(\frac{1}{\alpha} \right) \left( \frac{\alpha^2 +1}{\alpha^2 ( \alpha - 1)^2}  \right) \left(\frac{\alpha+1}{\alpha ( \alpha - 1)}  \right) ^2 d(i_0, c_0),$$
	which implies $\alpha <  \overline{\alpha}' \approx 2.0700$.

	Finally if $i' \in S_{2} \cap S_{4} \cap S_{0}$, we show that $\alpha < 2$ must hold.
	Assume, towards a contradiction, that $\alpha \ge 2$. Consider the following weighted sum:
	\begin{align*}
		\Phi
		={}&
		1\bigl(d(i,c_2)+d(i_2,c_2)+d(i_2,c_3)-d(i,c_3)\bigr)
		\\
		&+8\bigl(d(i,c_2)+d(i',c_2)+d(i',c_4)-d(i,c_4)\bigr)
		\\
		&+12\bigl(d(i',c_2)+d(i,c_2)+d(i,c_1)-d(i',c_1)\bigr)
		\\
		&+1\bigl(d(i',c_2)+d(i,c_2)+d(i,c_3)-d(i',c_3)\bigr)
		\\
		&+3\bigl(d(i',c_2)+d(i_2,c_2)+d(i_2,c_3)-d(i',c_3)\bigr)
		\\
		&+6\bigl(d(i_2,c_2)+d(i,c_2)+d(i,c_1)-d(i_2,c_1)\bigr)
		\\
		&+8\bigl(d(i,c_4)-2d(i,c_0)\bigr)
		\\
		&+16\bigl(d(i,c_0)-2d(i,c_1)\bigr)
		\\
		&+14\bigl(d(i,c_1)-2d(i,c_2)\bigr)
		\\
		&+12\bigl(d(i',c_1)-2d(i',c_2)\bigr)
		\\
		&+4\bigl(d(i',c_3)-2d(i',c_4)\bigr)
		\\
		&+6\bigl(d(i_2,c_1)-2d(i_2,c_2)\bigr)
		\\
		&+2\bigl(d(i_2,c_2)-2d(i_2,c_3)\bigr).
	\end{align*}
	The first six parenthesized terms are nonnegative by triangle inequality, and the last seven parenthesized terms are strictly positive as $\alpha \ge 2$. Therefore, $\Phi>0$.
	On the other hand, after distributing the coefficients and grouping equal distance terms, all terms cancel. Hence $\Phi=0$, a contradiction.

	Now suppose no agent belongs to three consecutive deviation sets. For each $t\in[5]_0$, define
	\[
		S_t' = S_t \cap S_{t+1} \cap S_{t+3},
	\]
	and let $J = \{t \in [5]_0 : S_t' \neq \emptyset\}$. We claim that for every $a \in [5]_0$, at least one of $a$ or $a+2$ lies in $J$. Fix any such pair $\{a, a+2\}$. The four consecutive sets $S_a, S_{a+1}, S_{a+2}, S_{a+3}$ each have total weight $> \tfrac{1}{2}$, so their combined weight exceeds $2 = 2\sum_i w_i$. Hence some agent $i^*$ lies in at least three of them. Since no agent is in three consecutive sets, $i^*$ cannot be in $\{S_a, S_{a+1}, S_{a+2}\}$ or $\{S_{a+1}, S_{a+2}, S_{a+3}\}$. The only remaining non-three-consecutive triples within $\{S_a, S_{a+1}, S_{a+2}, S_{a+3}\}$ are $\{S_a, S_{a+1}, S_{a+3}\}$ and $\{S_a, S_{a+2}, S_{a+3}\}$. In the first case $i^* \in S_a'$, so $a \in J$; in the second $i^* \in S_{a+2}'$, so $a+2 \in J$. Thus the pair $\{a,a+2\}$ is hit.

	Since each index lies in exactly two of the five pairs $\{a, a+2\}$, a set of two indices covers at most four such pairs; hence $|J| \ge 3$. Moreover, $J$ must contain three consecutive elements: otherwise $|J| = 3$ with no three consecutive, forcing $J = \{t, t+1, t+3\}$ for some $t$ (these are the only size-$3$ subsets of $[5]_0$ with no three consecutive elements). But then neither $t+2$ nor $t+4$ is in $J$, so the pair $\{t+2, t+4\}$ is not covered, a contradiction.

	Hence, by cyclic symmetry, assume $0,1,2 \in J$ and pick $i_0 \in S_0'$, $i_1 \in S_1'$, $i_2 \in S_2'$. Again let $i_3 \in S_3 \cap S_4$ and $i_4 \in S_4 \cap S_0$. Suppose, toward a contradiction, that $G_\alpha$ contains a directed cycle of length $5$ for some $\alpha \ge \alpha_5$. Then, $G_{\alpha_5}$ must also contain the same cycle. It is hence sufficient to show that $G_{\alpha_5}$ contains no directed $5$-cycle.

	For notational convenience, below we set $\alpha = \alpha_5$.
	Define the following $3$-path residuals:
	\begin{align*}
		T_1&=d(i_0,c_1)+d(i_1,c_1)+d(i_1,c_2)-d(i_0,c_2),\\
		T_2&=d(i_0,c_3)+d(i_3,c_3)+d(i_3,c_4)-d(i_0,c_4),\\
		T_3&=d(i_1,c_4)+d(i_4,c_4)+d(i_4,c_0)-d(i_1,c_0),\\
		T_4&=d(i_1,c_2)+d(i_2,c_2)+d(i_2,c_3)-d(i_1,c_3),\\
		T_5&=d(i_2,c_3)+d(i_0,c_3)+d(i_0,c_1)-d(i_2,c_1),\\
		T_6&=d(i_2,c_3)+d(i_3,c_3)+d(i_3,c_4)-d(i_2,c_4),\\
		T_7&=d(i_3,c_4)+d(i_1,c_4)+d(i_1,c_2)-d(i_3,c_2),\\
		T_8&=d(i_4,c_0)+d(i_2,c_0)+d(i_2,c_3)-d(i_4,c_3).
	\end{align*}
	Then each $T_j\ge 0$ by the triangle inequality.

	Next, define the deviation residuals
	\begin{align*}
		\Delta_1&=d(i_0,c_4)-\alpha d(i_0,c_0),\\
		\Delta_2&=d(i_0,c_0)-\alpha d(i_0,c_1),\\
		\Delta_3&=d(i_0,c_2)-\alpha d(i_0,c_3),\\
		\Delta_4&=d(i_1,c_0)-\alpha d(i_1,c_1),\\
		\Delta_5&=d(i_1,c_1)-\alpha d(i_1,c_2),\\
		\Delta_6&=d(i_1,c_3)-\alpha d(i_1,c_4),\\
		\Delta_7&=d(i_2,c_1)-\alpha d(i_2,c_2),\\
		\Delta_8&=d(i_2,c_2)-\alpha d(i_2,c_3),\\
		\Delta_9&=d(i_2,c_4)-\alpha d(i_2,c_0),\\
		\Delta_{10}&=d(i_3,c_2)-\alpha d(i_3,c_3),\\
		\Delta_{11}&=d(i_3,c_3)-\alpha d(i_3,c_4),\\
		\Delta_{12}&=d(i_4,c_3)-\alpha d(i_4,c_4),\\
		\Delta_{13}&=d(i_4,c_4)-\alpha d(i_4,c_0).
	\end{align*}
	By the choice of the indices $i_t$, each $\Delta_j>0$.

	Define multipliers $\lambda_1,\ldots,\lambda_8$ and $\mu$ by
	\begin{align*}
		\lambda_1&=8\alpha^4-8\alpha^3+8\alpha^2-5\alpha+11,\\
		\lambda_2&=5\alpha^4-6\alpha^3+7\alpha^2-6\alpha+8,\\
		\lambda_3&=9\alpha^4-8\alpha^3+9\alpha^2-7\alpha+15,\\
		\lambda_4&=8\alpha^4-8\alpha^3+11\alpha^2-8\alpha+13,\\
		\lambda_5&=11\alpha^4-10\alpha^3+12\alpha^2-7\alpha+16,\\
		\lambda_6&=4\alpha^4-3\alpha^3+3\alpha^2-2\alpha+4,\\
		\lambda_7&=7\alpha^4-5\alpha^3+7\alpha^2-4\alpha+9,\\
		\lambda_8&=9\alpha^4-9\alpha^3+10\alpha^2-8\alpha+12,\\
		\mu&=(3\alpha^2-5\alpha+4)(5\alpha^2+4\alpha+5).
	\end{align*}
	All these multipliers are strictly positive for every real $\alpha$.

	Now consider the weighted sum
	\begin{align*}
		\Phi={}&
		\lambda_1T_1+\lambda_2T_2+\lambda_3T_3+\lambda_4T_4
		+\lambda_5T_5+\lambda_6T_6+\lambda_7T_7+\lambda_8T_8\\
		&+\lambda_2\Delta_1+\lambda_3\Delta_2+\lambda_1\Delta_3
		+\lambda_3\Delta_4+\lambda_5\Delta_5+\lambda_4\Delta_6+\lambda_5\Delta_7\\
		&+\mu\Delta_8+\lambda_6\Delta_9
		+\lambda_7\Delta_{10}+\lambda_7\Delta_{11}
		+\lambda_8\Delta_{12}+\lambda_8\Delta_{13}.
	\end{align*}

	Since $T_j\ge 0$, $\Delta_j>0$, and every multiplier is strictly positive, we have $\Phi>0$.
	On the other hand, expanding $\Phi$ as a linear combination of the variables $d(i_r,c_s)$, all coefficients cancel. These cancellations are exactly the following identities:
	\begin{align*}
		\lambda_3&=\alpha\lambda_2,\\
		\lambda_1+\lambda_5&=\alpha\lambda_3,\\
		\lambda_2+\lambda_5&=\alpha\lambda_1,\\
		\lambda_1+\lambda_4+\lambda_7&=\alpha\lambda_5,\\
		\lambda_3+\lambda_7&=\alpha\lambda_4,\\
		\lambda_8&=\alpha\lambda_6,\\
		\lambda_4+\mu&=\alpha\lambda_5,\\
		\lambda_4+\lambda_5+\lambda_6+\lambda_8&=\alpha\mu,\\
		\lambda_2+\lambda_6+\lambda_7&=\alpha\lambda_7,\\
		\lambda_3+\lambda_8&=\alpha\lambda_8.
	\end{align*}
	Each identity follows from
	$$
	P_5(\alpha)=\alpha^5-3\alpha^4+3\alpha^3-3\alpha^2+3\alpha-3=0.
	$$
	Therefore $\Phi=0$, contradicting $\Phi > 0$.

	To show tightness, let $I = (\{i_0,i_1,i_2,i_3,i_4\}, \{c_0,c_1,c_2,c_3,c_4\}, d, w)$ where $w_{i_0}=w_{i_2}=2/9$, $w_{i_1}=1/3$, $w_{i_3}=w_{i_4}=1/9$ (equivalently, agents $i_0$ and $i_2$ have $2$ copies each, agent $i_1$ has $3$ copies, and agents $i_3$ and $i_4$ have $1$ copy each at the same location), and distances $d$ as given in \Cref{fig:T5-instance} (a).

	\begin{figure}[h]
	    \centering
	    \begin{subfigure}[c]{0.77\linewidth}
	        \centering
	        \scalebox{0.9}{$\begin{array}{c|ccccc}
	            & c_0 & c_1 & c_2 & c_3 & c_4\\
	            \hline
	            i_0 & \alpha_5^2 \delta^4 & \alpha_5 \delta^4 & \delta^4 & 2 \alpha_5 \delta^3& 2 \delta^3 \\
	            i_1 & 2 \delta^2 &  \alpha_5^2 \delta^3 & \alpha_5 \delta^3 & \delta^3 & 2 \alpha_5 \delta^2 \\
	            i_2 & 2\alpha_5 \delta& 2 \delta & \alpha_5^2 \delta^2 & \alpha_5 \delta^2 & \delta^2 \\
	            i_3 & \delta & 2 \alpha_5 & \alpha_5^2 \delta^4 - \delta^4 - \delta&  \alpha_5^2 \delta & \alpha_5 \delta\\
	            i_4 & \alpha_5 & 1 &  \alpha_5^2 \delta^3 - \delta^3 + 1 &  \alpha_5^2 \delta^3 - \delta^3 - 1 & \alpha_5^2 \\
	        \end{array}$}
	        \caption{Distance matrix $[d(i, c)]$ between each agent and center.}
	    \end{subfigure}
	    \hfill
	    \begin{subfigure}[c]{0.4\linewidth}
	        \centering
	        \begin{tikzpicture}[
	          cnode/.style={draw, circle, thick, minimum size=7mm, inner sep=1pt, fill=white},
	          >=Stealth, font=\small,
	        ]
	        \node[cnode] (c0) at (1.5,  2.4)  {$c_0$};
	        \node[cnode] (c1) at (2.64, 1.57) {$c_1$};
	        \node[cnode] (c2) at (2.21, 0.23) {$c_2$};
	        \node[cnode] (c3) at (0.79, 0.23) {$c_3$};
	        \node[cnode] (c4) at (0.36, 1.57) {$c_4$};
	        \draw[->, thick] (c0) -- node[above right, inner sep=1pt] {$S_1$} (c1);
	        \draw[->, thick] (c1) -- node[right,        inner sep=1pt] {$S_2$} (c2);
	        \draw[->, thick] (c2) -- node[below=1pt, inner sep=1pt] {$S_3$} (c3);
	        \draw[->, thick] (c3) -- node[left,         inner sep=1pt] {$S_4$} (c4);
	        \draw[->, thick] (c4) -- node[above left,   inner sep=1pt] {$S_0$} (c0);
	        \end{tikzpicture}
	        \caption{Deviation graph $G_\alpha$: the directed 5-cycle $c_0 \cdots c_4c_0$, with deviating set $S_t$ labelling each edge.}
	    \end{subfigure}
	    \hfill
	    \begin{subfigure}[c]{0.5\linewidth}
	        \centering
	        \scalebox{0.9}{
	        $\begin{array}{cc|ccccc}
	            w_i & & S_0 & S_1 & S_2 & S_3 & S_4 \\
	            \hline
	            \nicefrac{2}{9} & i_0 & 0 & 1 & 1 & 0 & 1 \\
	            \nicefrac{1}{3} & i_1 & 1 & 0 & 1 & 1 & 0 \\
	            \nicefrac{2}{9} & i_2 & 0 & 1 & 0 & 1 & 1 \\
	            \nicefrac{1}{9} & i_3 & 1 & 0 & 0 & 0 & 1 \\
	            \nicefrac{1}{9} & i_4 & 1 & 1 & 0 & 0 & 0 \\
	            \hline
	            & \text{sum} & \nicefrac{5}{9} & \nicefrac{5}{9} & \nicefrac{5}{9} & \nicefrac{5}{9} & \nicefrac{5}{9}
	        \end{array}$}
	        \caption{Set membership: entry is $1$ iff agent $i_t$ belongs to deviating set $S_t$ for all $\alpha<\alpha_5$; bottom row gives the total weight of each set.}
	    \end{subfigure}
	    \caption{Tightness instance for $m = 5$: the distance matrix (a), the resulting deviation graph (b), and the deviating set memberships (c). Here $\delta = \alpha_5 - 1$.}
	    \label{fig:T5-instance}
	\end{figure}
	Then, for every $c_t$ and every $\alpha < \alpha_5$, a strict majority of agents $\alpha$-deviate to $c_{t+1}$, so $G_\alpha(I)$ contains the directed cycle $c_0c_1c_2c_3c_4c_0$.
\end{proof}

The computations in the proof of \Cref{thm:T=5} can be verified with symbolic computations in $\mathbb{Q}[\alpha]/p(\alpha)$ by the accompanying code in the GitHub repository.\footnote{\url{https://github.com/evamichelle30/ProportionalClusteringLowerBounds}}

\begin{restatable}{theorem}{Tsix}\label{thm:T=6}
	Let $\alpha_6$ be the unique real root of $P_6(\alpha)=\alpha^5-2\alpha^4+\alpha^2-4\alpha+2$ in $[2,\infty)$.
	If $G_\alpha(I)$ for some instance $I$ contains a directed cycle of length 6, then
	\[
	\alpha < \alpha_6 \approx 2.1019.
	\]
	Moreover, there exists an instance for which $G_\alpha$ contains a directed cycle of length 6 for every $\alpha < \alpha_6$.
\end{restatable}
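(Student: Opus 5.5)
The plan follows the template of \Cref{thm:T=5}. For the upper bound, take a directed 6-cycle $c_0c_1\cdots c_5c_0$ in $G_\alpha(I)$ and assume uniform weights. Assume toward a contradiction that $\alpha \ge \alpha_6$; since the cycle persists for smaller $\alpha$, it suffices to rule out $\alpha=\alpha_6$, where in particular $\alpha\ge 2$.

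The first step is combinatorial. Any four consecutive deviation sets have total size $>2n$, so some agent lies in three of them. This holds for every window $\{t,\dots,t+3\}$, so each of the six windows yields a ``triple agent''. I would classify the membership patterns that occur:
\begin{itemize}
\item three consecutive sets ($S_t\cap S_{t+1}\cap S_{t+2}$);
\item a pattern $S_t\cap S_{t+1}\cap S_{t+3}$;
\item the mirror pattern $S_t\cap S_{t+2}\cap S_{t+3}$.
\end{itemize}
Additionally, all six sets sum to $>3n$, so some agent lies in four sets, or many agents lie in three. I would also use the alternating triples $S_0,S_2,S_4$ and $S_1,S_3,S_5$: each triple has total size $>3n/2$.

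Next comes a case split as in \Cref{thm:T=5}.
\begin{itemize}
\item \textbf{Three consecutive memberships.} When some agent is in three consecutive sets, take $i_t=i_{t+1}$ and apply \Cref{lem:cases-full}~(\ref{item:full-3-consec}) together with \Cref{lem:deviation-step} and \Cref{lem:cases-full}~(\ref{item:full-2-jump-1}) around the cycle. This yields products of the factors $\frac1\alpha$, $\frac{\alpha+1}{\alpha(\alpha-1)}$ and $\frac{\alpha^2+1}{\alpha^2(\alpha-1)^2}$ that must exceed $1$. I expect each such subcase to give a threshold at or below $\alpha_6$, and I would check this numerically and then symbolically.
\item \textbf{Two triple agents.} Subcases with two triple agents whose patterns interact non-locally, analogous to the $S_2\cap S_4\cap S_0$ case in \Cref{thm:T=5}, I would kill with an explicit weighted sum $\Phi$. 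This sum combines triangle-inequality residuals (3-path terms $d(i,c)+d(i',c)+d(i',c')-d(i,c')\ge0$) and deviation residuals $d(i,c_{t-1})-\alpha d(i,c_t)>0$, with coefficients that make all distance terms cancel, contradicting $\Phi>0$.
\item \textbf{No three consecutive memberships.} In the main case, as in \Cref{thm:T=5}, define $S_t'=S_t\cap S_{t+1}\cap S_{t+3}$ and its mirror. A covering argument over the windows then forces a structured choice of overlap agents $i_0,\dots,i_5$ with prescribed extra memberships.
\end{itemize}

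The main obstacle is the tight case, where the bound must come out exactly at $\alpha_6$, the root of $P_6$. Here I would not hand-derive chains. Instead, I would solve the dual LP: take the MILP of \Cref{fig:LP-distance-search} restricted to the forced membership pattern, read off the optimal dual multipliers, and express them as polynomials in $\alpha$ modulo $P_6$, as was done for $\lambda_1,\dots,\lambda_8,\mu$ in \Cref{thm:T=5}. The multipliers should be positive on $[2,\infty)$. The cancellation identities then reduce to polynomial identities in $\mathbb{Q}[\alpha]/P_6(\alpha)$, which I would verify symbolically. Note that $P_6$ has several real roots, so positivity must be checked near the root in $[2,\infty)$, not for all reals.

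For tightness, I would take the MILP's optimal instance near $\alpha_6$: six weighted agents whose memberships realize the extremal pattern, with each covering set of weight slightly above $\tfrac12$. I would then rewrite its distances symbolically as monomials in $\alpha_6$ and $\delta=\alpha_6-1$, as in \Cref{fig:T5-instance}, choosing them so that every deviation inequality is tight at $\alpha_6$ and hence strict for all $\alpha<\alpha_6$. The remaining check is that the bipartite shortest-path closure satisfies constraint~\eqref{eq:ILP-triangle}, i.e.\ that it gives a genuine metric. I would verify this by exhaustive symbolic comparison of the finitely many 3-path inequalities.
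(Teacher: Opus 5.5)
\textbf{Verdict: genuine gap.} Your toolbox matches the paper's: product chains from \Cref{lem:deviation-step,lem:cases-full}, Farkas-type certificates reduced modulo $P_6$, and an MILP-derived extremal instance. What is missing is the load-bearing part of the proof. That part is an exhaustive combinatorial reduction of every $6$-cycle to finitely many membership configurations, each certified below $\alpha_6$. The decomposition you sketch does not supply it.

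The organizing fact you hedge on is in fact forced: $\sum_t|S_t|>3n$ gives an agent $i$ in at least four deviation sets, not ``or many agents in three''. The paper splits on the pattern of $i$:
\begin{enumerate}
\item four consecutive sets give a chain bound $\alpha<2.0424$;
\item the pattern $\{t,t+1,t+2,t+4\}$ is handled by a certificate (\Cref{lem:0124});
\item otherwise the pattern is w.l.o.g.\ $\{0,1,3,4\}$.
\end{enumerate}
In the last case, a second four-set agent of type $\{1,2,4,5\}$ or $\{2,3,5,0\}$ gives $\alpha<2$ (\Cref{lem:0134-1245}). If no such agent exists, a fractional count with $p=(\tfrac14,\tfrac14,\tfrac12,\tfrac14,\tfrac14,\tfrac12)$ applies. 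Every remaining admissible pattern has $p$-mass at most $1$, while supporting weights force $\sum_t p_t\,w(S_t)>1$. This yields an agent in $S_2\cap S_5\cap S_t$ for some $t\in\{0,1,3,4\}$. The case $t=0$ is certified directly (\Cref{lem:0134-025}). For $t=1$, the window $\{2,3,4,5\}$ gives four final subcases. The tight one is $\{0,1,3,4\},\{1,2,5\},\{2,4,5\}$, with the $P_6$ certificate of \Cref{lem:0134-125-245}. Consistently, the extremal instance has a four-set agent of type $\{1,2,4,5\}$ and no agent in three consecutive sets.

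The concrete step that fails is closing the three-consecutive case by chains.
\begin{itemize}
\item A lone agent in $S_t\cap S_{t+1}\cap S_{t+2}$, with otherwise generic overlap agents, only gives $\frac1\alpha\bigl(\frac{\alpha+1}{\alpha(\alpha-1)}\bigr)^5>1$, i.e.\ $\alpha\lesssim 2.23$. That is far above $\alpha_6$.
\item Take a configuration that genuinely arises above: agents of types $\{0,1,3,4\}$, $\{1,2,5\}$, $\{3,4,5\}$. The best product chain from your three factors (with the other overlap agents generic) takes $i_0$ to be the $\{0,1,3,4\}$-agent in \Cref{lem:cases-full}~(\ref{item:full-2-jump-1}) and $i_3=i_4$ the $\{3,4,5\}$-agent. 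It gives $\frac{\alpha^2+1}{\alpha^3(\alpha-1)^2}\bigl(\frac{\alpha+1}{\alpha(\alpha-1)}\bigr)^3>1$, i.e.\ only $\alpha\lesssim 2.13>\alpha_6$.
\item The paper closes that case with a non-chain certificate (\Cref{lem:0134-125-345}). It combines the $\{1,2,5\}$- and $\{3,4,5\}$-agents with $i_5$ and yields $\alpha<2$.
\end{itemize}
So your expectation that every three-consecutive subcase closes by chains is unjustified. More importantly, without the four-set anchor and the $p$-weighted count, you have no mechanism that forces the extra agents. Hence you have no finite list of configurations to certify.

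Finally, for the tightness half you only describe how you would extract an instance; a proof must exhibit one. The paper's instance has five agents with memberships $\{0,2,3\},\{0,1,4\},\{1,3,4\},\{0,3,5\},\{1,2,4,5\}$ and weights $\tfrac29,\tfrac19,\tfrac19,\tfrac29,\tfrac13$. Its explicit distances are checked against all deviation and triangle constraints.
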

Since the proof is similar to previous theorems, we defer it to the appendix~(\Cref{sec:pfT=6}).

\subsection{The Best Known Lower Bound}

In the previous section, we determined the exact optimal values of $\alpha$ for several values of $m$. In the corresponding proofs, we provided, for each such value of $m$, a metric instance attaining this value of $\alpha$.

Although we were able to provide explicit proofs for each of these values, they were originally found by running the MILP formulated in \Cref{fig:LP-distance-search}. Beyond $m=6$, the MILP became too slow to continue deriving further $\alpha$ values.

As noted above, the only binary variables in \Cref{fig:LP-distance-search} are the variables $x_{i,t}$, which indicate whether agent $i$ belongs to deviation set $S_t$. If the deviation structure is fixed in advance, then these variables can instead be treated as constants. The resulting feasibility problem is then a linear program with only continuous variables, making it significantly more efficient to solve.

\begin{figure}[ht]
	\centering
	\begin{subequations}\label{eq:LP-distance-search-fast}
		\begin{tcolorbox}[colback=gray!8, colframe=gray!35, arc=5pt,
			boxrule=0.5pt, left=8pt, right=8pt, top=0pt, bottom=8pt]
			\begin{align}
				& \text{find feasible values for} \; (d(i,c))_{i \in N, c \in C} \notag \\[4pt]
				& \text{subject to} \quad \lpcmt{Constraints on distances} \notag \\[6pt]
				& \quad \lpq[20em]
				{d(i,c) \le d(i,c') + d(i',c') + d(i',c),}
				{\forall\, i,i' \in N,\; c,c' \in C,}
				\tag{1}\label{eq:LP-fast-triangle} \\
				& \quad \lpq[20em]
				{d(i,c_{t-1}) \ge \alpha \cdot d(i,c_t) + \epsilon,}
				{\forall\, i \in N,\; t \in [m]_0 \text{ with } x_{i,t}=1,}
				\tag{2}\label{eq:LP-fast-deviation} \\
				& \quad \lpq[20em]
				{d(i,c) \ge 0,}
				{\forall\, i \in N,\; c \in C.}
				\tag{3}\label{eq:LP-fast-nonneg-d}
			\end{align}
		\end{tcolorbox}
	\end{subequations}
	\caption{LP for finding a metric distance function $d$ satisfying the triangle inequality that witnesses an $\alpha$ lower bound, given a fixed deviation structure encoded by the $x_{i,t}$ constants.}
	\label{fig:LP-distance-search-fast}
\end{figure}

We present this faster program formally in \Cref{fig:LP-distance-search-fast}. If we can manually find a good assignment of agents to deviation sets, then we can use that as input to the linear program in \Cref{fig:LP-distance-search-fast} to check whether there exists a metric space that realizes this deviation structure and has an empty $\alpha$-Droop core for some fixed $\alpha$ and $k=1$. Unlike the MILP from \Cref{fig:LP-distance-search}, the metric space obtained by binary searching over values of $\alpha$ with this program is not guaranteed to be optimal for a fixed value of $m$. Indeed, there may exist another deviation structure that yields an empty Droop core for a larger value of $\alpha$. Nevertheless, this program is useful for finding new lower-bound instances. In particular, we found the new best lower bound mentioned in \Cref{thm:T=37} by first finding such a deviation structure by hand, using various heuristics.

In the appendix~(\Cref{app:metric37-certificate}), we provide the exact deviation structure
of an instance with $37$ centers and agents whose $2.1508$-Droop core is empty for $k=1$. This includes the sets $S_t$ to which each agent is assigned, as well as the corresponding weights that ensure that each deviation set contains a strict majority of the agents. In the accompanying \href{https://github.com/evamichelle30/ProportionalClusteringLowerBounds}{GitHub repository} for this paper, we provide an implementation that uses the LP from \Cref{fig:LP-distance-search-fast} to binary search over values of $\alpha$. The deviation structure from \Cref{app:metric37-certificate} can be given as input to this program to produce the metric space that yields this lower bound.

\section{Conclusion}

We improved the best known lower bound for proportionally fair clustering in general metric spaces from $2$ to $2.1508$. Our approach relies on the connection between proportional clustering and the $\beta$-plurality problem, put forward by \citet{kellerhals2026proportional}. By focusing on the Droop core for $k=1$, we obtain lower-bound instances that transfer to the standard Hare core.
Beyond the main lower bound, we analyzed small instances and obtained tight thresholds for deviation cycles of length $T\le 6$. The six-center construction (in the appendix) already gives an explicit and interpretable lower bound of approximately $2.1019$. Technically, we reduced the search space by showing that it suffices to consider directed cycles in the deviation graph and weighted instances with at most $m$ agents for $m$ centers. Combined with linear programming feasibility checks and local search over deviation structures, these reductions led to the $37$-center instance establishing the lower bound $2.1508$.

Several questions remain open. Most importantly, the exact value of the optimal core approximation in general metric spaces is still unknown. The current gap is now between $2.1508$ and $1+\sqrt{2}$. Closing this gap may require a stronger structural analysis of deviation cycles, new upper-bound techniques improving on Greedy Capture, or more exhaustive computational searches. It would also be interesting to determine whether similar lower-bound techniques apply to related proportionality notions in multiwinner voting, participatory budgeting, and non-centroid clustering.

\bibliography{refs}

\appendix

\section{Core equivalence for different quota}\label{sec:ABC}
In this section, we prove a slight generalization of \Cref{thm:core-equivalence} in terms of quota.
As stated in the main text, the proof of \Cref{thm:core-equivalence} is not specific to metric spaces. It only uses the property that we can make copies of agents and candidates such that, in each copy, an agent derives her loss or utility solely from the selected candidates in that same copy. This can be done, for instance, in ABC voting by setting each voter’s preferences so that they only approve candidates in the same copy.
To illustrate how \Cref{thm:core-equivalence} can be translated to these settings, we state it here for ABC voting.

\begin{theorem}
	Let $g$ be a nonnegative function such that $g(n,k)=o(1)$ as $n+k \to \infty$. Then the $\alpha$-Droop core is always nonempty if and only if the $\alpha$-core with quota $q=(1+g(n,k))\frac{n}{k}$ is always nonempty.
\end{theorem}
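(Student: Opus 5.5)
We follow the proof of \Cref{thm:core-equivalence}, replacing the exact Hare quota $n/k$ by the inflated quota $q=(1+g(n,k))\frac{n}{k}$. The statement is phrased as ``always nonempty'', so we argue both directions via counterexamples: an instance with empty $\alpha$-Droop core yields, for the quota $q$, an instance with empty $\alpha$-core, and conversely. As in the main text we work with uniform-weight (integer) instances; a group is entitled to a candidate under quota $q$ if it has at least $q$ agents, and under Droop if it has more than $n/(k+1)$ agents.

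For the easy direction, suppose the $\alpha$-core with quota $q$ is empty for some instance $(n,k)$. We need an instance with empty $\alpha$-Droop core. When $q\le \frac{n}{k+1}$ nothing needs to be done, but since $g\ge 0$ we have $q\ge n/k>n/(k+1)$, so the blocking coalition is already larger than the Droop threshold. Therefore every $q$-blocking coalition is a Droop-blocking coalition, the same instance works, and this direction is immediate.

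For the hard direction, take an instance $I$ with $n$ agents and committee size $k$ whose $\alpha$-Droop core is empty. Let $\ell=\lfloor n/(k+1)\rfloor+1$ be the Droop size. We build $I'$ consisting of $r$ disjoint copies of $I$: in a metric space they are infinitely far apart, and in ABC voting each voter approves only candidates in its own copy. Choose $k'$ with $r(k+1)>k'$. By pigeonhole, some copy receives at most $k$ selected candidates, and in that copy a coalition of $\ge \ell$ agents $\alpha$-deviates. To make this coalition reach the quota of $I'$, we need
\[
\ell\ \ge\ (1+g(rn,k'))\frac{rn}{k'}.
\]
The main obstacle is that $g$ depends on $(rn,k')$, so we must choose the parameters so that the slack $\ell>\frac{n}{k+1}$ absorbs the factor $1+g$. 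We write $\ell=\frac{n}{k+1}(1+\eta)$ with $\eta>0$ fixed by $I$. We then take $k'=r(k+1)-1$, so $\frac{rn}{k'}=\frac{n}{k+1}\cdot\frac{r(k+1)}{r(k+1)-1}$. Both this ratio and $1+g(rn,k')$ tend to $1$ as $r\to\infty$, because $g=o(1)$ when $rn+k'\to\infty$. Hence for large enough $r$ the product stays below $1+\eta$, the inequality holds, and the coalition blocks in $I'$. If $n/(k+1)$ is an integer, the slack $\eta$ is only $(k+1)/n$, but it is still a fixed positive constant, so the same limit argument applies.

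A possible subtlety is integrality: the quota $q$ may be fractional, and ``at least $q$'' is then read as $\ge\lceil q\rceil$. The inequality above with real numbers already covers this, since $\ell$ is an integer. A second subtlety is that blocking in $I'$ requires the deviating agents' loss to be determined within their own copy, which is ensured by the infinite separation, or by disjoint approvals in ABC voting. We therefore expect the whole argument to reduce to the asymptotic choice of $r$ and $k'$.
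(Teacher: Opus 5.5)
Your proposal is correct and follows essentially the same route as the paper. The easy direction uses $q \ge n/k > n/(k+1)$. For the converse, you take $r$ disjoint copies with committee size $k' = r(k+1)-1$, use pigeonhole to find a copy receiving at most $k$ selected candidates, and show $(1+g(rn,k'))\frac{rn}{k'} \to \frac{n}{k+1} < \ell$ so that the Droop-sized coalition meets the new quota. Your factorization of this quota into two factors each tending to $1$ is a slightly cleaner form of the paper's $f_t$ manipulation. In the paper's ABC formulation, where a blocking coalition $(S_j,T_j)$ needs $|S_j| \ge \ell|T_j|$, multiplying your inequality $\ell \ge q(rn,k')$ by $|T_j|$ gives $|S_j| \ge q(rn,k')|T_j|$, so subcommittee deviations are covered too.
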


\begin{proof}Since $g$ is nonnegative, we have
	$q(n,k)=\left\lceil (1+g(n,k))\frac{n}{k}\right\rceil\ge \left\lceil \frac{n}{k}\right\rceil$, and $ \lceil \frac{n}{k}\rceil \ge \lfloor \frac{n}{k+1}\rfloor+1$.
	Hence every coalition that is blocking with respect to quota $q(n,k)$ is also
	feasible under the Droop quota. Therefore, nonemptiness of the $\alpha$-Droop
	core implies nonemptiness of the $\alpha$-core with quota $q$.

	Conversely, suppose that there exists an ABC voting instance $(N,C,A,k)$ whose
	$\alpha$-Droop core is empty. Let $n=|N|$, and set
	$\ell=\left\lfloor \frac{n}{k+1}\right\rfloor+1$.
	We show that there exists another instance whose $\alpha$-core with quota $q$
	is empty.

	For an integer $t$, set $k'=t(k+1)-1$. Consider an instance with $tn$ voters and committee size $k'$. Its quota is
	$q(tn,k')=(1+g(tn,k'))\frac{tn}{k'}$.
	Equivalently, define
	$f_t=\frac{k'g(tn,k')}{1+g(tn,k')}$.
	Then
	$(1+g(tn,k'))\frac{tn}{k'}=\frac{tn}{k'-f_t}$.
	Since $g(tn,k')=g(tn,t(k+1)-1)=o_t(1)$ and $k'=t(k+1)-1$, we have $f_t=o(t)$. Therefore,
	$$
	\frac{tn}{k'-f_t}
	=
	\frac{tn}{t(k+1)-1-f_t}
	=
	\frac{n}{k+1-\frac{1+f_t}{t}}
	\longrightarrow
	\frac{n}{k+1}
	$$
	as $t\to\infty$.
	Since $n/(k+1)<\ell$ and $\ell$ is an integer, there exists a sufficiently large
	integer $t$ such that $q(tn,k') < \ell$.

	Fix such $t$ and construct an instance $tA$ consisting of $t$ disjoint copies of the original
	instance. Thus we have voter sets $N_1,\ldots,N_t$ and candidate sets
	$C_1,\ldots,C_t$, and each voter $i\in N_j$ approves only candidates in $C_j$,
	according to the same approval set as in the original instance.

	Every committee $W$ of size $k'$ selects at most $k$ candidates from at least
	one copy $C_j$. Otherwise, each copy would contain at least $k+1$ selected
	candidates, requiring at least $t(k+1)>k'$ candidates in total. Since the original instance has an empty $\alpha$-Droop core, the resulting committee is blocked
	by some sets $S_j\subseteq N_j$ and $T_j\subseteq C_j$ satisfying $|S_j|\ge \ell |T_j|$ and
	\[
	|A_i\cap T_j|
	>
	\alpha |A_i\cap (W\cap C_j)|
	=
	\alpha |A_i\cap W|
	\]
	for all $i\in S_j$, where the identity follows from $A_i\subseteq C_j$.

	By the choice of $t$, the quota in the copied instance satisfies
	$q(tn,k')\le \ell$.
	Thus
	\[
	|S_j|\ge \ell |T_j|\ge q(tn,k')|T_j|.
	\]
	Therefore $S_j$ $\alpha$-blocks $W$ with respect to quota $q$. Since this holds
	for every committee $W$ of size $k'$, the $\alpha$-core with quota $q$ is empty
	in the copied instance.
\end{proof}

One possible choice is $q=\frac{n}{k-c}$, where $0<c<k$ is a constant.
Under this quota, a coalition can block with a subcommittee $T$ only if $|T|\le k-c$.

\section{A lower bound of 2 for $N=C$}\label{sec:N=C}

In this section, we prove a lower bound of 2 for the $\alpha$-core for the special case of $N=C$.
In this case, the lower bound for the $\alpha$-core was known to be $1.5$~\citep{ChenEtal2019}, with no progress since. We recently learned that a Bachelor's thesis (although in German) improved this lower bound to $2$ using a rather complicated instance~\citep{Trinh2024}.
Here, we provide the same bound of 2 using our framework of transferring a $\beta$-plurality instance to proportional clustering. We believe this approach gives a cleaner result.
\begin{theorem}
	For each $\alpha < 2$, there exists an instance with $N=C$ such that the $\alpha$-core is empty.
\end{theorem}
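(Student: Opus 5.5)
We plan to combine the construction of \Cref{thm:core-equivalence} with a $\beta$-plurality lower-bound instance of \citet{filtser2020plurality}, where every point of the metric space is both a voter and a candidate. We would use the following fact from that paper. For every $\alpha<2$ there is a finite metric space $(P,d)$, with possibly repeated points (that is, a pseudometric, equivalently uniform weights after duplication), such that for every point $p\in P$ a strict majority of $P$ is more than a factor $\alpha$ closer to some other point $y\in P$ than to $p$. In our terminology, the instance $I=(P,P,d,1)$ with $N=C=P$ has an empty $\alpha$-Droop core for $k=1$. The first step is to recall this instance and check that it has this form. Duplicated agents are harmless: each copy of an agent is also a copy of a center at the same location, so the condition $N=C$ is preserved, and the uniform-weight convention used in the proof of \Cref{thm:core-equivalence} is met.

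Next we would repeat the proof of \Cref{thm:core-equivalence} with $k=1$. Let $n=|P|$ and $\ell=\lfloor n/2\rfloor+1$. Build $I'$ from $\ell$ disjoint copies of $(P,d)$, so that $N'=C'$ is the union of the copies. We would replace ``infinitely far'' by a finite large distance $D$ between points of different copies, with $D$ exceeding $\alpha$ times the diameter of $P$; the triangle inequality is then easy to verify. Set $k'=n$, so the Hare quota is $1/n$ in weight, i.e.\ $\ell$ agents out of $n\ell$. Since $\ell>n/2$, any choice of $n$ centers puts at most one chosen center in some copy $P_j$. We then treat two cases.

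If $P_j$ contains no chosen center, every agent in $P_j$ has loss at least $D$. All $n\ge\ell$ agents of $P_j$ then $\alpha$-deviate to any point of $P_j$, because their distances to it are at most the diameter, which is less than $D/\alpha$. If $P_j$ contains exactly one chosen center $p$, then every agent in $P_j$ attains her loss at $p$, as all other chosen centers are at distance $D$, which is larger. So losses inside $P_j$ coincide with the losses in the original instance when $p$ is chosen. The empty Droop core then gives a strict majority of $P_j$, hence at least $\ell$ agents, together with a center $y\in P_j$ to which they all $\alpha$-deviate. In both cases this is a Hare blocking coalition in $I'$, so the $\alpha$-core of $I'$ is empty, and $I'$ satisfies $N'=C'$.

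The main obstacle is the first step: ensuring that an $\alpha$-Droop-empty instance with $N=C$ really exists for every $\alpha<2$. Our instances from \Cref{thm:T=3} have $N\cap C=\emptyset$, and naively adding agents as candidates could create sinks. I would cite the \citet{filtser2020plurality} construction directly and check that its voters coincide with its candidate points. If that check fails, I would instead modify the instance of \Cref{fig:T3-instance}. There I would add each agent as a center and verify, using the distances from \Cref{fig:T3-instance}(a) and the induced shortest-path metric, that each agent-center also has a strict majority $\alpha$-deviating away from it, toward the center at distance $1$. I would also add each center as a zero-weight agent, or as a tiny-weight agent so that the majority margins $2/3>1/2$ still hold.
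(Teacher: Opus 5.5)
There is a genuine gap, and it sits in your first step, which is where essentially all the content of this theorem lies. Your second half is fine: the finite separation $D>\alpha\cdot\mathrm{diam}(P)$, the separate treatment of a copy with no chosen center, and the check that $N'=C'$ survives copying are all correct.

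The problem is that the instance of \citet{filtser2020plurality} does not have the form you attribute to it. Its candidate set is the whole circle $C=S^1$ with the arc metric, while its voters are only the three points $N=\{0,\frac13,\frac23\}$. So it is infinite, and $N\subsetneq C$. The ``fact'' you want to cite is exactly what has to be proved.

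Your fallback fails outright. In \Cref{fig:T3-instance}(a) the triangle inequality forces every agent--agent distance to equal $3$; for instance $|d(i_0,c_2)-d(i_1,c_2)|=3=d(i_0,c_1)+d(i_1,c_1)$. Consider the new center placed at $i_0$:
\begin{itemize}
\item agent $i_0$ has loss $0$ there, so any strict majority must contain $i_1$ and $i_2$, each with loss $3$;
\item their best common target is $c_2$, at distances $1$ and $2$;
\item every other center is at distance at least $3$ from one of them.
\end{itemize}
Hence this center is a sink for every $\alpha\ge\frac32$, so this route cannot beat the previously known bound of $1.5$. The trouble is the direction $N\subseteq C$: placing centers at agent locations can create sinks, and giving the extra agents tiny weight does nothing about that.

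The missing idea is to start from an instance where $N\subseteq C$ already holds and no agent location is a sink. You then obtain $C\subseteq N$ and finiteness by discretizing and adding light agents. The paper does this as follows.
\begin{itemize}
\item Take $N_r=C_r=\{\frac{s}{6r}:0\le s<6r\}\subseteq S^1$.
\item Put weight $\frac{3}{10}$ on each of $0,\frac13,\frac23$, and spread the remaining $\frac{1}{10}$ uniformly over the other points. Then every center is a positive-weight agent, while any two heavy agents already form a strict majority.
\item By the dihedral symmetry, assume the chosen center is $c=\frac{s}{6r}$ with $0\le s\le r$. The agents at $\frac13$ and $\frac23$ both $\alpha$-deviate to $c'=\frac12-\frac{1}{6r}\lceil\frac{s}{2}\rceil$. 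Their distance ratios $d(\cdot,c)/d(\cdot,c')$ equal $2$ for even $s$ and are at least $\frac{2r+1}{r+1}$ for odd $s$.
\end{itemize}
Letting $r\to\infty$ gives, for each $\alpha<2$, a finite weighted instance with $N=C$ and empty $\alpha$-Droop core for $k=1$. After duplicating agents and centers together to reach uniform weights, your copy argument (equivalently \Cref{thm:core-equivalence}) completes the proof.
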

\begin{proof}
	We build the proof from the instance by \citet[Theorem 2]{filtser2020plurality}, which proves the lower bound of 2 in the $\beta$-plurality problem.

	The instance considers the metric space $(S^1, d)$ where $S^1$ is the unit sphere and $d$ is the distance along the arcs. By identifying $S^1 = [0,1)$ and scaling distances, they then let $C=S^1$ and $N=\{0, \frac{1}{3}, \frac{2}{3}\}$. Then they show that the $\alpha$-Droop core for $k=1$ is empty for $\alpha<2$.

	We discretize this instance to make it applicable for our case.
	For a positive integer $r$, consider the weighted instance $I_r=(N_r, C_r, d|_{N_r \times N_r})$, where $N_r=C_r=\{\frac{s}{6r}\}_{s=0}^{6r-1} \subseteq S^1$. We assign weights to the agents; equivalently, this can be done by copying voters according to the weights. The three agents $0, \frac{1}{3}, \frac{2}{3}$ each receive a weight of $\frac{3}{10}$, and the remaining weight $\frac{1}{10}$ is distributed uniformly to the remaining agents. Then any two of the heavily weighted agents constitute more than half of the total weight.

	We then proceed analogously to the original instance. Assume a center $c$ at $\frac{s}{6r}$ is selected, where w.l.o.g. $s \in \{0, 1, \ldots, r\}$ by symmetry. Then consider the center $c'$ at $\frac{1}{2} - \frac{1}{6r} \lceil \frac{s}{2} \rceil$. Then the two agents at $\frac{1}{3}$ and $\frac{2}{3}$ would $\alpha$-deviate from $c$ to $c'$ for every $\alpha < 2 - \frac{1}{2r}$. Hence, by taking $r \to \infty$, for $\alpha < 2$, in the special case of $N=C$, there exists an instance where the $\alpha$-Droop core for $k=1$ is empty. The theorem then follows by \Cref{thm:core-equivalence}.
\end{proof}

\section{Proof of \Cref{thm:T=6}}\label{sec:pfT=6}
In this section, we prove \Cref{thm:T=6}.
The proof again proceeds similarly to \Cref{thm:T=5}; it exhausts cases that have supporting weights, and proves a bound for $\alpha$ in each case using either \Cref{lem:deviation-step}, \Cref{lem:cases-full}, or Farkas certificates. Regarding parts that we use Farkas certificates, we separate each case with a lemma that appears at the end of the section. Each certificate can also be verified by the accompanying code in the GitHub repository, which uses symbolic computations in $\mathbb{Q}[\alpha]/p(\alpha)$.\footnote{\url{https://github.com/evamichelle30/ProportionalClusteringLowerBounds}}

\Tsix*
\begin{proof}
	Let $G_\alpha = c_0c_1c_2c_3c_4c_5c_0$ be a directed cycle of length 6.
	Since $\sum_{t=0}^5|S_{t}| > 3n$, there exists an agent $i$ that belongs to at least four of $S_{0}, \ldots, S_{5}$.

	If there exists an agent that resides in four consecutive $S_t$'s, say $S_0, S_1, S_2, S_3$, then by taking $i_0=i_1=i_2=i$ and applying \Cref{lem:deviation-step} and \Cref{lem:cases-full}~(\ref{item:full-3-consec}) yields
	$$d(i_0, c_0) < \left(\frac{1}{\alpha} \right)^2 \left(\frac{\alpha+1}{\alpha ( \alpha - 1)}  \right) ^4 d(i_0, c_0),$$
	which implies $\alpha < \overline{\alpha} \approx 2.0424$.

	Now assume no four consecutive $S_t$'s contain a common agent. If there exists an agent that resides in $S_t \cap S_{t+1} \cap S_{t+2} \cap S_{t+4}$ for some $t$, then we have $\alpha < 2.012$ by \Cref{lem:0124}.

	Now assume that no agent that is contained in four $S_t$'s is in three consecutive $S_t$'s.
	Then w.l.o.g. $i \in S_0 \cap S_1 \cap S_3 \cap S_4$.
	If there exists another four-common agent $i' \in S_1 \cap S_2 \cap S_4 \cap S_5$, then $\alpha < 2$ by \Cref{lem:0134-1245}.
	The case is symmetric for $S_2 \cap S_3 \cap S_5 \cap S_0$.
Hence, we can assume no other four $S_t$'s contain a common agent.

Then, for supporting weights to exist, we claim that there must exist an agent $i' \in S_t \cap S_2 \cap S_5$ for some $t \in \{0,1,3,4\}$.
This is because if such agents do not exist, then an agent $i^* \in S_2 \cap S_5$ must not be contained in any of $S_0, S_1, S_3, S_4$.
Let $p_0 = p_1 = p_3 = p_4 = \frac{1}{4}$ and $p_2=p_5 = \frac{1}{2}$. Then, for any agent $i$, if $i \in \bigcap_{t \in J} S_t$ for some allowed $J \subseteq [6]_0$, we have $\sum_{t \in J} p_t \le 1$.
Hence, for any weight profile $w$ and $J \subseteq [6]_0$,
$$\sum_{i \in \bigcup_{t \in J}S_t} \left( w_i \sum_{t \in J}p_t \right) \le \sum_{i \in \bigcup_{t \in J}S_t}w_i \le 1.$$

However, the left hand side can be written as:
$$\sum_{i \in \bigcup_{t \in J}S_t} \left( w_i \sum_{t \in J}p_t \right)= \sum_{t=0}^5  \left(p_t \sum_{i \in S_t}w_i \right).$$

If the weight profile is supporting, i.e., $ \sum_{i \in S_t}w_i > \frac{1}{2}$ for each $t$, then $$ \sum_{t=0}^5  \left(p_t \sum_{i \in S_t}w_i \right) > \frac{1}{2} \sum_{t=0}^5 p_t = 1,$$
a contradiction. Hence the claim follows.

Since the cases for $t=3$ or $t=4$ are symmetric to cases for $t=0$ or $t=1$, we consider two cases: $t=0$ or $t=1$.

If an agent for $t=0$ exists, then we have $\alpha < 2.0789$ by \Cref{lem:0134-025}.

Assume an agent for $t=1$ exists, i.e. $i' \in S_1 \cap S_2 \cap S_5$. Note that there must exist an agent that is included in three of $S_2 \cap S_3 \cap S_4 \cap S_5$. We perform case analysis.

If $i'' \in S_2 \cap S_3 \cap S_4$, then by taking $i_0=i_3=i$, $i_1=i'$, $i_2=i_3=i''$ and applying \Cref{lem:deviation-step} and \Cref{lem:cases-full} yields

$$d(i_0, c_0) < \left(\frac{1}{\alpha} \right) \left(\frac{\alpha^2+1}{\alpha^2 ( \alpha - 1)^2}  \right)^2  \left(\frac{\alpha+1}{\alpha ( \alpha - 1)}  \right)  d(i_0, c_0),$$
which implies $\alpha < \overline{\alpha} \approx 2.0264$.

If $i'' \in S_3 \cap S_4 \cap S_5$, then $\alpha < 2$ by \Cref{lem:0134-125-345}.

If $i'' \in S_2 \cap S_3 \cap S_5$, then $\alpha< 2.0567$ by \Cref{lem:0134-125-235}.

Finally if $i'' \in S_2 \cap S_4 \cap S_5$, then $\alpha < \alpha_6$ by \Cref{lem:0134-125-245}.

To show tightness, let $I = (\{i_0,i_1,i_2,i_3,i_4\}, \{c_0,c_1,c_2,c_3,c_4,c_5\}, d, w)$ where $w_{i_0}=w_{i_3}=2/9$, $w_{i_1}=w_{i_2}=1/9$, $w_{i_4}=1/3$ (equivalently, agents $i_0$ and $i_3$ have $2$ copies each, and agent $i_4$ has $3$ copies), and $d$ is given in \Cref{fig:T6-instance}.

\begin{figure}[h]
    \centering
    \begin{subfigure}[c]{0.44\linewidth}
        \centering
        \scalebox{0.75}{$\begin{array}{c|cccccc}
            & c_0 & c_1 & c_2 & c_3 & c_4 & c_5\\
            \hline
            i_0 &
            \gamma\delta &
            \alpha_6^3\gamma^2 &
            \alpha_6^2\gamma^2 &
            \alpha_6\gamma^2 &
            \alpha_6^2\gamma &
            \alpha_6\gamma\delta
            \\[0.8em]
            i_1 &
            \alpha_6^2 &
            \alpha_6 &
            \alpha_6^3 &
            \alpha_6\xi &
            \xi &
            \alpha_6^3
            \\[0.8em]
            i_2 &
            \alpha_6\xi &
            \xi &
            \alpha_6^3 &
            \alpha_6^2 &
            \alpha_6 &
            \alpha_6^3-2\alpha_6\gamma
            \\[0.8em]
            i_3 &
            \alpha_6\gamma^2 &
            \alpha_6^2\gamma &
            \alpha_6\gamma\delta &
            \gamma\delta &
            \alpha_6^3\gamma^2 &
            \alpha_6^2\gamma^2
            \\[0.8em]
            i_4 &
            \alpha_6^3\gamma &
            \alpha_6^2\gamma &
            \alpha_6\gamma &
            \alpha_6^3\gamma &
            \alpha_6^2\gamma &
            \alpha_6\gamma
        \end{array}$}
        \caption{Distance matrix $[d(i, c)]$ between each agent and center.}
    \end{subfigure}
    \hfill
    \begin{subfigure}[c]{0.33\linewidth}
        \centering
        \begin{tikzpicture}[
          cnode/.style={draw, circle, thick, minimum size=7mm, inner sep=1pt, fill=white},
          >=Stealth, font=\small,
        ]
        \node[cnode] (c0) at (1.2,  2.4)  {$c_0$};
        \node[cnode] (c1) at (2.24, 1.8)  {$c_1$};
        \node[cnode] (c2) at (2.24, 0.6)  {$c_2$};
        \node[cnode] (c3) at (1.2,  0.0)  {$c_3$};
        \node[cnode] (c4) at (0.16, 0.6)  {$c_4$};
        \node[cnode] (c5) at (0.16, 1.8)  {$c_5$};
        \draw[->, thick] (c0) -- node[above right, inner sep=1pt] {$S_1$} (c1);
        \draw[->, thick] (c1) -- node[right,        inner sep=1pt] {$S_2$} (c2);
        \draw[->, thick] (c2) -- node[below right, inner sep=1pt] {$S_3$} (c3);
        \draw[->, thick] (c3) -- node[below left,  inner sep=1pt] {$S_4$} (c4);
        \draw[->, thick] (c4) -- node[left,         inner sep=1pt] {$S_5$} (c5);
        \draw[->, thick] (c5) -- node[above left,  inner sep=1pt] {$S_0$} (c0);
        \end{tikzpicture}
        \caption{Deviation graph $G_\alpha$: the directed 6-cycle $c_0  \cdots c_5c_0$, with deviating set $S_t$ labelling each edge.}
    \end{subfigure}
    \hfill
    \begin{subfigure}[c]{0.33\linewidth}
        \centering
        $\begin{array}{cc|cccccc}
            w_i & & S_0 & S_1 & S_2 & S_3 & S_4 & S_5 \\
            \hline
            \nicefrac{2}{9} & i_0 & 1 & 0 & 1 & 1 & 0 & 0 \\
            \nicefrac{1}{9} & i_1 & 1 & 1 & 0 & 0 & 1 & 0 \\
            \nicefrac{1}{9} & i_2 & 0 & 1 & 0 & 1 & 1 & 0 \\
            \nicefrac{2}{9} & i_3 & 1 & 0 & 0 & 1 & 0 & 1 \\
            \nicefrac{1}{3} & i_4 & 0 & 1 & 1 & 0 & 1 & 1 \\
            \hline
            & \text{sum} & \nicefrac{5}{9} & \nicefrac{5}{9} & \nicefrac{5}{9} & \nicefrac{5}{9} & \nicefrac{5}{9} & \nicefrac{5}{9}
        \end{array}$
        \caption{Set membership: entry is $1$ iff agent $i_t$ belongs to deviating set $S_t$ for all $\alpha<\alpha_6$; bottom row gives the total weight of each set.}
    \end{subfigure}
    \caption{Tightness instance for $m = 6$: the distance matrix (a), the resulting deviation graph (b), and the deviating set memberships (c). Here $\gamma = \alpha_6 - 1$, $\delta = \alpha_6^2-\alpha_6+2$, and $\xi = \alpha_6^3(\alpha_6-1)^2-\alpha_6^2(\alpha_6-1)-\alpha_6$.}
    \label{fig:T6-instance}
\end{figure}
Then, for every $c_t$ and every $\alpha < \alpha_6$, a strict majority of agents $\alpha$-deviate to $c_{t+1}$, so $G_\alpha(I)$ is the directed cycle $c_0c_1c_2c_3c_4c_5c_0$.
\end{proof}

Now we state the lemmas used in the proof.

\begin{lemma}\label{lem:0124}
	Let $G_\alpha$ contain a directed cycle $c_0c_1\ldots c_5c_0$ of length 6.
	If there exists $i \in S_0 \cap S_1 \cap S_2 \cap S_4$, then $\alpha < 2.012$.
\end{lemma}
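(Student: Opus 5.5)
The natural plan is to chain distance bounds around the 6-cycle, using \Cref{lem:deviation-step} and \Cref{lem:cases-full}, and to let the agent $i \in S_0\cap S_1\cap S_2\cap S_4$ carry as much of the cycle as possible.

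First, because $i$ lies in three consecutive deviation sets, we take $i_0=i_1=i$. \Cref{lem:cases-full}~(\ref{item:full-3-consec}) then gives $d(i_1,c_1)<\frac{1}{\alpha}d(i_0,c_0)$. Next we exploit $i\in S_4$. For the overlap at $t=1$ we have $i_1=i\in S_{1+3}=S_4$, so \Cref{lem:cases-full}~(\ref{item:full-2-jump-1}) applies and yields
\[
d(i_3,c_3)<\frac{1}{\alpha(\alpha-1)}\bigl(d(i_2,c_2)+d(i_1,c_1)\bigr),
\]
where $i_2\in S_2\cap S_3$ is arbitrary. Since $i\in S_2$ but we do not know whether $i\in S_3$, we bound $d(i_2,c_2)$ using \Cref{lem:deviation-step}. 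This gives $d(i_2,c_2)<\frac{\alpha+1}{\alpha(\alpha-1)}d(i_1,c_1)$, hence
\[
d(i_3,c_3)<\frac{1}{\alpha(\alpha-1)}\Bigl(\frac{\alpha+1}{\alpha(\alpha-1)}+1\Bigr)d(i_1,c_1).
\]
The remaining steps $i_3\to i_4\to i_5\to i_0$ are then handled by \Cref{lem:deviation-step}, each contributing a factor $\rho=\frac{\alpha+1}{\alpha(\alpha-1)}$. Combining everything gives $1<\frac{1}{\alpha}\cdot\frac{\alpha^2+1}{\alpha^2(\alpha-1)^2}\cdot\rho^3$. We would then solve for the threshold numerically and check that it falls below $2.012$.

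The main obstacle is that this crude chain probably gives a threshold somewhat above $2.012$. The generic steps at $t=3,4,5$ ignore that $i\in S_4$ also constrains $d(i,c_4)$ and $d(i,c_3)$. To recover that information, we would add direct triangle-inequality links through $i$. For instance, $d(i,c_3)\le d(i,c_4)+d(i_4,c_4)+d(i_4,c_3)$ and $d(i,c_5)\le d(i,c_4)+d(i_4,c_4)+d(i_4,c_5)$ can be combined with $d(i,c_5)>\alpha d(i,c_0)>\alpha^2 d(i,c_1)$ (using $i\in S_0\cap S_1$) and $d(i,c_3)>\alpha d(i,c_4)$. This closes a shorter loop from $i$'s distances through $i_4$ back to $i$.

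If hand-chaining still does not reach $2.012$, we would follow the template of \Cref{thm:T=5}. We fix the agent set $\{i, i_2, i_3, i_4, i_5\}$ with their forced memberships, set $\alpha=2.012$, and use the LP of \Cref{fig:LP-distance-search-fast} to verify infeasibility. The proof then presents the dual multipliers as an explicit Farkas certificate. In this form, a positive combination of triangle residuals (all $\ge0$) and deviation residuals $d(j,c_{t-1})-\alpha d(j,c_t)$ (all $>0$) sums identically to zero, which is a contradiction. Monotonicity in $\alpha$, as in \Cref{thm:T=5}, extends the contradiction to every $\alpha\ge2.012$.
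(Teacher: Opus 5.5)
There is a genuine gap: the only argument you actually carry out does not reach $2.012$, and the fallback postpones the proof rather than supplying it. Your crude chain requires $\alpha^6(\alpha-1)^5<(\alpha^2+1)(\alpha+1)^3$, whose threshold is about $2.128$. That is not merely above $2.012$ but above $\alpha_6\approx 2.1019$, so it would not even dispose of this case inside \Cref{thm:T=6}.

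The loss sits in the closing: the two generic steps $i_4\to i_5\to i_0$, each costing a factor $\rho=\frac{\alpha+1}{\alpha(\alpha-1)}$. Of the extra links you propose, $d(i,c_5)\le d(i,c_4)+d(i_4,c_4)+d(i_4,c_5)$ is exactly the right one. But $d(i,c_3)\le d(i,c_4)+d(i_4,c_4)+d(i_4,c_3)$ points the wrong way, since it bounds $d(i,c_3)$ through $i_4$'s \emph{large} distance $d(i_4,c_3)$. In any case the combination is never worked out.

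Solving the LP of \Cref{fig:LP-distance-search-fast} at $\alpha=2.012$ and reading off dual multipliers is precisely the paper's method. It would succeed, since the true threshold is $\overline{\alpha}\approx 2.01174$, and your monotonicity argument is correct. But the certificate \emph{is} the proof. With slack of only about $3\cdot 10^{-4}$, it cannot be replaced by an unspecified ``refined chain''.

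The missing idea is to close the cycle at $i$ itself, never using $i_5$. Keep $x=d(i,c_2)$, $y=d(i,c_1)>\alpha x$ and $u=d(i,c_4)$ as separate quantities, rather than letting \Cref{lem:cases-full}~(\ref{item:full-2-jump-1}) absorb them, because $u$ is needed twice. From $d(i_2,c_1)\le d(i_2,c_2)+x+y$ you get $d(i_2,c_2)<\frac{x+y}{\alpha-1}$. Then $d(i,c_3)\le x+d(i_2,c_2)+d(i_2,c_3)<x+\rho(x+y)$, and $u<d(i,c_3)/\alpha$. Next, $d(i_3,c_2)\le d(i_3,c_4)+u+x$ gives $\frac{\alpha^2-1}{\alpha}\,d(i_3,c_3)<u+x$, so $d(i_4,c_4)<\rho\,d(i_3,c_3)<\frac{u+x}{(\alpha-1)^2}$. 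Finally,
\[ \alpha^2 y<\alpha\, d(i,c_0)<d(i,c_5)\le u+d(i_4,c_4)+d(i_4,c_5)<u+\tfrac{\alpha+1}{\alpha}\,d(i_4,c_4). \]
Substitute the bounds on $d(i_4,c_4)$ and $u$, then use $x<y/\alpha$, and divide by $y>0$. This gives $\alpha^6(\alpha-1)^3<K(2\alpha^2+\alpha+1)+\alpha^2(\alpha^2-1)$ with $K=\alpha^3-2\alpha^2+2\alpha+1$. That is exactly $(\alpha+1)p(\alpha)<0$ for the paper's octic $p$, i.e.\ $\alpha<\overline{\alpha}<2.012$. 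These five triangle inequalities are precisely the residuals $T_1,\dots,T_5$ of the paper's certificate. Its $\Phi$ is this same chain written as one positive combination, evaluated at the exact root $\overline{\alpha}$.
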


\begin{proof}
	Take $i_0 = i_1 = i$.
	Let \(\overline{\alpha}\) be the unique real root in \((2, \infty)\) of
	\[
	p(z)=z^8-4z^7+7z^6-8z^5+6z^4-4z^3+z^2-2z-1.
	\]
	Numerically, $\overline{\alpha}\approx 2.01174$.
	Suppose, for contradiction, that the $G_\alpha$ contains a directed cycle of length 6 for some
	\(\alpha\ge \overline{\alpha}\). Then $G_{\overline{\alpha}}$ must also contain the cycle.

	Define the following \(3\)-path residuals:
	\begin{align*}
		T_1&=d(i_2,c_2)+d(i,c_2)+d(i,c_1)-d(i_2,c_1),\\
		T_2&=d(i_3,c_4)+d(i,c_4)+d(i,c_2)-d(i_3,c_2),\\
		T_3&=d(i_4,c_4)+d(i_3,c_4)+d(i_3,c_3)-d(i_4,c_3),\\
		T_4&=d(i,c_2)+d(i_2,c_2)+d(i_2,c_3)-d(i,c_3),\\
		T_5&=d(i,c_4)+d(i_4,c_4)+d(i_4,c_5)-d(i,c_5).
	\end{align*}
	Then each \(T_j\ge 0\) by the \(3\)-path inequalities.

	Now define
	\[
	K=\overline{\alpha}^3-2\overline{\alpha}^2+2\overline{\alpha}+1
	\]
	and
	\[
	L=\overline{\alpha}^5(\overline{\alpha}-1)^3-(\overline{\alpha}+1)K.
	\]
	Define the deviation residuals:
	\begin{align*}
		\Delta_1&=d(i_2,c_1)-\overline{\alpha} d(i_2,c_2),\\
		\Delta_2&=d(i_2,c_2)-\overline{\alpha} d(i_2,c_3),\\
		\Delta_3&=d(i_3,c_2)-\overline{\alpha} d(i_3,c_3),\\
		\Delta_4&=d(i_3,c_3)-\overline{\alpha} d(i_3,c_4),\\
		\Delta_5&=d(i_4,c_3)-\overline{\alpha} d(i_4,c_4),\\
		\Delta_6&=d(i_4,c_4)-\overline{\alpha} d(i_4,c_5),\\
		\Delta_7&=d(i,c_5)-\overline{\alpha} d(i,c_0),\\
		\Delta_8&=d(i,c_0)-\overline{\alpha} d(i,c_1),\\
		\Delta_9&=d(i,c_1)-\overline{\alpha} d(i,c_2),\\
		\Delta_{10}&=d(i,c_3)-\overline{\alpha} d(i,c_4).
	\end{align*}
	Then each \(\Delta_j>0\) by the strict \(\overline{\alpha}\)-deviation inequalities.

	Now consider the weighted sum
	\begin{align*}
		\Phi={}&
		(\overline{\alpha}+1)K(T_1+\Delta_1)\\
		&+\overline{\alpha}^2(\overline{\alpha}-1)(\overline{\alpha}+1)(T_2+\Delta_3+\Delta_4)\\
		&+\overline{\alpha}^2(\overline{\alpha}-1)^2(\overline{\alpha}+1)(T_3+\Delta_5)\\
		&+\overline{\alpha}(\overline{\alpha}-1)K(T_4+\Delta_{10})\\
		&+\overline{\alpha}^3(\overline{\alpha}-1)^3(T_5+\Delta_7)\\
		&+(\overline{\alpha}-1)K\Delta_2\\
		&+\overline{\alpha}^2(\overline{\alpha}-1)^3\Delta_6\\
		&+\overline{\alpha}^4(\overline{\alpha}-1)^3\Delta_8\\
		&+L\Delta_9.
	\end{align*}

	All coefficients appearing in \(\Phi\) are positive. Indeed
	\(\overline{\alpha}>2\), so \(K, L>0\).

	Therefore, since each \(T_j\ge 0\) and each \(\Delta_j>0\), we have
	\[
	\Phi>0.
	\]

	On the other hand, expanding \(\Phi\) as a linear combination of the variables
	\(d(i_r,c_s)\) gives
	\[
	\Phi=-(\overline{\alpha}+1)p(\overline{\alpha})d(i,c_2).
	\]
	Since \(p(\overline{\alpha})=0\), this implies
	\[
	\Phi=0,
	\]
	contradicting \(\Phi>0\). Hence $G_\alpha$ does not contain a 6-cycle for every
	\(\alpha\ge \overline{\alpha}\).
\end{proof}

\begin{lemma}\label{lem:0134-1245}
	Let $G_\alpha$ contain a directed cycle $c_0c_1\ldots c_5c_0$ of length 6.
	If there exist $i \in S_0 \cap S_1 \cap S_3 \cap S_4$ and $i' \in S_1 \cap S_2 \cap S_4 \cap S_5$, then $\alpha < 2$.
\end{lemma}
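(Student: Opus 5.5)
The plan is to prove the contrapositive in the same Farkas-certificate style as the case $i' \in S_2\cap S_4\cap S_0$ of \Cref{thm:T=5} and \Cref{lem:0124}. Assume, towards a contradiction, that $G_\alpha$ contains the $6$-cycle for some $\alpha \ge 2$. Then every deviation inequality also holds with $\alpha$ replaced by $2$, so it suffices to derive a contradiction at $\alpha=2$ with rational multipliers.

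First I would record the deviation chains. Since $i \in S_0\cap S_1\cap S_3\cap S_4$, agent $i$ gives
\[
d(i,c_5) > \alpha\, d(i,c_0) > \alpha^2 d(i,c_1), \qquad d(i,c_2) > \alpha\, d(i,c_3) > \alpha^2 d(i,c_4).
\]
Since $i' \in S_1\cap S_2\cap S_4\cap S_5$, agent $i'$ gives
\[
d(i',c_0) > \alpha\, d(i',c_1) > \alpha^2 d(i',c_2), \qquad d(i',c_3) > \alpha\, d(i',c_4) > \alpha^2 d(i',c_5).
\]
The configuration is invariant under the index shift $t\mapsto t+3$, which fixes both $i$ and $i'$. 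The shift $t\mapsto t+1$ maps the pattern of $i$ onto that of $i'$. I will exploit this symmetry so that the certificate consists of two mirrored halves.

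The natural quantities are $x_1=d(i,c_1)$, $x_2=d(i',c_2)$, $x_4=d(i,c_4)$ and $x_5=d(i',c_5)$. One half of the chain is the argument of \Cref{lem:deviation-step}:
\[
\alpha^2 x_4 < d(i,c_2) \le d(i,c_1)+d(i',c_1)+d(i',c_2) < x_1 + (\alpha+1)x_2 .
\]
Applying the shift by $3$ gives $\alpha^2 x_1 < x_4 + d(i',c_4) + x_5$. I would try to close the system using only $i,i'$ and $3$-path residuals of the form $d(a,c)+d(b,c)+d(b,c')-d(a,c')$ with $a,b\in\{i,i'\}$. This yields four inequalities among $x_1,x_2,x_4,x_5$ together with auxiliary distances such as $d(i',c_4)$, $d(i,c_2)$, $d(i',c_1)$ and $d(i,c_5)$. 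Each of these auxiliary distances is sandwiched between a deviation lower bound and a triangle upper bound.

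If the two agents alone do not suffice, I would bring in the overlap agents $i_2\in S_2\cap S_3$ and $i_5\in S_5\cap S_0$. These are exactly the pairs not covered by a common $\{i,i'\}$ membership, and they are mirror images of each other under the shift by $3$.

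The concrete steps are as follows.
\begin{enumerate}
\item List all strict deviation residuals $\Delta_j$ at $\alpha=2$ for $i$, $i'$, $i_2$ and $i_5$.
\item List the candidate $3$-path residuals $T_j\ge 0$ among these agents.
\item Solve the small LP that searches for nonnegative multipliers, with at least one positive weight on some $\Delta_j$, such that $\sum \lambda_j T_j + \sum \mu_j \Delta_j$ has all coefficients of the variables $d(\cdot,\cdot)$ equal to zero.
\item Symmetrize the solution under $t\mapsto t+3$ and record it as an explicit identity $\Phi=0$.
\end{enumerate}
Since $\Phi>0$ would then follow from the signs of the residuals, this is the desired contradiction.

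The main obstacle is finding a certificate that cancels exactly at $\alpha=2$ with positive weights. The threshold here is the clean value $2$ rather than an algebraic root, which suggests the certificate should be short, much like the three-cycle argument of \Cref{thm:T=3}. I would first try to use only $i$ and $i'$ together with the symmetric pair of triangle chains above, and reduce to an inequality of the form $(\alpha-1)\,(\cdots) < (\cdots)$ that fails at $\alpha=2$. I would add $i_2,i_5$ only if the LP reports infeasibility.
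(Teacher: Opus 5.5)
You have the right framework, the same one the paper uses: a nonnegative combination of $3$-path residuals and strict deviation residuals, with the overlap agents $i_2\in S_2\cap S_3$ and $i_5\in S_5\cap S_0$ included. The gap is that the proposal stops where the proof begins. ``Solve the LP and record the certificate'' describes how to look for a proof. The existence of suitable multipliers is exactly what has to be shown, and nothing is established until they are written down and the cancellation is checked.

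Two of your heuristic steps are also wrong. In your displayed chain, the step $d(i',c_1)<\alpha\,d(i',c_2)$ goes the wrong way, since $i'\in S_2$ gives $d(i',c_1)>\alpha\,d(i',c_2)$. And the ``$i,i'$ first'' attempt cannot succeed. Take $\bigl(d(i,c_t)\bigr)_{t=0}^{5}=(21,10,43,21,10,43)$ and $\bigl(d(i',c_t)\bigr)_{t=0}^{5}=(51,25,12,51,25,12)$, completed by shortest paths. This is a valid metric because $|d(i,c_t)-d(i',c_t)|\le 35=\min_s\bigl(d(i,c_s)+d(i',c_s)\bigr)$. In it, all eight deviation inequalities of $i$ and $i'$ hold at $\alpha=2$, so no certificate using only these two agents exists. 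The agents $i_2$ and $i_5$ are therefore essential, not a fallback.

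The missing certificate is short, with all multipliers equal to $1$. Put $i_0=i_3=i$ and $i_1=i_4=i'$.
\begin{itemize}
\item For each $t\in[6]_0$, agent $i_t$ supplies the residuals $d(i_t,c_{t-1})-\alpha d(i_t,c_t)>0$ and $d(i_t,c_t)-\alpha d(i_t,c_{t+1})>0$. These twelve are exactly the deviation inequalities available.
\item Each far distance $d(i_t,c_{t-1})$ is bounded by a $3$-path whose legs all have the form $d(i_s,c_s)$ or $d(i_s,c_{s+1})$:
\begin{align*}
d(i,c_5)&\le d(i,c_1)+d(i',c_1)+d(i',c_5), & d(i,c_2)&\le d(i,c_4)+d(i',c_4)+d(i',c_2),\\
d(i',c_0)&\le d(i',c_5)+d(i_5,c_5)+d(i_5,c_0), & d(i',c_3)&\le d(i',c_2)+d(i_2,c_2)+d(i_2,c_3),\\
d(i_2,c_1)&\le d(i_2,c_3)+d(i,c_3)+d(i,c_1), & d(i_5,c_4)&\le d(i_5,c_0)+d(i,c_0)+d(i,c_4).
\end{align*}
\end{itemize}
Add all eighteen residuals with weight $1$. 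Each far distance cancels. Each $d(i_t,c_t)$ gets coefficient $-\alpha+1+1$, and each $d(i_t,c_{t+1})$ gets $-\alpha+2$. This gives
\[
0<\Phi=(2-\alpha)\sum_{t\in[6]_0}\bigl(d(i_t,c_t)+d(i_t,c_{t+1})\bigr),
\]
which forces $\alpha<2$ directly, with no need to pass to $\alpha=2$. This is exactly the paper's proof.
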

\begin{proof}
	Take $i_0 = i_3 = i$, $i_1 = i_4 = i'$ and $i_2 \in S_2 \cap S_3$, $i_5 \in S_5 \cap S_0$.
	Define the following $3$-path residuals:
	\begin{align*}
		T_1&=d(i,c_4)+d(i',c_4)+d(i',c_2)-d(i,c_2),\\
		T_2&=d(i,c_1)+d(i',c_1)+d(i',c_5)-d(i,c_5),\\
		T_3&=d(i',c_5)+d(i_5,c_5)+d(i_5,c_0)-d(i',c_0),\\
		T_4&=d(i',c_2)+d(i_2,c_2)+d(i_2,c_3)-d(i',c_3),\\
		T_5&=d(i_2,c_3)+d(i,c_3)+d(i,c_1)-d(i_2,c_1),\\
		T_6&=d(i_5,c_0)+d(i,c_0)+d(i,c_4)-d(i_5,c_4).
	\end{align*}
	Then each $T_j\ge 0$ by the $3$-path inequalities.

	Now define the deviation residuals:
	\begin{align*}
		\Delta_1&=d(i,c_5)-\alpha d(i,c_0),\\
		\Delta_2&=d(i,c_0)-\alpha d(i,c_1),\\
		\Delta_3&=d(i,c_2)-\alpha d(i,c_3),\\
		\Delta_4&=d(i,c_3)-\alpha d(i,c_4),\\
		\Delta_5&=d(i',c_0)-\alpha d(i',c_1),\\
		\Delta_6&=d(i',c_1)-\alpha d(i',c_2),\\
		\Delta_7&=d(i',c_3)-\alpha d(i',c_4),\\
		\Delta_8&=d(i',c_4)-\alpha d(i',c_5),\\
		\Delta_9&=d(i_2,c_1)-\alpha d(i_2,c_2),\\
		\Delta_{10}&=d(i_2,c_2)-\alpha d(i_2,c_3),\\
		\Delta_{11}&=d(i_5,c_4)-\alpha d(i_5,c_5),\\
		\Delta_{12}&=d(i_5,c_5)-\alpha d(i_5,c_0).
	\end{align*}
	Then each $\Delta_j>0$ by the strict deviation inequalities.

	Now consider the sum
	\begin{align*}
		\Phi
		={}&
		T_1+T_2+T_3+T_4+T_5+T_6\\
		&+\Delta_1+\Delta_2+\Delta_3+\Delta_4+\Delta_5+\Delta_6\\
		&+\Delta_7+\Delta_8+\Delta_9+\Delta_{10}+\Delta_{11}+\Delta_{12}.
	\end{align*}

	Since each $T_j\ge 0$ and each $\Delta_j>0$, we have
	$$
	\Phi>0.
	$$

	On the other hand, expanding $\Phi$ as a linear combination of the variables $d(i_r,c_s)$ gives
	\begin{align*}
		\Phi
		={}&(2-\alpha)\Bigl(
		d(i,c_0)+d(i,c_1)+d(i,c_3)+d(i,c_4)\\
		&\qquad\qquad
		+d(i',c_1)+d(i',c_2)+d(i',c_4)+d(i',c_5)\\
		&\qquad\qquad
		+d(i_2,c_2)+d(i_2,c_3)+d(i_5,c_0)+d(i_5,c_5)
		\Bigr).
	\end{align*}

	Since distances are nonnegative, this implies $\alpha < 2$.
\end{proof}

\begin{lemma}\label{lem:0134-025}
	Let $G_\alpha$ contain a directed cycle $c_0c_1\ldots c_5c_0$ of length 6.
	If there exist $i \in S_0 \cap S_1 \cap S_3 \cap S_4$ and $i' \in S_0 \cap S_2 \cap S_5$, then $\alpha < 2.0789$.
\end{lemma}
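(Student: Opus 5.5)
We will follow the same template as \Cref{lem:0124} and \Cref{lem:0134-1245}: fix representative overlap agents, collect the valid strict deviation inequalities and $3$-path triangle inequalities among them, and exhibit a nonnegative combination (a Farkas certificate) that sums to zero at a threshold value $\overline{\alpha}\approx 2.0789$, contradicting strict positivity.

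\textbf{Choice of agents.} We use $i\in S_0\cap S_1\cap S_3\cap S_4$ as $i_0=i_3=i$. We use $i'\in S_0\cap S_2\cap S_5$ as $i_5=i'$, since $i'\in S_5\cap S_0$. We also use $i_1\in S_1\cap S_2$, $i_2\in S_2\cap S_3$ and $i_4\in S_4\cap S_5$; these exist by the majority condition. The extra memberships of $i'$ (in $S_2$ as well as $S_5,S_0$) supply the additional deviation inequalities $d(i',c_1)>\alpha d(i',c_2)$, $d(i',c_4)>\alpha d(i',c_5)$ and $d(i',c_5)>\alpha d(i',c_0)$. The extra memberships of $i$ supply $d(i,c_{t-1})>\alpha d(i,c_t)$ for $t\in\{0,1,3,4\}$.

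\textbf{Heuristic chain.} Before writing the certificate, we would build a chain bound in the style of \Cref{thm:general-case}, to guess which inequalities are tight. The step $i_0=i_3=i$ gives a jump from $c_0$ to $c_1$ via \Cref{lem:cases-full}(\ref{item:full-3-consec}) and likewise from $c_3$ to $c_4$. Because $i'\in S_5\cap S_0\cap S_2$, it satisfies a variant of the hypothesis of \Cref{lem:cases-full}(\ref{item:full-2-jump-1}) (membership three steps apart, cyclically $S_5$ and $S_2$). This lets us bound $d(i',c_2)$ in terms of the distances at $c_0,c_1$, or symmetrically bound $d(\cdot,c_5)$ through the route $c_2\to c_5$. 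Composing these bounds with \Cref{lem:deviation-step} around the cycle gives a polynomial inequality whose root is roughly the target. The inequalities that stay tight in this chain indicate which residuals receive nonzero multipliers.

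\textbf{Exact certificate.} To get the exact value, we would solve the LP of \Cref{fig:LP-distance-search-fast} restricted to these agents and memberships, binary searching in $\alpha$. At the boundary, the dual solution gives multipliers $\lambda_j$ on $3$-path residuals $T_j$, such as $d(i',c_2)+d(i,c_2)+d(i,c_1)-d(i',c_1)$ and $d(i_5,c_0)+d(i,c_0)+d(i,c_4)-d(i_5,c_4)$. It also gives multipliers on deviation residuals $\Delta_j$. We then express these multipliers as polynomials in $\overline{\alpha}$, where $\overline{\alpha}$ is the root of a polynomial $p$ read off from the minimal polynomial of the LP threshold.

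\textbf{Contradiction and main obstacle.} Given such multipliers, the argument is to assume $\alpha\ge\overline{\alpha}$ and pass to $G_{\overline{\alpha}}$, as in \Cref{lem:0124}. Then $\Phi=\sum\lambda_jT_j+\sum\mu_j\Delta_j>0$, while expanding $\Phi$ coefficientwise gives $0$ modulo $p(\overline{\alpha})$. The main obstacle is finding this exact algebraic certificate and checking that every multiplier is positive at $\overline{\alpha}$. We would verify the cancellation identities symbolically in $\mathbb{Q}[\alpha]/p(\alpha)$. If the certificate turns out messy, a fallback is to prove only the weaker numeric bound: take $\overline{\alpha}=2.0789$ as a rational and use rational multipliers for which $\Phi$ expands to a nonpositive combination of distances.
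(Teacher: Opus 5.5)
\textbf{Verdict: there is a genuine gap.} Your proposal follows the paper's approach: deviation residuals and $3$-path residuals combined with multipliers that are polynomials in an algebraic threshold $\overline{\alpha}$. But you never produce the certificate, and for this lemma the certificate is the whole proof. Asserting that an LP solve at the boundary will return positive dual multipliers presupposes that the deviation and triangle inequalities for $i,i',i_1,i_4$ are jointly infeasible for every $\alpha\ge 2.0789$. That is exactly the statement to be proved. Nothing in your text fixes the threshold polynomial or the multipliers, or even shows that the threshold lies below $2.0789$. The plan would work if carried out. The paper's certificate uses only $i$, $i'$, $i_1\in S_1\cap S_2$ and $i_4\in S_4\cap S_5$, and $\overline{\alpha}\approx 2.07887$ is the root above $2$ of $p(z)=z^8-4z^7+7z^6-8z^5+6z^4-6z^3+3z^2-2z+1$.

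Your heuristic chain is also wrong as written, in two places. First, \Cref{lem:cases-full}~(\ref{item:full-3-consec}) needs $i_t=i_{t+1}$, i.e.\ one agent in three consecutive sets. Neither $i$ nor $i'$ is known to be such an agent, and $d(i,c_1)<d(i,c_0)/\alpha$ says nothing about $d(i_1,c_1)$. Second, \Cref{lem:cases-full}~(\ref{item:full-2-jump-1}) applied to $i'$ bounds $d(i_1,c_1)$, not $d(i',c_2)$.

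Corrected, that chain closes the gap by hand, with no LP. Write $D_t=d(i_t,c_t)$ with $i_0=i_3=i$ and $i_5=i'$, and assume $\alpha>1$ (otherwise there is nothing to prove).
\begin{enumerate}
\item \Cref{lem:deviation-step} gives $D_4<\frac{\alpha+1}{\alpha(\alpha-1)}D_3$.
\item \Cref{lem:cases-full}~(\ref{item:full-2-jump-1}) applies at $t=3$, because $i_3=i\in S_0$. This gives $D_5<\frac{D_3+D_4}{\alpha(\alpha-1)}$.
\item The same item applies at $t=5$, because $i_5=i'\in S_2$. This gives $D_1<\frac{D_5+D_0}{\alpha(\alpha-1)}$.
\item Routing $i$ through $c_4,i_4,c_5$ (using $i\in S_0\cap S_4$ and $i_4\in S_5$) gives $D_0<\frac{D_3}{\alpha^2}+\frac{\alpha+1}{\alpha^2}D_4$.
\item Routing $i$ through $c_1,i_1,c_2$ (using $i\in S_1\cap S_3$ and $i_1\in S_2$) gives $D_3<\frac{D_0}{\alpha^2}+\frac{\alpha+1}{\alpha^2}D_1$.
\end{enumerate}
All coefficients are positive, so substituting yields $D_3<G(\alpha)D_3$ with
\[
1-G(\alpha)=\frac{(\alpha+1)\,p(\alpha)}{\alpha^6(\alpha-1)^3}.
\]
Hence $p(\alpha)<0$. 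Since $p(2)<0$ and $p$ has a unique root above $2$, $p\ge 0$ on $[\overline{\alpha},\infty)$. Therefore $\alpha<\overline{\alpha}<2.0789$.

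This derivation uses exactly the six $3$-path and eleven deviation inequalities of the paper's Farkas certificate and arrives at the same polynomial $p$. It is a readable replacement for that certificate, not a weaker bound.
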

\begin{proof}
	Take $i_0 = i_3 = i$, $i_5 = i'$.

	Let \(\overline{\alpha}\) be the unique real root \(>2\) of
	\[
	p(z)=z^8-4z^7+7z^6-8z^5+6z^4-6z^3+3z^2-2z+1.
	\]
	Numerically, $\overline{\alpha}\approx 2.07887$.

	Suppose, for contradiction, that the $G_\alpha$ contains a directed cycle of length 6 for some
	\(\alpha\ge \overline{\alpha}\). Then $G_{\overline{\alpha}}$ must also contain the cycle.

	Define the following \(3\)-path residuals:
	\begin{align*}
		T_1&=d(i_1,c_2)+d(i',c_2)+d(i',c_0)-d(i_1,c_0),\\
		T_2&=d(i_4,c_4)+d(i,c_4)+d(i,c_3)-d(i_4,c_3),\\
		T_3&=d(i',c_0)+d(i,c_0)+d(i,c_1)-d(i',c_1),\\
		T_4&=d(i',c_0)+d(i,c_0)+d(i,c_4)-d(i',c_4),\\
		T_5&=d(i,c_1)+d(i_1,c_1)+d(i_1,c_2)-d(i,c_2),\\
		T_6&=d(i,c_4)+d(i_4,c_4)+d(i_4,c_5)-d(i,c_5).
	\end{align*}
	Then each \(T_j\ge 0\) by the \(3\)-path inequalities.

	Now define the deviation residuals:
	\begin{align*}
		\Delta_1&=d(i_1,c_0)-\overline{\alpha}d(i_1,c_1),\\
		\Delta_2&=d(i_1,c_1)-\overline{\alpha}d(i_1,c_2),\\
		\Delta_3&=d(i_4,c_3)-\overline{\alpha}d(i_4,c_4),\\
		\Delta_4&=d(i',c_1)-\overline{\alpha}d(i',c_2),\\
		\Delta_5&=d(i',c_4)-\overline{\alpha}d(i',c_5),\\
		\Delta_6&=d(i,c_2)-\overline{\alpha}d(i,c_3),\\
		\Delta_7&=d(i,c_5)-\overline{\alpha}d(i,c_0),\\
		\Delta_8&=d(i_4,c_4)-\overline{\alpha}d(i_4,c_5),\\
		\Delta_9&=d(i',c_5)-\overline{\alpha}d(i',c_0),\\
		\Delta_{10}&=d(i,c_0)-\overline{\alpha}d(i,c_1),\\
		\Delta_{11}&=d(i,c_3)-\overline{\alpha}d(i,c_4).
	\end{align*}
	Then each \(\Delta_j>0\) by the strict
	\(\overline{\alpha}\)-deviation inequalities.

	Now define
	\[
	P=\overline{\alpha}^3-\overline{\alpha}^2+2\overline{\alpha}-1
	\]
	and
	\[
	Q=\overline{\alpha}^5-3\overline{\alpha}^4
	+4\overline{\alpha}^3-3\overline{\alpha}^2
	+3\overline{\alpha}-1.
	\]
	Then, $P,Q > 0$ as \(\overline{\alpha}>2\).

	Consider the weighted sum
	\begin{align*}
		\Phi={}&
		\overline{\alpha}^4(\overline{\alpha}-1)^2
		\left(T_1+\Delta_1+\Delta_2\right)\\
		&+(\overline{\alpha}+1)P
		\left(T_2+\Delta_3\right)\\
		&+\overline{\alpha}^3(\overline{\alpha}-1)^2
		\left(T_3+\Delta_4\right)\\
		&+\overline{\alpha}^3(\overline{\alpha}-1)
		\left(T_4+\Delta_5\right)\\
		&+\overline{\alpha}^4(\overline{\alpha}-1)^3
		\left(T_5+\Delta_6\right)\\
		&+\overline{\alpha}(\overline{\alpha}-1)P
		\left(T_6+\Delta_7\right)\\
		&+(\overline{\alpha}-1)P\Delta_8\\
		&+\overline{\alpha}^4(\overline{\alpha}-1)\Delta_9\\
		&+\overline{\alpha}^2(\overline{\alpha}-1)^2
		(\overline{\alpha}^2-\overline{\alpha}+1)\Delta_{10}\\
		&+\overline{\alpha}(\overline{\alpha}-1)Q\Delta_{11}.
	\end{align*}

	All coefficients appearing in \(\Phi\) are positive.
	Therefore,
	\[
	\Phi>0.
	\]

	On the other hand, expanding \(\Phi\) as a linear combination of the variables
	\(d(i_r,c_s)\) gives
	\[
	\Phi
	=
	-p(\overline{\alpha})
	\left(
	d(i,c_3)+d(i,c_4)
	\right).
	\]
	Since \(p(\overline{\alpha})=0\), this implies
	\[
	\Phi=0,
	\]
	contradicting \(\Phi>0\).  Hence $G_\alpha$ does not contain a 6-cycle for every
	\(\alpha\ge \overline{\alpha}\).
\end{proof}

\begin{lemma}\label{lem:0134-125-345}
	Let $G_\alpha$ contain a directed cycle $c_0c_1\ldots c_5c_0$ of length 6.
	If there exist $i \in S_0 \cap S_1 \cap S_3 \cap S_4$, $i' \in S_1 \cap S_2 \cap S_5$, and $i'' \in S_3 \cap S_4 \cap S_5$, then $\alpha < 2$.
\end{lemma}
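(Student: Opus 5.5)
The proof will follow the same Farkas-certificate pattern as \Cref{lem:0134-1245}. Set $i_0=i_3=i$, $i_1=i'$, $i_4=i''$, and pick auxiliary overlap agents $i_2\in S_2\cap S_3$ and $i_5\in S_5\cap S_0$. The goal is a nonnegative combination $\Phi$ of two kinds of residuals. The first kind are \emph{$3$-path residuals} $d(a,c)+d(b,c)+d(b,c')-d(a,c')\ge 0$, which follow from the triangle inequality. The second kind are strict \emph{deviation residuals} $d(a,c_{t-1})-\alpha\, d(a,c_t)>0$, available whenever $a\in S_t$. The combination should be such that, after expansion in the variables $d(i_r,c_s)$, it collapses to $(2-\alpha)$ times a nonnegative combination of distances. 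Then $0<\Phi=(2-\alpha)\cdot(\text{nonneg.})$ forces $\alpha<2$.

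The available deviation residuals are as follows.
\begin{itemize}
\item For $i$: the chains $c_5\to c_0\to c_1$ (from $S_0,S_1$) and $c_2\to c_3\to c_4$ (from $S_3,S_4$).
\item For $i'$: the chain $c_0\to c_1\to c_2$ (from $S_1,S_2$) and the single step $c_4\to c_5$ (from $S_5$).
\item For $i''$: the chain $c_2\to c_3\to c_4\to c_5$.
\item For $i_2$: the chain $c_1\to c_2\to c_3$.
\item For $i_5$: the chain $c_4\to c_5\to c_0$.
\end{itemize}
The $3$-paths are used to link the endpoint of one chain, for one agent, to the start of another chain for a different agent through a shared center. This is exactly as $T_1,\dots,T_6$ did in \Cref{lem:0134-1245}. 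Candidate links are:
\begin{itemize}
\item $d(i',c_5)\le d(i',c_2)+d(i'',c_2)+d(i'',c_5)$, which connects the two ends of $i'$'s gap through $i''$;
\item $d(i,c_2)\le d(i,c_1)+d(i_2,c_1)+d(i_2,c_2)$;
\item $d(i,c_5)\le d(i,c_4)+d(i'',c_4)+d(i'',c_5)$;
\item $d(i'',c_2)\le d(i'',c_3)+d(i,c_3)+d(i,c_2)$, or a link through $i_2$ at $c_3$;
\item $d(i_5,c_4)\le d(i_5,c_0)+d(i,c_0)+d(i,c_4)$;
\item analogous links through $i'$ at $c_1$ and $c_0$.
\end{itemize}

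Concretely, I would fix $\alpha=2$ and set up the LP whose variables are the multipliers on all these residuals. The constraint is that every coefficient of $d(i_r,c_s)$ vanishes, after replacing $\alpha d(\cdot)$ by $2d(\cdot)$ in the cancellation pattern. I expect, as in \Cref{lem:0134-1245}, a solution with small integer (likely $0/1/2$) multipliers. Re-expanding with symbolic $\alpha$ then gives $\Phi=(2-\alpha)\sum \lambda_{r,s}\, d(i_r,c_s)$ with $\lambda_{r,s}\ge 0$, since each deviation residual contributes $-\alpha$ where the $\alpha=2$ identity used $-2$. Strict positivity of at least one deviation residual gives $\Phi>0$, and hence $\alpha<2$.

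The main obstacle is finding the right set of $3$-path residuals so that the coefficients balance exactly. Agent $i'$ has a gap at $S_3,S_4$, and agent $i''$ is absent from $S_0,S_1,S_2$. These gaps must be bridged by triangle inequalities whose "cost" terms $d(b,c)+d(b,c')$ are precisely the quantities that the deviation residuals of other agents can absorb with factor $2$. I would first try the symmetric-looking choice of one $3$-path per gap, with all multipliers equal to $1$, mirroring \Cref{lem:0134-1245}. If that fails, I would solve the LP numerically at $\alpha=2$ to read off the certificate and verify the cancellation symbolically. If the residual at $\alpha=2$ is not exactly zero, I would fall back to a polynomial-threshold certificate as in \Cref{lem:0124}; since the claimed bound is $2$, I expect the clean form to succeed.
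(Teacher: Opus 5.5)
Your framework is the right one: a nonnegative combination of triangle residuals and strict deviation residuals that cancels at $\alpha=2$. You may close either by re-expanding in symbolic $\alpha$, as you propose, or, as the paper does, by noting that $G_2$ contains the cycle whenever $G_\alpha$ does for some $\alpha\ge 2$. But the proposal stops where the proof begins. For this lemma the certificate \emph{is} the proof, and you neither exhibit one nor argue that the LP at $\alpha=2$ is infeasible. Your search guidance also points away from the certificate that works:
\begin{itemize}
\item The paper's certificate never uses $i$ or $i_2$. So the hypothesis on $i$ is superfluous, and bridging the gaps of $i'$ and $i''$ through $i$ and $i_2$ is a detour.
\item None of the links you list explicitly appear in it.
\item Its multipliers go up to $38$, not $0/1/2$.
\item It also uses the plain nonnegativity $d(i'',c_5)\ge 0$, which is not among your listed residuals.
\end{itemize}

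The missing idea is to route everything through $c_5$, the center to which $i'$, $i''$ and $i_5\in S_5\cap S_0$ all deviate. Suppose $\alpha\ge 2$ and put $u=d(i'',c_5)\ge 0$ and $x=d(i'',c_4)$.
\begin{itemize}
\item Since $i''\in S_3\cap S_4\cap S_5$, we have $x>2u$ and $d(i'',c_2)>2d(i'',c_3)>4x$.
\item Since $i'\in S_5$, $2d(i',c_5)<d(i',c_4)\le d(i',c_5)+u+x$, so $d(i',c_5)<u+x$.
\item The same argument with $i_5\in S_5$ gives $d(i_5,c_5)<u+x$, and $i_5\in S_0$ gives $d(i_5,c_0)<\tfrac12 d(i_5,c_5)$.
\item Since $i'\in S_1\cap S_2$, $4d(i',c_2)<d(i',c_0)\le d(i',c_5)+d(i_5,c_5)+d(i_5,c_0)<\tfrac52(u+x)$.
\end{itemize}
Finally,
\[
4x<d(i'',c_2)\le d(i'',c_5)+d(i',c_5)+d(i',c_2)<u+(u+x)+\tfrac58(u+x)=\tfrac{21}{8}u+\tfrac{13}{8}x,
\]
so $19x<21u$. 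This contradicts $19x>38u\ge 21u$. Multiplying through by $16$, this chain is exactly the paper's weighted sum $\Phi$. It has multipliers $4,20,34,16,6$ on $T_1,\dots,T_5$ (where $T_3=d(i'',c_5)$) and $4,8,20,16,38,32,2,6$ on $\Delta_1,\dots,\Delta_8$, and it vanishes identically at $\alpha=2$.
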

\begin{proof}
	Take $i_0 = i$, $i_1 = i'$, and $i_3 = i_4 = i''$.

	Suppose, for contradiction, that the $G_\alpha$ contains a directed cycle of length 6 for some \(\alpha\ge 2\). Then $G_{2}$ must also contain the cycle.

	Define the following \(3\)-path residuals:
	\begin{align*}
		T_1&=d(i',c_5)+d(i_5,c_5)+d(i_5,c_0)-d(i',c_0),\\
		T_2&=d(i',c_5)+d(i'',c_5)+d(i'',c_4)-d(i',c_4),\\
		T_3&=d(i'',c_5),\\
		T_4&=d(i'',c_5)+d(i',c_5)+d(i',c_2)-d(i'',c_2),\\
		T_5&=d(i_5,c_5)+d(i'',c_5)+d(i'',c_4)-d(i_5,c_4).
	\end{align*}
	Then each \(T_r\ge 0\) by the \(3\)-path inequalities or distances being nonnegative.

	Now define the \(2\)-deviation residuals:
	\begin{align*}
		\Delta_1&=d(i',c_0)-2d(i',c_1),\\
		\Delta_2&=d(i',c_1)-2d(i',c_2),\\
		\Delta_3&=d(i',c_4)-2d(i',c_5),\\
		\Delta_4&=d(i'',c_2)-2d(i'',c_3),\\
		\Delta_5&=d(i'',c_4)-2d(i'',c_5),\\
		\Delta_6&=d(i'',c_3)-2d(i'',c_4),\\
		\Delta_7&=d(i_5,c_5)-2d(i_5,c_0),\\
		\Delta_8&=d(i_5,c_4)-2d(i_5,c_5).
	\end{align*}
	Then each \(\Delta_s>0\) by the strict deviation inequalities.

	Consider the weighted sum
	\begin{align*}
		\Phi={}&
		4T_1+20T_2+34T_3+16T_4+6T_5\\
		&+4\Delta_1+8\Delta_2+20\Delta_3+16\Delta_4\\
		&+38\Delta_5+32\Delta_6+2\Delta_7+6\Delta_8.
	\end{align*}
	All coefficients are positive, each \(T_r\ge 0\), and each \(\Delta_s>0\).
	Hence
	\[
	\Phi>0.
	\]

	On the other hand, expanding \(\Phi\) gives exact cancellation of every
	variable:
	\[
	\Phi=0.
	\]
	This is a contradiction. Therefore $\alpha < 2$.
\end{proof}

\begin{lemma}\label{lem:0134-125-235}
	Let $G_\alpha$ contain a directed cycle $c_0c_1\ldots c_5c_0$ of length 6.
	If there exist $i \in S_0 \cap S_1 \cap S_3 \cap S_4$, $i' \in S_1 \cap S_2 \cap S_5$, and $i'' \in S_2 \cap S_3 \cap S_5$, then $\alpha < 2.0567$.
\end{lemma}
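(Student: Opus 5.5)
Following the pattern of \Cref{lem:0124} and \Cref{lem:0134-025}, I would prove this with a Farkas-type certificate evaluated at a threshold $\overline{\alpha}\approx 2.0567$. First I fix the overlap agents to exploit the given memberships. Since $i\in S_0\cap S_1$, I take $i_0=i$; since $i\in S_3\cap S_4$, I also take $i_3=i$. Since $i'\in S_1\cap S_2$, I take $i_1=i'$, and since $i''\in S_2\cap S_3$, I take $i_2=i''$. The remaining overlap agents are $i_4\in S_4\cap S_5$ and $i_5\in S_5\cap S_0$. The extra memberships $i'\in S_5$ and $i''\in S_5$ are the features that a plain application of \Cref{lem:deviation-step} and \Cref{lem:cases-full} would ignore, and I expect the certificate to use them.

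As a sanity check before building the certificate, I would chain the elementary bounds. From $i''\in S_2\cap S_3\cap S_5$ and $i'\in S_1\cap S_2\cap S_5$, \Cref{lem:cases-full}~(\ref{item:full-2-jump-1}) does not apply directly, because it needs a jump of exactly three. A weaker chain using \Cref{lem:deviation-step} around the cycle, with the collapse $i_3=i$ giving a factor of $1/\alpha$ via the membership $i\in S_3\cap S_4$, only yields a bound near $2.1$. This confirms that a genuine LP certificate is needed to reach $2.0567$.

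The main step is to define $\overline{\alpha}$ as the relevant real root greater than $2$ of an explicit polynomial $p$, which I would obtain by running the LP of \Cref{fig:LP-distance-search-fast} with this fixed membership pattern. The variables are $d(i_r,c_s)$ for the five agents $i,i',i'',i_4,i_5$, the constraints are the strict deviation inequalities for each known membership, and the 3-path triangle inequalities $d(a,c)\le d(a,c')+d(b,c')+d(b,c)$ are also included. I would then read off the dual solution. It should give 3-path residuals $T_j\ge 0$, for example the ones linking $i$ and $i''$ through $c_3$, $i'$ and $i''$ through $c_2$ and $c_5$, $i_4$ and $i$ through $c_4$, and $i_5$ and $i'$ through $c_5$. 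It should also give deviation residuals $\Delta_j=d(a,c_{t-1})-\overline{\alpha}\,d(a,c_t)>0$, together with multipliers that are polynomials in $\overline{\alpha}$ and positive for $\overline{\alpha}>2$.

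Finally I verify the certificate. I set $\Phi=\sum\lambda_jT_j+\sum\mu_j\Delta_j>0$, then expand $\Phi$ and reduce the coefficients modulo $p$, checking that it equals $-c\cdot p(\overline{\alpha})\cdot(\text{nonnegative combination of distances})=0$. This is a contradiction, which gives the result for every $\alpha\ge\overline{\alpha}$ by monotonicity of the cycle in $\alpha$. The obstacle is entirely in finding exact multipliers, that is, rationalizing the LP dual into polynomials in $\overline{\alpha}$. I would try first to guess the structure from which residuals are tight in the numerical dual and then solve the coefficient-cancellation identities symbolically in $\mathbb{Q}[\alpha]/p(\alpha)$, as was done for \Cref{thm:T=5}.
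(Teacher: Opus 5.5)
You follow the paper's strategy: the same choice $i_0=i_3=i$, $i_1=i'$, $i_2=i''$, the same reduction from $\alpha\ge\overline{\alpha}$ to $G_{\overline{\alpha}}$, and a Farkas-type sum $\Phi$ of 3-path and deviation residuals with multipliers polynomial in $\overline{\alpha}$. The gap is that for this lemma the certificate \emph{is} the proof, and you do not supply it. You give no polynomial, no list of residuals, no multipliers, and no check of the cancellations or of positivity. ``Run the LP of \Cref{fig:LP-distance-search-fast} and read off the dual'' is not an argument. Such a dual certificate exists exactly when that LP is infeasible at $\alpha\approx 2.0567$ for this membership pattern, and that is the statement itself. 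As written, nothing in your proposal rules out that this pattern forces only a weaker bound.

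Concretely, the paper takes $\overline{\alpha}\approx 2.05664$ to be the root greater than $2$ of $q(z)=z^9-3z^8+3z^7-z^6-3z^5+3z^4-4z^3+2z^2-4z+2$. It uses eight 3-path residuals, among them $d(i,c_4)+d(i_4,c_4)+d(i_4,c_5)-d(i,c_5)$, $d(i',c_5)+d(i_5,c_5)+d(i_5,c_0)-d(i',c_0)$, two linking $i''$ and $i$ through $c_3$, and one linking $i_4$ and $i''$ through $c_5$; none links $i'$ and $i''$. It also uses fourteen deviation residuals, including those for $i'\in S_5$ and $i''\in S_5$. The multipliers are built from auxiliary polynomials $P,R,S,Z,W$ that are positive for $\overline{\alpha}>2$. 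Every coefficient then cancels identically except one, giving $\Phi=-(\overline{\alpha}-1)(\overline{\alpha}+1)q(\overline{\alpha})\,d(i',c_0)$.

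Separately, your sanity check is inaccurate. \Cref{lem:cases-full}~(\ref{item:full-2-jump-1}) does apply, since $i_0=i\in S_3$, $i_2=i''\in S_5$ and $i_3=i\in S_0$ are all jumps of exactly three. There is no $1/\alpha$ collapse from $i$, because $i$ is not known to lie in three consecutive sets. Chaining these bounds with \Cref{lem:deviation-step} yields only about $2.157$, which confirms that the exact certificate is indispensable.
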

\begin{proof}
	Take $i_0 = i_3 = i$, $i_1 = i'$, and $i_2 = i''$.

	Let \(\overline{\alpha}\) be the unique real root \(>2\) of
	\[
	q(z)=z^9-3z^8+3z^7-z^6-3z^5+3z^4-4z^3+2z^2-4z+2.
	\]
	Numerically, $\overline{\alpha}\approx 2.05664$.

	Suppose, for contradiction, that the $G_\alpha$ contains a directed cycle of length 6 for some \(\alpha\ge \overline{\alpha}\). Then $G_{\overline{\alpha}}$ must also contain the cycle.

	Define the following \(3\)-path residuals:
	\begin{align*}
		T_1&=d(i,c_4)+d(i_4,c_4)+d(i_4,c_5)-d(i,c_5),\\
		T_2&=d(i',c_5)+d(i_5,c_5)+d(i_5,c_0)-d(i',c_0),\\
		T_3&=d(i',c_1)+d(i,c_1)+d(i,c_4)-d(i',c_4),\\
		T_4&=d(i'',c_3)+d(i,c_3)+d(i,c_1)-d(i'',c_1),\\
		T_5&=d(i'',c_3)+d(i,c_3)+d(i,c_4)-d(i'',c_4),\\
		T_6&=d(i,c_1)+d(i',c_1)+d(i',c_2)-d(i,c_2),\\
		T_7&=d(i_4,c_5)+d(i'',c_5)+d(i'',c_3)-d(i_4,c_3),\\
		T_8&=d(i_5,c_0)+d(i,c_0)+d(i,c_4)-d(i_5,c_4).
	\end{align*}
	Then each \(T_j\ge 0\) by the \(3\)-path inequalities.

	Now define the deviation residuals:
	\begin{align*}
		\Delta_1&=d(i,c_5)-\overline{\alpha}d(i,c_0),\\
		\Delta_2&=d(i,c_0)-\overline{\alpha}d(i,c_1),\\
		\Delta_3&=d(i',c_0)-\overline{\alpha}d(i',c_1),\\
		\Delta_4&=d(i',c_1)-\overline{\alpha}d(i',c_2),\\
		\Delta_5&=d(i',c_4)-\overline{\alpha}d(i',c_5),\\
		\Delta_6&=d(i'',c_1)-\overline{\alpha}d(i'',c_2),\\
		\Delta_7&=d(i'',c_2)-\overline{\alpha}d(i'',c_3),\\
		\Delta_8&=d(i'',c_4)-\overline{\alpha}d(i'',c_5),\\
		\Delta_9&=d(i,c_2)-\overline{\alpha}d(i,c_3),\\
		\Delta_{10}&=d(i,c_3)-\overline{\alpha}d(i,c_4),\\
		\Delta_{11}&=d(i_4,c_3)-\overline{\alpha}d(i_4,c_4),\\
		\Delta_{12}&=d(i_4,c_4)-\overline{\alpha}d(i_4,c_5),\\
		\Delta_{13}&=d(i_5,c_5)-\overline{\alpha}d(i_5,c_0),\\
		\Delta_{14}&=d(i_5,c_4)-\overline{\alpha}d(i_5,c_5).
	\end{align*}
	Then each \(\Delta_j>0\) by the strict
	\(\overline{\alpha}\)-deviation inequalities.

	Now define
	\begin{align*}
		P&=\overline{\alpha}^4+\overline{\alpha}^3-\overline{\alpha}^2+2\overline{\alpha}-1, \\
		R&=\overline{\alpha}^6-3\overline{\alpha}^5
		+4\overline{\alpha}^4-4\overline{\alpha}^3
		+3\overline{\alpha}^2-3\overline{\alpha}+1,\\
		S&=2\overline{\alpha}^5-2\overline{\alpha}^4
		+2\overline{\alpha}^3+\overline{\alpha}^2
		+3\overline{\alpha}-2,\\
		Z&=\overline{\alpha}^4-\overline{\alpha}^3+\overline{\alpha}^2
		+4\overline{\alpha}-2,\\
		W&=2\overline{\alpha}^6-4\overline{\alpha}^5
		+3\overline{\alpha}^4-2\overline{\alpha}^3
		+3\overline{\alpha}^2-7\overline{\alpha}+3.
	\end{align*}

	Then consider the weighted sum
	\begin{align*}
		\Phi={}&
		\overline{\alpha}^3(\overline{\alpha}-1)^2P(T_1+\Delta_1)\\
		&+\overline{\alpha}^3(\overline{\alpha}+1)(\overline{\alpha}-1)RT_2\\
		&+\overline{\alpha}^2(\overline{\alpha}+1)(\overline{\alpha}-1)R(T_3+\Delta_5)\\
		&+\overline{\alpha}^2P(T_4+\Delta_6)\\
		&+\overline{\alpha}^2(\overline{\alpha}-1)P(T_5+\Delta_8)\\
		&+\overline{\alpha}^2(\overline{\alpha}-1)S(T_6+\Delta_9)\\
		&+\overline{\alpha}^3(\overline{\alpha}-1)P(T_7+\Delta_{11}+\Delta_{12})\\
		&+\overline{\alpha}^3(\overline{\alpha}+1)R(T_8+\Delta_{13}+\Delta_{14})\\
		&+\overline{\alpha}^3(R+Z)\Delta_2\\
		&+(\overline{\alpha}-1)(\overline{\alpha}+1)(\overline{\alpha}R+S)\Delta_3\\
		&+\overline{\alpha}(\overline{\alpha}-1)S\Delta_4\\
		&+\overline{\alpha}^3P\Delta_7\\
		&+\overline{\alpha}^3W\Delta_{10}.
	\end{align*}

	We claim that all coefficients appearing in \(\Phi\) are positive. Clearly
	\(\overline{\alpha}>2\). Then we have \(P,R,S,Z,W>0\), and hence every coefficient in \(\Phi\) is positive.
	Since each \(T_j\ge 0\) and each \(\Delta_j>0\), it follows that
	\[
	\Phi>0.
	\]

	On the other hand, expanding \(\Phi\) as a linear combination of the variables
	\(d(\cdot,\cdot)\) gives
	\[
	\Phi
	=
	-(\overline{\alpha}-1)(\overline{\alpha}+1)
	q(\overline{\alpha})d(i',c_0).
	\]
	Since \(q(\overline{\alpha})=0\), this implies
	\[
	\Phi=0,
	\]
	contradicting \(\Phi>0\).
	Hence $G_\alpha$ does not contain a 6-cycle for every
	\(\alpha\ge \overline{\alpha}\).
\end{proof}

\begin{lemma}\label{lem:0134-125-245}
	Let $G_\alpha$ contain a directed cycle $c_0c_1\ldots c_5c_0$ of length 6.
	If there exist $i \in S_0 \cap S_1 \cap S_3 \cap S_4$, $i' \in S_1 \cap S_2 \cap S_5$, and $i'' \in S_2 \cap S_4 \cap S_5$, then $\alpha < \alpha_6$.
\end{lemma}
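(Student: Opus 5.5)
We follow the same Farkas-certificate pattern as \Cref{lem:0124,lem:0134-025,lem:0134-125-235}. Suppose for contradiction that $G_\alpha$ contains the 6-cycle for some $\alpha \ge \alpha_6$; then $G_{\alpha_6}$ contains it as well, so we fix $\alpha=\alpha_6$. We choose the overlap agents $i_0=i_3=i$ (since $i\in S_0\cap S_1$ and $i\in S_3\cap S_4$) and $i_1=i'$ (since $i'\in S_1\cap S_2$). For $i_4$ we take $i''$, because $i''\in S_4\cap S_5$. The set $S_2\cap S_3$ must be handled by an extra overlap agent $i_2$, since neither $i'$ nor $i''$ lies in $S_3$. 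For $S_5\cap S_0$ we have no distinguished agent in the hypotheses, so we introduce $i_5\in S_5\cap S_0$. The variables are the distances $d(a,c_s)$ for $a\in\{i,i',i'',i_2,i_5\}$ and $s\in[6]_0$.

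The constraints come in two families. First, the strict deviation residuals $\Delta = d(a,c_{t-1})-\alpha\, d(a,c_t)>0$ for every $a$ and every $t$ with $a\in S_t$. This gives four residuals each for $i$ and for $i'$, three for $i''$ (namely $t=2,4,5$), two for $i_2$ (namely $t=2,3$), and two for $i_5$ (namely $t=5,0$). Second, the $3$-path residuals $T = d(a,c)+d(b,c)+d(b,c')-d(a,c')\ge 0$. We restrict these to the pairs that mirror the chain of \Cref{lem:deviation-step} and \Cref{lem:cases-full}: we bound a far distance of one agent through a near center shared with the next agent. Natural candidates are $(i,i')$ through $c_1$ and $c_4$, $(i',i'')$ through $c_2$ and $c_5$, $(i'',i)$ through $c_4$, $(i',i_2)$ and $(i_2,i)$ through $c_2$ and $c_3$, and $(i'',i_5)$ and $(i_5,i)$ through $c_5$ and $c_0$.

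We then seek nonnegative multipliers, polynomial in $\alpha$, such that $\Phi=\sum\lambda_jT_j+\sum\mu_j\Delta_j$ has every coefficient of every $d(\cdot,\cdot)$ vanish modulo $P_6(\alpha)$. Equivalently, we want $\Phi = -r(\alpha)P_6(\alpha)\,d(\cdot,\cdot)$ for a single remaining distance. That gives $\Phi=0$, while positivity of the multipliers gives $\Phi>0$, a contradiction.

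In practice we would find the multipliers by solving the LP of \Cref{fig:LP-distance-search-fast} at $\alpha$ slightly above $\alpha_6$ for this deviation structure. Its dual certificate tells us which $T_j,\Delta_j$ are active. We would then fit the dual values to polynomials in $\alpha$ of degree at most $4$ and rationalize them in $\mathbb{Q}[\alpha]/P_6(\alpha)$, as was done in \Cref{thm:T=5}. The identities to check are then a finite list of linear relations of the form ``sum of incoming multipliers $=\alpha\cdot$ outgoing multiplier'' at each distance variable. Each relation is verified by polynomial reduction modulo $P_6$.

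The main obstacle is the last step: pinning down an exact certificate whose multipliers are provably positive at $\alpha_6\approx 2.1019$. The LP dual is generally not unique, and a numerically fitted choice may give multipliers that vanish or turn negative. Tightness of the instance in \Cref{fig:T6-instance} gives some reassurance, because this case is exactly the one the extremal instance realizes, so a certificate with nonnegative weights must exist by Farkas' lemma. If the polynomial fit fails to be positive, we would first try adding the missing $3$-path residuals through the remaining shared centers, for example $(i,i'')$ through $c_5$ and $(i',i_5)$ through $c_0$, and re-solve. We would then check positivity of each multiplier on $[2,\infty)$ directly, as with $K$ and $L$ in \Cref{lem:0124}.
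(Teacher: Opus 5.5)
Your setup matches the paper's: the same five agents ($i_0=i_3=i$, $i_1=i'$, $i_4=i''$, plus $i_2\in S_2\cap S_3$ and $i_5\in S_5\cap S_0$), the same monotonicity reduction to $\alpha=\alpha_6$, and the same kind of nonnegative combination of $3$-path and deviation residuals that must vanish modulo $P_6$. The gap is that you never produce that combination, and for this lemma the combination \emph{is} the proof. Everything after ``we then seek nonnegative multipliers'' is a search procedure, and you yourself flag the certificate as the unresolved obstacle. Your argument that a certificate must exist is circular. The instance of \Cref{fig:T6-instance} (which, after shifting indices by one, does realize exactly this case) only shows that the case is feasible for every $\alpha<\alpha_6$, i.e.\ that the bound cannot be improved. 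By Farkas/Motzkin, a nonnegative certificate at $\alpha_6$ exists \emph{if and only if} the system is infeasible at $\alpha_6$, and that infeasibility is precisely the statement to be proved. An LP solve near $\alpha_6$ would only make it numerically plausible.

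For comparison, the paper's certificate uses all fourteen deviation residuals. Note that $i'$ contributes three of them ($t=1,2,5$), not four. It also uses eight $3$-path residuals:
\begin{itemize}
\item $(i',i)$ and $(i,i')$ at $c_1$, bounding $d(i',c_4)$ and $d(i,c_2)$;
\item $(i,i'')$ and $(i'',i)$ at $c_4$, bounding $d(i,c_5)$ and $d(i'',c_1)$;
\item $(i_2,i)$ at $c_3$ and $(i_5,i)$ at $c_0$;
\item two that your primary list lacks: $d(i',c_5)+d(i_5,c_5)+d(i_5,c_0)-d(i',c_0)$ and $d(i'',c_2)+d(i_2,c_2)+d(i_2,c_3)-d(i'',c_3)$.
\end{itemize}
In the paper's text the labels are swapped: its residuals write $i''$ for the $S_2\cap S_3$ agent and $i_4$ for the hypothesized agent in $S_2\cap S_4\cap S_5$.

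The multipliers are $2$, $2\alpha_6$, $2(\alpha_6-1)$, $2\alpha_6(\alpha_6-1)$, $M=\alpha_6^4-\alpha_6^3-\alpha_6^2-2$ and $N=(2\alpha_6+2-\alpha_6^2)(\alpha_6^2-\alpha_6+1)$. The cancellation rests on three identities: $M+N=2\alpha^2(\alpha-1)$, $(\alpha-1)M-2\alpha=P_6(\alpha)$, and $4+M+2\alpha(\alpha-1)-\alpha N=P_6(\alpha)$. The leftover is $P_6(\alpha_6)$ times a signed sum of six distances, not of a single one, so your ``equivalently'' is too restrictive. Finally, positivity is needed only at $\alpha_6$. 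Here $N>0$ holds only because $\alpha_6<1+\sqrt{3}$, so insisting on positivity on all of $[2,\infty)$, as you propose, would reject this valid certificate.
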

\begin{proof}
	Take $i_0=i_3 = i$, $i_1 = i'$, and $i_4 = i''$.

	Suppose, for contradiction, that the $G_\alpha$ contains a directed cycle of length 6 for some \(\alpha\ge \alpha_6\). Then $G_{\alpha_6}$ must also contain the cycle.

	Define the following \(3\)-path residuals:
	\begin{align*}
		T_1&=d(i',c_5)+d(i_5,c_5)+d(i_5,c_0)-d(i',c_0),\\
		T_2&=d(i',c_1)+d(i,c_1)+d(i,c_4)-d(i',c_4),\\
		T_3&=d(i'',c_3)+d(i,c_3)+d(i,c_1)-d(i'',c_1),\\
		T_4&=d(i,c_1)+d(i',c_1)+d(i',c_2)-d(i,c_2),\\
		T_5&=d(i,c_4)+d(i_4,c_4)+d(i_4,c_5)-d(i,c_5),\\
		T_6&=d(i_4,c_4)+d(i,c_4)+d(i,c_1)-d(i_4,c_1),\\
		T_7&=d(i_4,c_2)+d(i'',c_2)+d(i'',c_3)-d(i_4,c_3),\\
		T_8&=d(i_5,c_0)+d(i,c_0)+d(i,c_4)-d(i_5,c_4).
	\end{align*}
	Then each \(T_r\ge 0\) by the \(3\)-path inequalities.

	Now define the deviation residuals:
	\begin{align*}
		\Delta_1&=d(i,c_2)-{\alpha_6}d(i,c_3),\\
		\Delta_2&=d(i,c_3)-{\alpha_6}d(i,c_4),\\
		\Delta_3&=d(i',c_0)-{\alpha_6}d(i',c_1),\\
		\Delta_4&=d(i',c_1)-{\alpha_6}d(i',c_2),\\
		\Delta_5&=d(i',c_4)-{\alpha_6}d(i',c_5),\\
		\Delta_6&=d(i'',c_1)-{\alpha_6}d(i'',c_2),\\
		\Delta_7&=d(i'',c_2)-{\alpha_6}d(i'',c_3),\\
		\Delta_8&=d(i,c_5)-{\alpha_6}d(i,c_0),\\
		\Delta_9&=d(i,c_0)-{\alpha_6}d(i,c_1),\\
		\Delta_{10}&=d(i_4,c_1)-{\alpha_6}d(i_4,c_2),\\
		\Delta_{11}&=d(i_4,c_3)-{\alpha_6}d(i_4,c_4),\\
		\Delta_{12}&=d(i_4,c_4)-{\alpha_6}d(i_4,c_5),\\
		\Delta_{13}&=d(i_5,c_5)-{\alpha_6}d(i_5,c_0),\\
		\Delta_{14}&=d(i_5,c_4)-{\alpha_6}d(i_5,c_5).
	\end{align*}
	Then each \(\Delta_s>0\) by the strict
	\({\alpha_6}\)-deviation inequalities.

	Define
	\[
	M={\alpha_6}^4-{\alpha_6}^3-{\alpha_6}^2-2
	\]
	and
	\[
	N=(2{\alpha_6}+2-{\alpha_6}^2)
	({\alpha_6}^2-{\alpha_6}+1).
	\]

	Consider the weighted sum
	\begin{align*}
		\Phi={}&
		2{\alpha_6}T_1
		+2T_2
		+MT_3
		+2{\alpha_6}({\alpha_6}-1)T_4
		+2{\alpha_6}({\alpha_6}-1)T_5\\
		&+2T_6
		+2{\alpha_6}T_7
		+MT_8\\
		&+2{\alpha_6}({\alpha_6}-1)\Delta_1
		+N\Delta_2
		+2{\alpha_6}\Delta_3
		+2({\alpha_6}-1)\Delta_4
		+2\Delta_5\\
		&+M\Delta_6
		+M\Delta_7
		+2{\alpha_6}({\alpha_6}-1)\Delta_8
		+N\Delta_9
		+2\Delta_{10}\\
		&+2{\alpha_6}\Delta_{11}
		+2({\alpha_6}-1)\Delta_{12}
		+M\Delta_{13}
		+M\Delta_{14}.
	\end{align*}

	We claim that all coefficients in \(\Phi\) are positive. Clearly
	\({\alpha_6}>2\), which implies $M > 0$. Also \(2<{\alpha_6}<2.11<1+\sqrt{3}\), so
	\[
	2{\alpha_6}+2-{\alpha_6}^2>0,
	\]
	and clearly
	\[
	{\alpha_6}^2-{\alpha_6}+1>0.
	\]
	Thus \(N>0\). Therefore every coefficient in \(\Phi\) is positive.

	Since each \(T_r\ge 0\) and each \(\Delta_s>0\), it follows that
	\[
	\Phi>0.
	\]

	On the other hand, expanding \(\Phi\) as a linear combination of the variables
	\(d(\cdot,\cdot)\) gives
	\[
	\Phi
	=
	P_6({\alpha_6})
	\left(
	d(i,c_1)+d(i,c_4)
	-d(i'',c_2)-d(i'',c_3)
	-d(i_5,c_0)-d(i_5,c_5)
	\right).
	\]
	Since \(P_6({\alpha_6})=0\), this implies
	\[
	\Phi=0,
	\]
	contradicting \(\Phi>0\). Hence $G_\alpha$ does not contain a 6-cycle for every
	\(\alpha\ge \alpha_6 \).
\end{proof}

\section{Computational certificate for the 37-point instance}
\label{app:metric37-certificate}

We report the numerical certificate obtained for the 37-point
instance used in our construction. The accompanying repository\footnote{\url{https://github.com/evamichelle30/ProportionalClusteringLowerBounds}} contains the
code and input files needed to generate the instance and reproduce the
feasibility computation.

The instance consists of a finite (pseudo-)metric space $(N\cup C,d)$ with $C=\{c_0,c_1,\ldots,c_{36}\}$. We define a family of deviating sets $\mathcal{S}=\{S_0,S_1,\ldots,S_{36}\}$.
The computation searches for the largest feasible value of the parameter
$\alpha$. A binary search over $[1,2.42]$ returned the feasible value $\alpha = 2.15082963$.
At this value, the solver produced a nonnegative weight vector $w=(w_0,w_1,\ldots,w_{36})\in\mathbb{R}_{\ge 0}^{37}$
with $\sum_{i=0}^{36} w_i = 1$.
For every set $S_j$, the total weight assigned to agents contained in $S_j$
is at least $0.50000100>\frac{1}{2}$. Therefore, the vector $w$ is a feasible certificate for the reported value of $\alpha$.

\begin{table}[t]
	\centering
	\begin{tabular}{lr}
		\toprule
		Quantity & Value \\
		\midrule
		Number of center points & $37$ \\
		Number of single agents & $1000000$ \\
		Search interval for $\alpha$ & $[1,2.42]$ \\
		Feasible value of $\alpha$ & $2.15082963$ \\
		Minimum set coverage & $0.50000100$ \\
		\bottomrule
	\end{tabular}
	\caption{Summary of the computational certificate for the 37-point instance.}
	\label{tab:metric37-summary}
\end{table}

\Cref{tab:metric37-agents-by-weight} reports the feasible weight vector in
decreasing order of weight, together with the sets containing each agent. The
two largest weights are assigned to agents $35$ and $36$, which together carry
approximately $81.4\%$ of the total mass.

	\begin{table}[p]
		\centering
		\scriptsize
		\begin{tabular}{r r l}
			\toprule
			Agent $i$ & Weight $w_i$ & Sets containing agent $i$ \\
			\midrule
			35 & $0.47712300$ & $\{0,2,4,6,8,10,12,14,16,18,20,22,24,26,28,30,32,34,36\}$ \\
			36 & $0.33711700$ & $\{0,1,3,5,7,9,11,13,15,17,19,21,23,25,27,29,31,33,35\}$ \\
			32 & $0.01354200$ & $\{1,3,5,7,9,11,13,15,17,19,21,23,25,27,29,31,33,34,36\}$ \\
			30 & $0.01228800$ & $\{1,3,5,7,9,11,13,15,17,19,21,23,25,27,29,31,32,35\}$ \\
			28 & $0.01142800$ & $\{1,3,5,7,9,11,13,15,17,19,21,23,25,27,29,30,33,35\}$ \\
			26 & $0.01111400$ & $\{1,3,5,7,9,11,13,15,17,19,21,23,25,27,28,31,33,35\}$ \\
			24 & $0.01058800$ & $\{1,3,5,7,9,11,13,15,17,19,21,23,25,26,29,31,33,35\}$ \\
			22 & $0.00972800$ & $\{1,3,5,7,9,11,13,15,17,19,21,23,24,27,29,31,33,35\}$ \\
			20 & $0.00966800$ & $\{1,3,5,7,9,11,13,15,17,19,21,22,25,27,29,31,33,35\}$ \\
			18 & $0.00954600$ & $\{1,3,5,7,9,11,13,15,17,19,20,23,25,27,29,31,33,35\}$ \\
			16 & $0.00945800$ & $\{1,3,5,7,9,11,13,15,17,18,21,23,25,27,29,31,33,35\}$ \\
			14 & $0.00940200$ & $\{1,3,5,7,9,11,13,15,16,19,21,23,25,27,29,31,33,35\}$ \\
			12 & $0.00936800$ & $\{1,3,5,7,9,11,13,14,17,19,21,23,25,27,29,31,33,35\}$ \\
			10 & $0.00936200$ & $\{1,3,5,7,9,11,12,15,17,19,21,23,25,27,29,31,33,35\}$ \\
			8  & $0.00935600$ & $\{1,3,5,7,9,10,13,15,17,19,21,23,25,27,29,31,33,35\}$ \\
			6  & $0.00935000$ & $\{1,3,5,7,8,11,13,15,17,19,21,23,25,27,29,31,33,35\}$ \\
			4  & $0.00934400$ & $\{1,3,5,6,9,11,13,15,17,19,21,23,25,27,29,31,33,35\}$ \\
			2  & $0.00934000$ & $\{1,3,4,7,9,11,13,15,17,19,21,23,25,27,29,31,33,35\}$ \\
			34 & $0.00933600$ & $\{2,5,7,9,11,13,15,17,19,21,23,25,27,29,31,33,35,36\}$ \\
			33 & $0.00420800$ & $\{2,4,6,8,10,12,14,16,18,20,22,24,26,28,30,32,34,35\}$ \\
			31 & $0.00295400$ & $\{2,4,6,8,10,12,14,16,18,20,22,24,26,28,30,32,33\}$ \\
			29 & $0.00209400$ & $\{2,4,6,8,10,12,14,16,18,20,22,24,26,28,30,31,34\}$ \\
			27 & $0.00178000$ & $\{2,4,6,8,10,12,14,16,18,20,22,24,26,28,29,32,34\}$ \\
			25 & $0.00125400$ & $\{2,4,6,8,10,12,14,16,18,20,22,24,26,27,30,32,34\}$ \\
			23 & $0.00039400$ & $\{2,4,6,8,10,12,14,16,18,20,22,24,25,28,30,32\}$ \\
			21 & $0.00033400$ & $\{2,4,6,8,10,12,14,16,18,20,22,23,28,30\}$ \\
			19 & $0.00021200$ & $\{2,4,6,8,10,12,14,16,18,20,21,24,30\}$ \\
			17 & $0.00012400$ & $\{2,4,6,8,10,12,14,16,18,19,22,24\}$ \\
			15 & $0.00006800$ & $\{2,4,6,8,10,12,14,16,17,20,22,24\}$ \\
			13 & $0.00003400$ & $\{2,4,6,8,10,12,14,15,18,20,24\}$ \\
			11 & $0.00002800$ & $\{2,4,6,8,10,12,13,24\}$ \\
			9  & $0.00002200$ & $\{2,4,6,8,10,11,14,16,18\}$ \\
			7  & $0.00001600$ & $\{2,4,6,8,9,12,14,16\}$ \\
			5  & $0.00001000$ & $\{2,4,6,7,10,12,14,16,18\}$ \\
			3  & $0.00000600$ & $\{2,4,5,8,10,12,14,16\}$ \\
			0  & $0.00000200$ & $\{1,2\}$ \\
			1  & $0.00000200$ & $\{2,3,6\}$ \\
			\bottomrule
		\end{tabular}
		\caption{Agents ordered by decreasing weight in the feasible certificate for the 37-point instance.}
		\label{tab:metric37-agents-by-weight}
	\end{table}

\end{document}